\def\prob#1#2#3{\goodbreak\begin{list}{}{\labelwidth\z@ \itemindent-\leftmargin
                        \itemsep\z@  \topsep6\p@\@plus6\p@
                        \let\makelabel\descriptionlabel}
                \item[\it Name]#1
                \item[\it Instance]#2
                \item[\it Output]#3
                \end{list}}
\newcommand{\vecxny}{\vecx\wedge\vecy}
\newcommand{\vecxuy}{\vecx\vee\vecy}
\def\extR{\overline{\mathbb{R}}_{\geq 0}}
\def\posQ{\mathbb{Q}_{\geq 0}}
\def\bddQ{[0,1]_{\mathbb{Q}}}
\def\mincost{\mathrm{minCost}}
\def\dom{D}
\def\MM{\mathbf{M}}
\def\Zivny{{\v{Z}}ivn{\'y}}
\def\toValsymb{\ell}
\def\toVal#1{\toValsymb(#1)}
\def\toRel#1{\mathrm{Feas}(#1)}
\def\nfuns{\mathcal{F}}
\def\nfinfuns{\mathcal{G}}
\def\nhfuns{\mathcal{H}}
\def\vfuns{\Phi}
\def\rels{\Gamma}
\def\vclone#1{\langle#1\rangle_V}
\def\fclone#1{\langle#1\rangle_\#}
\def\nset{\mathbb{N}}
\newcommand\func{\operatorname{Func}}
\def\funs#1{\func(\dom,#1)}
\def\funsk#1{\func_k(\dom,#1)}
\def\IMP{\mathrm{IMP}}
\def\imp{\mathrm{imp}}
\def\EQ{\mathrm{EQ}}
\def\eq{\mathrm{eq}}
\newcommand\NEQ{\mathrm{NEQ}}
\newcommand\uns[1]{\mathcal{U}_{#1}}
\let\oldphi\phi
\renewcommand{\phi}{\varphi}
\renewcommand{\epsilon}{\varepsilon}
\newcommand{\veca}{\mathbf{a}}
\newcommand{\vecb}{\mathbf{b}}
\newcommand{\vecu}{\mathbf{u}}
\newcommand{\vecv}{\mathbf{v}}
\newcommand{\vecx}{\mathbf{x}}
\newcommand{\vecy}{\mathbf{y}}
\newcommand{\vecz}{\mathbf{z}}
\newcommand{\transposesymb}{\mathrm{t}}
\newcommand{\transpose}[1]{#1^{\transposesymb}}
\newcommand{\FP}{\ensuremath{\mathrm{FP}}}
\newcommand{\NP}{\ensuremath{\mathrm{NP}}}
\newcommand{\numP}{\ensuremath{\mathrm{\#P}}}
\newcommand{\LSM}{\ensuremath{\mathsf{LSM}}}
\newcommand{\nBIS}{\ensuremath{\mathrm{\#BIS}}}
\newcommand{\nSAT}{\ensuremath{\mathrm{\#SAT}}}
\newcommand{\nCSP}{\ensuremath{\mathrm{\#CSP}}}
\newcommand{\VCSP}{\ensuremath{\mathrm{VCSP}}}
\newcommand{\RHPi}{\ensuremath{\mathrm{\#RH}\Pi_1}}
\newcommand{\APred}{\leq_\mathrm{AP}}
\newtheorem{theorem}{Theorem}
\newtheorem{lemma}[theorem]{Lemma}
\newtheorem{corollary}[theorem]{Corollary}
\theoremstyle{definition}
\newtheorem{observation}[theorem]{Observation}
\newtheorem{remark}[theorem]{Remark} 
\newtheorem{definition}[theorem]{Definition}
\title {The complexity of approximating \\conservative counting CSPs\thanks{
This work was partially supported by the EPSRC Research Grant ``Computational Counting''.
}}
\author{ Xi Chen\thanks{
        Department of Computer Science, Columbia University,
        450 Computer Science Building,
        1214 Amsterdam Avenue, Mailcode:~0401, New York,
        NY~10027-7003, USA.  Supported by NSF Grant CCF-1139915 and
        start-up funds from Columbia University.}
    \and Martin Dyer\thanks{
        School of Computing, University of Leeds, Leeds, LS2~9JT, UK.}
    \and Leslie Ann Goldberg\thanks{
  Department of Computer Science, University of Oxford, Wolfson Building, Parks Road, Oxford, OX1~3QD, UK.  
 The research leading to these results has received funding from 
the European Research Council under the European Union's Seventh Framework Programme (FP7/2007-2013) ERC grant agreement no.\ 334828. The paper 
reflects only the authors' views and not the views of the ERC or the European Commission. The European Union is not liable for any use that may be made of the information contained therein.}
    \and Mark Jerrum\thanks{
        School of Mathematical Sciences, Queen Mary, University of
        London, Mile End Road, London, E1~4NS, UK.}
    \and \linebreak[4] Pinyan Lu\thanks{
        Microsoft Research Asia, Microsoft Shanghai Technology Park,
        No~999, Zixing Road, Minhang District, Shanghai, 200241, China.}
    \and Colin McQuillan\thanks{Department of Computer Science, University of Liverpool,
        Liverpool, L69~3BX, UK.}
    \and David Richerby\footnotemark[4]}
\begin{document}

\maketitle

\begin{abstract} 
    We study the complexity of approximately solving the weighted
    counting constraint satisfaction problem $\nCSP(\nfuns)$.  In the
    conservative case, where $\nfuns$ contains all unary functions,
    there is a classification known for the case in which the
    domain of functions in $\nfuns$ is Boolean.  In this paper, we
    give a classification for the more general problem where functions
    in~$\nfuns$ have an arbitrary finite domain.  We define the notions of
    \emph{weak log-modularity} and \emph{weak log-supermodularity}.
    We show that if $\nfuns$ is weakly log-modular, then
    $\nCSP(\nfuns)$ is in \FP{}.  Otherwise, it is at least as difficult to
    approximate as \nBIS{}, the problem of counting independent sets
    in bipartite graphs. \nBIS{} is complete with respect to
    approximation-preserving reductions for a logically defined
    complexity class \RHPi{}, and is believed to be intractable.  We
    further sub-divide the \nBIS{}-hard case.  If $\nfuns$ is weakly
    log-supermodular, then we show that $\nCSP(\nfuns)$ is as easy as
    a Boolean log-supermodular weighted \nCSP{}.  Otherwise, we
    show that it is \NP{}-hard to approximate.  Finally, we give a
    full trichotomy for the arity-2 case, where $\nCSP(\nfuns)$ is in
    \FP{}, or is \nBIS{}-equivalent, or is equivalent in difficulty to
    $\nSAT$, the problem of approximately counting the satisfying
    assignments of a Boolean formula in conjunctive normal form.
    We also discuss the algorithmic aspects of our classification.
\end{abstract}

\section{Introduction}

In the weighted counting constraint satisfaction problem, there is a
fixed finite domain~$D$ and a fixed finite ``weighted constraint language''
$\nfuns$, which is a set of functions.  Every function $F\in \nfuns$
maps a tuple of domain elements to a value called a ``weight''. In the
computational problem $\nCSP(\nfuns)$, an instance consists of a set
$V=\{v_1,\ldots,v_n\}$ of variables and a set of ``weighted
constraints''. Each weighted constraint applies a function
from~$\nfuns$ to an appropriate-sized tuple of variables.

For example, with the Boolean domain $D=\{0,1\}$ we could consider the
situation in which $\nfuns$ consists of the single binary (arity-2) function~$F$
defined by $F(0,0)=F(0,1)=F(1,0)=1$ and $F(1,1)=2$.  We can construct
an instance with variables $v_1$, $v_2$ and $v_3$ and weighted
constraints $F(v_1,v_2)$ and $F(v_2,v_3)$.  If $\vecx$ is an
assignment of domain elements to the variables then the total weight
associated with~$\vecx$ is the product of the weighted constraints,
evaluated at~$\vecx$.
 
For example, the assignment that maps $v_1$, $v_2$ and $v_3$ all
to~$0$ has weight $F(0,0)F(0,0)=1$ but the assignment that maps all of
them to~$1$ has weight $F(1,1)F(1,1)=4$. Two assignments have weight
$F(1,1)F(1,0)=F(0,1)F(1,1)=2$ and the other four assignments each have
weight~$1$.  The computational problem is to evaluate the sum of the
weights of the assignments.  For this instance, the solution is~$13$.
 
There has been a lot of work on classifying the computational
difficulty of exactly solving $\nCSP(\nfuns)$. For some weighted
constraint languages
$\nfuns$, this is a computationally easy task, while for others, it is
intractable.  We will give a brief summary of what is known. For more
details, see the surveys of Chen~\cite{Chensurvey} and
Lu~\cite{Lusurvey}.

First, suppose that the domain $D$ is Boolean (that is, suppose that
$D=\{0,1\}$).  For this case, Creignou and Hermann~\cite{CH} gave a
dichotomy for the case in which weights are also in $\{0,1\}$. In this
case, they showed that $\nCSP(\nfuns)$ is in \FP{} (the set of
polynomial-time computable function problems) if all of the functions
in $\nfuns$ are affine, and that otherwise, it is
\numP{}-complete. Dyer, Goldberg, and Jerrum~\cite{wbool} extended
this to the case in which weights are non-negative rationals.  For
this case, they showed that the problem is solvable in polynomial time
if (1) every function in $\nfuns$ is expressible as a product of unary
functions, equalities and disequalities, or (2) every function
in $\nfuns$ is a constant multiple of an affine function.  Otherwise,
they showed the problem is complete in the complexity
class~$\FP^{\numP}\!$.  We will not deal with negative weights in this
paper. However, it is worth mentioning that these results have been
extended to the case in which weights can be
negative~\cite{negweightbool}, to the case in which they can be
complex~\cite{complexbool}, and to the related class of Holant$^*$
problems \cite{holantstar}. Other dichotomies are also known for
Holant problems (see \cite{Lusurvey}).

Next, consider an arbitrary finite domain~$D$.  For the case in which
weights are in $\{0,1\}$, Bulatov's breakthrough result
\cite{BulatovDichotomy} showed that $\nCSP(\nfuns)$ is always either
in \FP{} or \numP{}-hard.  A simplified version was given by Dyer and
Richerby 
\cite{DReffective},
who introduced a new criterion called
``strong balance''.  The dichotomy was extended to include
non-negative rational weights by Bulatov, Dyer, Goldberg,  Jalsenius, Jerrum and Richerby~\cite{BDGJJR} and then
to include all non-negative algebraic weights by Cai, Chen and Lu
\cite{NonNegExact, NonNegExactconf}.  Cai, Chen and Lu gave a
 generalised  notion of
balance that we will use in this work.  Finally, Cai and Chen
\cite{CCcomplex} extended the dichotomy to include all algebraic
complex weights.  The criterion for the unweighted \nCSP{} dichotomy
is known to be decidable~\cite{DReffective} and this carries through
to non-negative rational weights and non-negative algebraic
weights~\cite{NonNegExact}.  Decidability is currently open for the
complex case.

Much less is known about the complexity of approximately solving
$\nCSP(\nfuns)$.  Before describing what is known, it helps to say a
little about the complexity of approximate counting within \numP{}.
Dyer, Goldberg, Greenhill and Jerrum~\cite{DGGJ} identified three
complexity classes for approximation problems within \numP{}. These
are:
\begin{enumerate}
\item problems that have a fully polynomial randomised
    approximation scheme (FPRAS),
\item a logically defined complexity class called \RHPi{}, and
\item a class of problems for which approximation is \NP{}-hard. 
\end{enumerate} 
A typical complete problem in the class \RHPi{} is \nBIS{}, the
problem of approximately counting independent sets in bipartite
graphs. It is known that either all complete problems in \RHPi{} have an
FPRAS, or none do; it is conjectured that none  do~\cite{GJPotts}.  A
typical complete problem in the third class is $\nSAT$, the problem of
counting satisfying assignments of a Boolean formula in conjunctive
normal form.  Another concept that turns out to be important in the
classification of approximate counting CSPs is 
log-supermodularity~\cite{LSM}.  
A function with Boolean domain is
log-supermodular if its logarithm is supermodular; we give a formal
definition later.
 
Given those rough ideas, we can now describe what is known about the
complexity of approximately solving $\nCSP(\nfuns)$ when the domain,
$D$, is Boolean.  For the case in which weights are in $\{0,1\}$,
Dyer, Goldberg and Jerrum gave a trichotomy~\cite{DGJ}.  If every
function in $\nfuns$ is affine, then $\nCSP(\nfuns)$ is in \FP{}.
Otherwise, it is as hard to approximate as \nBIS{}.  The hard
approximation problems are divided into \nBIS{}-equivalent cases
(which arise when the functions in~$\nfuns$ can be expressed using
``implies'' and fixing the values of certain variables) and the
remaining cases, which are all shown to be $\nSAT$-equivalent.

In the more general case where the domain~$D$ is still Boolean, but
the weights can be arbitrary non-negative values, no complete
classification is known.  However, Bulatov, Dyer, Goldberg, Jerrum and
McQuillan~\cite{LSM}
gave a classification for the
so-called ``conservative'' case, in which $\nfuns$ contains all unary
functions.  Their result is reproduced as Lemma~\ref{lem:lsm_mainthm}
below.  Here is an informal description.
\begin{itemize}
\item If every function in $\nfuns$ can be expressed in a certain
    simple way using disequality and unary functions, then, for any
    finite $\nfinfuns\subset\nfuns$, $\nCSP(\nfinfuns)$ has an FPRAS.
\item Otherwise,
    \begin{itemize}
    \item there is a finite $\nfinfuns\subset\nfuns$ such that
        $\nCSP(\nfinfuns)$ is at least as hard to approximate as
        \nBIS{} and,
    \item if $\nfuns$ contains any function that is not
        log-supermodular, then there is a finite $\nfinfuns\subset
        \nfuns$ such that $\nCSP(\nfinfuns)$ is at least as hard to
        approximate as \nSAT{}.
    \end{itemize}
\end{itemize}
Yamakami~\cite{Tomo} has also given an
approximation dichotomy for the case in which even more unary
functions (including those with negative weights) are assumed to be
part of $\nfuns$. The negative weights introduce cancellation, making
more weighted constraint languages $\nfuns$ intractable. In this
paper, we stick to the
non-negative case, in which more subtle complexity classifications
arise.
 
Prior to this paper, there were no known complexity classifications
for approximately solving $\nCSP(\nfuns)$ for the case in which the
domain~$D$ is not Boolean.  Thus, this is the problem that we address
in this paper.  Our main result (Theorem~\ref{thm:main}, below) is a
complexity classification for the conservative case (where all unary
weights are contained in~$\nfuns$).  Here is an informal description
of the result.
\begin{itemize}
\item If $\nfuns$ is ``weakly log-modular'' (a concept we define
    below) then, for any finite $\nfinfuns \subset \nfuns$,
    $\nCSP(\nfinfuns)$ is in \FP{}.
\item Otherwise, there is a finite $\nfinfuns \subset \nfuns$ such that
    $\nCSP(\nfinfuns)$ is at least as hard to approximate as \nBIS{}.
    Furthermore,
    \begin{itemize}
    \item if $\nfuns$ is ``weakly log-supermodular'' (again, defined
        below) then, for any finite $\nfinfuns \subset \nfuns$, there
        is a finite set $\nfinfuns'$ of log-supermodular functions on
        the Boolean domain such that $\nCSP(\nfinfuns)$ is as easy to
        approximate as $\nCSP(\nfinfuns')$;
    \item otherwise, there is a finite $\nfinfuns \subset \nfuns$ such
        that $\nCSP(\nfinfuns)$ is as hard to approximate as \nSAT{}.
    \end{itemize}
\end{itemize}

Informally, $\nfuns$ is weakly log-supermodular if, for every binary
function~$F$ that can be expressed using functions in~$\nfuns$, every
projection of~$F$ onto two domain elements is log-supermodular (see
Definition~\ref{def:weaklylogsupermod}).  Thus, in some sense, our
result shows that all the difficulty of approximating conservative
weighted constraint satisfaction
problems arises in the Boolean case.  Even when the domain~$D$ is
larger, approximations which are $\nSAT$-equivalent are
$\nSAT$-equivalent precisely because of intractable Boolean problems
which arise as sub-problems.

In addition to the complexity classifications described above (\FP{}
versus \nBIS{}-hard and ``as easy as a Boolean log-supermodular
problem'' versus $\nSAT$-equivalent) we also give a full trichotomy
for the binary case (i.e., where all functions in $\nfuns$ have arity
1 or~2).
\begin{itemize}
\item If $\nfuns$ is weakly log-modular then, for any finite
    $\nfinfuns \subset \nfuns$, $\nCSP(\nfinfuns)$ is in \FP{}.
\item Otherwise, if $\nfuns$ is weakly log-supermodular, then
    \begin{itemize}
    \item for every finite $\nfinfuns \subset \nfuns$,
        $\nCSP(\nfinfuns)$ is as easy to approximate as \nBIS{} and
    \item there is a finite $\nfinfuns \subset \nfuns$ such that
        $\nCSP(\nfinfuns)$ is as hard to approximate as \nBIS{}.
    \end{itemize}
\item Otherwise, there is a finite $\nfinfuns \subset \nfuns$ such
    that $\nCSP(\nfinfuns)$ is as hard to approximate as \nSAT{}.
\end{itemize}

The final section of our paper discusses the algorithmic aspects of
our classification for the case in which $\nfuns$ is the union of a
finite, weighted constraint language $\nhfuns$ and the set of all
unary functions.  In particular, we give an algorithm that takes
$\nhfuns$ as input and correctly makes one of the following
deductions:
\begin{enumerate}
\item $\nCSP(\nfinfuns)$ is in \FP{} for every finite
    $\nfinfuns\subset \nfuns$;
\item $\nCSP(\nfinfuns)$ is \LSM{}-easy for every finite $\nfinfuns
    \subset \nfuns$ and \nBIS{}-hard for some such~$\nfinfuns$;\label{deduc}
\item $\nCSP(\nfinfuns)$ is \nBIS{}-easy for every finite
    $\nfinfuns \subset \nfuns$ and \nBIS{}-equivalent for some
    such~$\nfinfuns$;
\item $\nCSP(\nfinfuns)$ is \nSAT{}-easy for all finite $\nfinfuns
    \subset \nfuns$ and \nSAT{}-equivalent for some such~$\nfinfuns$.
\end{enumerate}
Further, if every function in $\nhfuns$ has arity at most~2, the
output is not deduction~\ref{deduc}.
The term ``\LSM{}-easy'' in deduction~\ref{deduc} will be formally defined later.
Informally, it means ``as easy as a Boolean log-supermodular weighted counting CSP''.

\subsection{Previous work} 
 
The first contribution of our paper is to show that, if $\nfuns$ is
weakly log-modular then, for any finite $\nfinfuns \subset \nfuns$,
$\nCSP(\nfinfuns)$ is in \FP{}.  Otherwise, there is a finite
$\nfinfuns \subset \nfuns$ for which $\nCSP(\nfinfuns)$ is at least as
hard to approximate as \nBIS{}.  We also show that, if $\nfuns$ is not
weakly log-supermodular, then there is a finite $\nfinfuns \subset
\nfuns$, such that $\nCSP(\nfinfuns)$ is \nSAT{}-hard.  This work is
presented in Sections \ref{sec:hardness} and~\ref{sec:balance} below
and builds on two strands of previous work.
\begin{itemize}
\item The hardness results build on the approximation
    classification in the Boolean case~\cite{LSM} and, in particular,
    on the key role played by log-supermodular functions.
\item The easiness results build on the classification of the exact
    evaluation of $\nCSP(\nfuns)$ in the general
    case~\cite{NonNegExact}, and in particular on the key role played
    by ``balance''.
\end{itemize}

The second contribution of our paper is to show that, if $\nfuns$ is
weakly log-supermodular, then, for any finite $\nfinfuns \subset
\nfuns$, there is a finite set $\nfinfuns'$ of log-supermodular
functions on the Boolean domain such that $\nCSP(\nfinfuns)$ is as
easy to approximate as $\nCSP(\nfinfuns')$.  This builds on
three key studies of the complexity of optimisation CSPs by
Takhanov~\cite{TakhanovFull, Takhanovconf}, Cohen, Cooper and Jeavons~\cite{CCJ} and
Kolmogorov and \Zivny~\cite{KZ}.  In all three cases, we use
their arguments and ideas, and not merely their results. Thus, we
delve into these three papers in some detail.

Our final contribution is the trichotomy for the binary case.  This
relies additionally on work of  
Cohen, Cooper, Jeavons and Krokhin~\cite{new} generalising
Rudolf and Woeginger's decomposition~\cite{RudolfWoeginger} of  Monge matrices.
The Monge property can be viewed
as a  generalisation of binary submodular functions to a larger domain, and
the decomposition shows how to decompose such functions 
in a useful manner.  

\subsection{Preliminaries and statement of results}\label{sec:prelims}

Let $D$ be a finite domain with $|D| \geq 2$.  It will be convenient
to refer to the set $\funsk R$ of all functions $D^k\to R$ for some
codomain $R$, and the set $\funs R=\bigcup_{k=0}^\infty \funsk R$.
Let $\EQ$ be the binary equality function defined by $\EQ(x,x)=1$ and
$\EQ(x,y)=0$ for $x\neq y$; let $\NEQ(x,y) = 1 - \EQ(x,y)$.

We use the following definitions from~\cite{LSM}.  Let $\nfuns$ be a
subset of $\funs R$.  Let $V=\{v_1,\dots,v_n\}$ be a set of variables.
An atomic formula has the form $\phi=G(v_{i_1},\dots,v_{i_a})$ where
$G\in\nfuns$, $a=a(G)$ is the arity of~$G$, and $(v_{i_1}, v_{i_2},
\dots, v_{i_a})\in V^a$ is called a ``scope''.  Note that repeated
variables are allowed.  The function $F_\phi\colon D^n \to R$
represented by the atomic formula $\phi=G(v_{i_1},\dots,v_{i_a})$ is
just $F_\phi(\vecx)=G(\vecx(v_{i_1}),\dots,\vecx(v_{i_a}))$, where
$\vecx\colon\{v_1,\dots,v_n\}\to\dom$ is an assignment to the
variables.  To simplify the notation, we write $x_j = \vecx(v_j)$ so
\begin{equation*}
    F_\phi(\vecx)= G(x_{i_1},\dots,x_{i_a}).
\end{equation*}

A pps-formula (``primitive product summation formula'') is a finite summation
of a finite product of atomic formulas.  A pps-formula~$\psi$ over~$\nfuns$
in variables $V'=\{v_1,\dots,v_{n+k}\}$ has the form
\begin{equation*}
    \psi=\sum_{v_{n+1},\dots,v_{n+k}}\,\prod_{j=1}^m\phi_j,
\end{equation*}
where $\phi_j$ are all atomic formulas over~$\nfuns$ in the
variables~$V'\!$.  (The variables $V$ are free, and the others,
$V'\setminus V$, are bound.)  The formula~$\psi$ specifies a function
$F_\psi\colon D^n\to R$ in the following way:
\begin{equation*} 
    F_\psi(\vecx)=\sum_{\vecy\in\dom^k}\prod_{j=1}^m F_{\phi_j}(\vecx,\vecy),
\end{equation*}
where $\vecx$ and $\vecy$ are assignments $\vecx\colon
\{v_1,\dots,v_n\} \to \dom$ and $\vecy\colon \{v_{n+1}, \dots,
\allowbreak v_{n+k}\} \to \dom$.  The \emph{functional
  clone~$\fclone\nfuns$ generated by~$\nfuns$} is the set of all
functions that can be represented by a pps-formula over
$\nfuns\cup\{\EQ\}$.  Crucially,
$\fclone{\fclone{\nfuns}}=\fclone{\nfuns}$ 
\cite[Lemma~2.1]{LSM}; 
we
will rely on this transitivity property implicitly.

\begin{definition}  
    A \emph{weighted constraint language} $\nfuns$ is a subset of
    $\funs{\posQ}$.  Functions in $\nfuns$ are called \emph{weight
      functions}.
\end{definition}

In Section \ref{sec:vcsp} we will introduce \emph{valued constraint
  languages} and \emph{cost functions}, which pertain to optimisation
CSPs.  It is important to distinguish these from the weighted version,
which is used for counting.
  
\begin{definition}
    A weighted constraint language $\nfuns$ is \emph{conservative} if
    $\uns{D}\subseteq\nfuns$, where $\uns{D}=\func_1(D,\posQ)$.
\end{definition} 

\begin{definition} 
    A weighted constraint language $\nfuns$ is \emph{weakly
      log-modular} if, for all binary functions $F\in\fclone{\nfuns}$
    and elements $a,b\in D$,
    \begin{align}
    \label{eq:weaklm}
        &F(a,a)F(b,b)=F(a,b)F(b,a),\text{ or }\nonumber\\
        &F(a,a)=F(b,b)=0,\text{ or }\nonumber\\
        &F(a,b)=F(b,a)=0.
    \end{align}
\end{definition}

\begin{definition}\label{def:weaklylogsupermod}
    $\nfuns$ is \emph{weakly log-supermodular} if, for all binary
    functions $F\in\fclone\nfuns$ and elements $a,b\in D$,
    \begin{equation}
    \label{eq:weaklsm}
        F(a,a)F(b,b)\geq F(a,b)F(b,a)\quad\text{or}\quad F(a,a)=F(b,b)=0.
    \end{equation}
\end{definition}

\begin{definition} 
\label{def:fnLSM}
    A function $F\in\func_k(\{0,1\},\posQ)$ is \emph{log-supermodular}
    if
    \begin{equation*}
        F(\vecxuy)F(\vecxny)\geq F(\vecx)F(\vecy)
    \end{equation*}
    for all $\vecx,\vecy\in\{0,1\}^k\!$, where $\wedge$ (min) and
    $\vee$ (max) are
    applied component-wise.  \LSM{} is the set of all log-supermodular
    functions in $\func(\{0,1\},\posQ)$.
\end{definition}

It is known  
\cite[Lemma~4.2]{LSM} 
that $\fclone{\LSM}=\LSM$.  Here is a
precise statement of the computational task that we study.  A \nCSP{}
problem is parameterised by a finite, weighted constraint
language $\nfuns$ as follows.

\prob{$\nCSP(\nfuns)$.}{A pps-formula $\psi$ consisting of a product of
  $m$~atomic $\nfuns$-formulas over $n$~free variables~$\vecx$.
  (Thus, $\psi$ has no bound variables.)}{The value $\sum_{\vecx\in
    D^n} F_\psi(\vecx)$ where $F_\psi$ is the function defined by
  $\psi$.}

Where convenient, we abuse notation by writing $\nCSP(F)$ to mean
$\nCSP(\{F\})$ and by writing $\nCSP(\nfuns,\nfuns')$ to mean
$\nCSP(\nfuns\cup \nfuns')$.
 
As in \cite{LSM} (and other works) we take the size of a
$\nCSP(\nfuns)$ instance to be $n+m$, where $n$ is the number of
(free) variables and $m$ is the number of weighted constraints (atomic
formulas).  In unweighted versions of CSP and \nCSP{}, we can just use
$n$ as the size of an instance, since the number of constraints can be
bounded by a polynomial in the number of variables.  However, in
weighted cases, the multiplicity of constraints matters so we cannot
bound $m$ in terms of $n$.  We typically denote an instance of
$\nCSP(\nfuns)$ by~$I$ and the output by $Z(I)$, which is often called
the ``partition function''.
 
A counting problem, for our purposes, is any function from instances
(encoded as words over a finite alphabet $\Sigma$) to $\posQ$.  A
\emph{randomised approximation scheme\/} for a counting problem~$\#X$
is a randomised algorithm that takes an instance~$w$ and returns an
approximation $Y$ to $\#X(w)$.  The approximation scheme has a
parameter~$\epsilon\in(0,1)$ which specifies the error tolerance.  Since the
algorithm is randomised, the output~$Y$ is a random variable depending
on the ``coin tosses'' made by the algorithm.  We require that, for
every instance~$w$ and every $\epsilon\in(0,1)$,
\begin{equation}
\label{eq:FPRASerrorprob}
    \Pr \big[e^{-\epsilon} \#X(w)\leq Y \leq e^\epsilon \#X(w)\big]\geq  3/4\,.
\end{equation}
The randomised approximation scheme is said to be a \emph{fully
  polynomial randomised approximation scheme}, or \emph{FPRAS}, if it
runs in time bounded by a polynomial in~$|w|$ (the length of the
word~$w$) and $\epsilon^{-1}\!$.  See Mitzenmacher and
Upfal~\cite[Definition 10.2]{MU05}.

Suppose that $\#X$ and $\#Y$ are two counting problems. An
``approx\-imation-preserving reduction'' (AP-reduction)~\cite{DGGJ}
from~$\#X$ to~$\#Y$ gives a way to turn an FPRAS for~$\#Y$ into an
FPRAS for~$\#X$.  Specifically, an {\it AP-reduction from $\#X$
  to~$\#Y$} is a randomised algorithm~$\mathcal{A}$ for
computing~$\#X$ using an oracle for~$\#Y$.  The
algorithm~$\mathcal{A}$ takes as input a pair
$(w,\epsilon)\in\Sigma^*\times(0,1)$, and satisfies the following
three conditions: (i)~every oracle call made by~$\mathcal{A}$ is of
the form $(v,\delta)$, where $v\in\Sigma^*$ is an instance of~$\#Y$,
and $0<\delta<1$ is an error bound satisfying $\delta^{-1} \leq
\mathrm{poly}(|w|, \epsilon^{-1})$; (ii)~the algorithm~$\mathcal{A}$
meets the specification for being a randomised approximation scheme
for~$\#X$ (as described above) whenever the oracle meets the
specification for being a randomised approximation scheme for~$\#Y$;
and (iii)~the run-time of~$\mathcal{A}$ is polynomial in $|w|$ and
$\epsilon^{-1}\!$.  If an AP-reduction from $\#X$ to~$\#Y$ exists we
write $\#X\APred\#Y$.  Note that, subsequent to~\cite{DGGJ}, the
notation~$\APred$ has been used to denote a different type of
approximation-preserving reduction which applies to optimisation
problems. In this paper, our emphasis is on counting problems so
we hope this will not cause confusion.
 
The notion of pps-definability described earlier is closely related to
AP-reductions.  In particular,  
\cite[Lemma~10.1]{LSM} 
shows that
$G\in\fclone{\nfuns}$ implies that $\nCSP(\nfuns,G)\APred\nCSP(\nfuns)$.
We will use this fact without comment.

As mentioned above, \nBIS{} is the problem of counting
the independent sets in a bipartite graph and $\nSAT$ is the
problem of counting the solutions to a Boolean formula in
conjunctive normal form.  We say that a counting problem $\#X$ is
$\#Y$-easy if $\#X\APred\#Y$ and that it is $\#Y$-hard if
$\#Y\APred\#X$.  A problem $\#X$ is \emph{\LSM{}-easy} if there is a
finite, weighted constraint language $\nfuns\subset\LSM$ such that
$\#X\APred\nCSP(\nfuns)$.

We now state our main theorem.  Note that we have only defined the
problem $\nCSP(\nfuns)$ for finite languages whereas conservative
languages are, by definition, infinite.

\begin{theorem}\label{thm:main}
    Let $\nfuns$ be a conservative weighted constraint language
    taking values in $\posQ$.
    \begin{enumerate}
    \item If $\nfuns$ is weakly log-modular then $\nCSP(\nfinfuns)$ is
        in \FP{} for every finite $\nfinfuns\subset\nfuns$.
    \item If $\nfuns$ is weakly log-supermodular but not weakly
        log-modular, then $\nCSP(\nfinfuns)$ is \LSM{}-easy for every
        finite $\nfinfuns\subset\nfuns$ and \nBIS{}-hard for some such
        $\nfinfuns$.
    \item If $\nfuns$ is weakly log-supermodular but not weakly
        log-modular and consists of functions of arity at most two,
        then $\nCSP(\nfinfuns)$ is \nBIS{}-easy for every finite
        $\nfinfuns\subset\nfuns$ and \nBIS{}-equivalent for some such
        $\nfinfuns$.
    \item If $\nfuns$ is not weakly log-supermodular, then
        $\nCSP(\nfinfuns)$ is \nSAT{}-easy for every finite
        $\nfinfuns\subset\nfuns$ and \nSAT{}-equivalent for some such
        $\nfinfuns$.
    \end{enumerate}
\end{theorem}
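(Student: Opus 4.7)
The plan is to prove the four cases of the trichotomy/tetrachotomy separately, in each case exploiting conservativity (all unary functions are in $\nfuns$) to pin variables to pairs $\{a,b\}\subseteq D$, so that the Boolean classification of \cite{LSM} can be invoked on restrictions, and using pps-definability to lift structural information about $\fclone{\nfuns}$ to AP-reductions. The rough strategy is: weak log-modularity will turn out to be equivalent to a balance-type condition that makes exact computation tractable; weak log-supermodularity will allow each binary function in the clone, after a suitable domain collapse via conservative unary weights, to be mimicked by a Boolean log-supermodular function; and the failure of either property will expose a two-element witness that inherits Boolean intractability from the main theorem of \cite{LSM}.

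For Part~1, the plan is to show that weak log-modularity implies the generalised balance condition of Cai, Chen and Lu~\cite{NonNegExact}, so that $\nCSP(\nfinfuns)$ lies in \FP{} for every finite $\nfinfuns \subset \nfuns$ by their exact dichotomy; this is the direction that uses the easiness line of previous work mentioned in Section~\ref{sec:balance}. For Part~4, the hardness direction is relatively direct: if $\nfuns$ is not weakly log-supermodular, pick $F\in\fclone{\nfuns}$ and $a,b$ violating~\eqref{eq:weaklsm}, use the conservative unary $\{a,b\}$-indicator to restrict to a binary function $F'\colon\{a,b\}^2\to\posQ$, and note that $F'$ is (after relabelling) a Boolean function that is not log-supermodular; an application of the Boolean conservative classification from~\cite{LSM} then yields \nSAT{}-hardness. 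The matching \nSAT{}-easiness for every finite $\nfinfuns\subset\nfuns$ follows from generic AP-reductions from rational-weighted \nCSP{} to \nSAT{}, since an FPRAS for \nSAT{} suffices to approximate any partition function in \numP{}.

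For Part~2, the \nBIS{}-hardness half is analogous to Part~4: from a failure of~\eqref{eq:weaklm} combined with weak log-supermodularity, one obtains $F\in\fclone{\nfuns}$ and $a,b$ such that the $\{a,b\}$-restriction is Boolean, log-supermodular, and not log-modular; pps-interpreting this as a Boolean counting CSP places the problem in the \nBIS{}-hard regime of Lemma~\ref{lem:lsm_mainthm}. The genuinely new content is the \LSM{}-easiness: given a finite $\nfinfuns\subset\nfuns$ of arbitrary arity, we must produce a finite $\nfinfuns'\subset\LSM$ with $\nCSP(\nfinfuns)\APred \nCSP(\nfinfuns')$. The plan is to adapt the conservative, weakly log-supermodular structure theory developed for valued CSPs by Takhanov~\cite{TakhanovFull,Takhanovconf}, Cohen, Cooper and Jeavons~\cite{CCJ}, and Kolmogorov and \Zivny~\cite{KZ}: weak log-supermodularity of $\fclone{\nfuns}$ gives a ``tournament-pair''-style multimorphism on pairs $\{a,b\}$, and this should yield, for each $F\in\nfinfuns$, a pps-expression of $F$ in terms of unary weights together with Boolean log-supermodular gadgets obtained by encoding each variable by a tuple of Boolean ``bit'' variables indexed by domain pairs.

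For Part~3, the restriction to arity at most~$2$ lets us strengthen \LSM{}-easiness to \nBIS{}-easiness by decomposing each binary log-supermodular function into simpler pieces. The plan is to apply the Cohen--Cooper--Jeavons--Krokhin refinement~\cite{new} of the Rudolf--Woeginger extremal-ray decomposition~\cite{RudolfWoeginger} of Monge matrices, yielding a representation of $\log F$ as a non-negative combination of two-valued (``implication'' or ``cut'') supermodular matrices; exponentiating converts this into a pps-expression using only Boolean log-supermodular functions that are known to be \nBIS{}-easy (e.g.\ by the \nBIS{}-easiness of $\IMP$ from~\cite{DGJ,LSM}). The main obstacle I anticipate is the \LSM{}-easiness in Part~2: merging the Takhanov/Cohen--Cooper--Jeavons/Kolmogorov--\Zivny{} machinery, which was developed for \emph{minimisation} over $\extR$, with the counting setting over $\posQ$, and in particular ensuring that their bisubmodular and tournament-pair gadgets transfer from cost-function identities to multiplicative pps-identities preserving partition functions rather than minima.
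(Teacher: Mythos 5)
Your overall route coincides with the paper's at every stage: Part~1 via weak log-modularity $\Rightarrow$ balance $\Rightarrow$ the exact dichotomy of~\cite{NonNegExact}; the hardness halves of Parts 2 and~4 by pinning to a pair $\{a,b\}$ with conservative unaries and invoking the Boolean conservative theorem of~\cite{LSM}; \nSAT{}-easiness via \cite{DGGJ,BDGJJR}; and Part~3 via the Cohen--Cooper--Jeavons--Krokhin~\cite{new} refinement of the Rudolf--Woeginger~\cite{RudolfWoeginger} Monge decomposition exponentiated into $\IMP$ gadgets. The genuine gap is in your plan for \LSM{}-easiness (on which Part~3 also rests): you propose to produce, for each $F\in\nfinfuns$, a pps-expression of $F$ itself over unary weights and Boolean log-supermodular gadgets under a fixed bit encoding. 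This cannot work as stated. Weak log-supermodularity yields only an STP/MJN multimorphism of $\toVal{\Lambda(\nfuns)}$ (Theorem~\ref{thm:wlsm_to_stp}), i.e.\ an STP on some symmetric set $M$ of pairs and an MJN on the rest; an individual weight function need not be log-supermodular (Monge) under any product of total orders on its coordinate domains, so no instance-independent encoding of each $F$ as a product of \LSM{} gadgets is available, and the same objection applies to decomposing each binary $F$ directly in Part~3.

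The paper's resolution is an instance-by-instance, weight-preserving reduction rather than a clone-level expressibility statement, and this is exactly where your ``main obstacle'' gets discharged. Stages 1--2 of the proof of Theorem~3.4 in~\cite{KZ} give a multisorted multimorphism $\langle\sqcap,\sqcup\rangle$ of $f_I$ that depends only on the feasibility relation $\toRel{f_I}$; a limiting argument with $N$ extra copies of a constraint (Theorem~\ref{thm:extendkzeasy}) shows it is also a multimorphism of each \emph{trimmed} constraint $f'_t$, which is computable because the crisp part is tractable (Corollary~\ref{cor:STPMJN-tractable}); the argument of \cite[Theorem~8.2]{CCJ} then makes each component order total (Lemma~\ref{cor:equivInstExtn}); only after all this does the nested-ideal Boolean encoding render every constraint submodular, with the resulting Boolean weight functions drawn from a fixed finite subset of \LSM{} (and, in the arity-2 case, decomposable into $\IMP$-gadgets). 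The passage between counting and minimisation is handled not by transferring identities but by exact equivalence of instances (a weight-preserving bijection of feasible assignments), together with two devices your sketch omits: the rescaling $\Lambda$ so that weights lie in $\bddQ$ and $\toVal{F}=-\ln F$ is a nonnegative cost function as \cite{KZ} require, and Lemma~\ref{lem:softlooptranslate}, a $k$-th power/limit argument needed to deduce weak submodularity of $\toVal{\nfuns}$ from weak log-supermodularity of $\nfuns$ in the first place. With the per-function expressibility claim replaced by this instance-level construction your outline matches the paper; without it the central step of Parts 2 and~3 remains unproved.
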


In particular, among conservative \nCSP{}s, there are no new
complexity classes below \nBIS{} or above \LSM{}; furthermore there is
a trichotomy for conservative weighted constraint languages with no
functions of arity greater than two.

The \nBIS{}-hardness and \nSAT{}-equivalence are proved in
Section~\ref{sec:hardness}, where they are restated as
Theorem~\ref{thm:BIShard-SATequiv}.  The membership in \FP{} is
established as Theorem~\ref{thm:tractable} at the end of
Section~\ref{sec:balance}. \LSM{}-easiness and \nBIS{}-easiness are
established by Theorem~\ref{thm:LSM-BIS-easy} at the end of
Section~\ref{sec:lsm_bis_easy}.  Algorithmic aspects are discussed in
Section~\ref{sec:algorithmic}.

\section{Hardness results}
\label{sec:hardness}

Our hardness results use the following result from~\cite{LSM}.

\begin{lemma} 
\cite[Theorem 10.2]{LSM}
\label{lem:lsm_mainthm}
    Let $\nfuns$ be a finite, weighted constraint language with
    $D=\{0,1\}$. 
    \begin{itemize}
    \item If $\nfuns\subset\fclone{\NEQ,\uns{\{0,1\}}}$ then, for
        any finite $S\subset \uns{\{0,1\}}$, $\nCSP(\nfuns\cup S)$ has
        an FPRAS.
    \item If $\nfuns\not\subset\fclone{\NEQ,\uns{\{0,1\}}}$, then
        there is a finite $S\subset \uns{\{0,1\}}$ such that
        $\nCSP(\nfuns\cup S)$ is \nBIS{}-hard.
    \item If $\nfuns\not\subset\fclone{\NEQ,\uns{\{0,1\}}}$ and
        $\nfuns\not\subset\LSM$, then there is a finite $S\subset
        \uns{\{0,1\}}$ such that $\nCSP(\nfuns\cup S)$ is
        \nSAT{}-hard.
    \end{itemize}
\end{lemma}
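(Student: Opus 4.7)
The plan is to follow the standard clone-theoretic strategy for counting CSP dichotomies, treating the three parts separately. All three parts reduce to analyzing what a single weight function $F\in\nfuns$ can express via pps-formulas when combined with the unary functions in $\uns{\{0,1\}}$ and with $\EQ$.

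For the FPRAS part (tractable case), I would show that $\nCSP(\nfinfuns)$ is in fact in \FP{} whenever $\nfinfuns\subset\fclone{\NEQ,\uns{\{0,1\}}}$, which is a stronger statement. The idea is that any instance built from $\NEQ$ and unary weights corresponds to a weighted 2-colouring problem: the hypergraph of $\NEQ$-constraints decomposes the variables into connected components, and a component contributes zero if it is not bipartite and otherwise has exactly two satisfying assignments whose contributions are computed by multiplying the unary weights in polynomial time. Closure under pps-definability then lifts this tractability to all of $\fclone{\NEQ,\uns{\{0,1\}}}$: the sum over bound variables factors through the components as above, and a product of atomic formulas is a product of such factorable functions.

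For the \nBIS{}-hardness and \nSAT{}-hardness parts, the overall strategy is to pin variables of some offending $F\in\nfuns$, using $\{0,1\}$-valued unaries, to reveal a binary gadget that is already known to be \nBIS{}-hard or \nSAT{}-hard. Concretely, for any $F$ and any way of fixing all but two of its arguments to constants, the result is a $2\times 2$ nonnegative matrix; one classifies these matrices (up to scaling by unary weights on the two free variables, which is again allowed since $\uns{\{0,1\}}\subset\nfuns\cup S$) according to whether they lie in the trivial clone generated by $\NEQ$, represent a weighted implication (the archetype for \nBIS{}-hardness via counting homomorphisms into the implication digraph, which is exactly \nBIS{}), or represent a non-log-supermodular binary function (the archetype for \nSAT{}-hardness, reducing to approximate counting of independent sets in general graphs or antiferromagnetic Ising-type partition functions). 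The key claim to prove is that if no such pinning of $F$ yields a non-trivial gadget, then $F$ itself must lie in $\fclone{\NEQ,\uns{\{0,1\}}}$, which is the contrapositive of what is needed; and similarly, that if every gadget obtained by pinning is log-supermodular, then $F$ itself is log-supermodular.

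The main obstacle is the second of these two claims, i.e.\ lifting local structural information (from all pinned $2\times 2$ projections) to a global statement about $F$. The natural approach is inductive on the arity: pin one variable at a time, use the induction hypothesis on the arity-$(k-1)$ restrictions, and then combine the results across pinnings, using the unary functions to interpolate. Care is needed because pps-formulas allow summation over bound variables, so an expressibility claim must handle not only restrictions but also projections, which is where the transitivity $\fclone{\fclone{\nfuns}}=\fclone{\nfuns}$ is crucial: once we have pps-defined a binary target, we may treat it as primitive for subsequent gadgeteering. I would expect the log-supermodular lifting step in particular to rely on the fact that the log-supermodular inequality has a \emph{local} characterisation on pairs of variables (the standard two-variable certificate), so that log-supermodularity of every $2\times 2$ pinning suffices for log-supermodularity of $F$; this reduces the $k$-ary case to the binary case already handled by the gadget classification.
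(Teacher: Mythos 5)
You should first be aware that this paper does not prove Lemma~\ref{lem:lsm_mainthm} at all: it is imported verbatim (with unary weights restricted to rationals) from \cite[Theorem~10.2]{LSM}, and the only accompanying justification is Remark~\ref{rmk:rationallsm}, which checks that the proof in that reference never genuinely needs irrational unary weights. So your proposal is not an alternative to an argument appearing here; it is an attempt to reconstruct the main theorem of the cited paper, and as a sketch it elides exactly the parts that make that theorem hard (its proof in \cite{LSM} occupies several sections and uses powers of functions, summing out variables and approximation by unary weights, not only pinnings).

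The concrete gap is in your local-to-global lifting step. You reduce the $k$-ary case to the binary case via the claim that if every restriction of $F$ obtained by pinning all but two coordinates is log-supermodular, then $F$ is log-supermodular. That is true for strictly positive $F$ (Topkis's theorem; note that Lemma~\ref{lem:bool_facts} part~(\ref{pt:topkis}) explicitly requires strict positivity), but it is false once zeros are allowed. For example, let $F\colon\{0,1\}^3\to\{0,1\}$ be the indicator of $\{(1,0,0),(0,1,1)\}$. Every pinned $2\times 2$ restriction has at most one point of support, hence is a product of unaries and is log-supermodular; yet taking $\vecx=(1,0,0)$ and $\vecy=(0,1,1)$ gives $F(\vecxuy)F(\vecxny)=0<1=F(\vecx)F(\vecy)$, so $F\notin\LSM$. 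Thus non-log-supermodularity can reside entirely in the support (generalised-$\NEQ$-type obstructions) and be invisible to every pinning; handling such functions is precisely the delicate case, and indeed the present paper has to work around the same phenomenon in Section~\ref{sec:balance} (Lemmas~\ref{lem:balance_wmod_helper} and~\ref{lem:logmod_balanced}), where the function is first smoothed with the matrix $T$ and $\NEQ$-like binary projections are summed out before Topkis can be invoked. Your companion claim for the \nBIS{} part --- that if no pinning of $F$ yields a non-trivial binary gadget then $F\in\fclone{\NEQ,\uns{\{0,1\}}}$ --- is likewise asserted rather than proved and is essentially the whole content of the cited theorem; similarly, your first bullet silently assumes the structure theorem that every function in $\fclone{\NEQ,\uns{\{0,1\}}}$ factors as a product of unaries and $\EQ$/$\NEQ$ constraints on pairs of variables (the higher-arity analogue of \cite[Remark~7.3]{LSM}), which also requires proof.
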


\begin{remark}\label{rmk:rationallsm}
    In the statement of  
    \cite[Theorem 10.2]{LSM}, the set
    $\uns{\{0,1\}}$ is replaced with $\mathcal{B}^{p}_1$, the set of
    all unary functions from $\{0,1\}$ to the set of non-negative
    efficiently computable reals. In this paper, we restrict to
    rationals for simplicity. Even though the statement of 
    \cite[Theorem 10.2]{LSM} does not imply Lemma~\ref{lem:lsm_mainthm},     the proof of \cite[Theorem 10.2]{LSM} does establish the lemma.  No
    functions in $\mathcal{B}^{p}_1$ with irrational weights are used
    explicitly in the proof --- unary functions that are used (for
    example, in the proof of  
    \cite[Lemma 7.1]{LSM}) are constructed by
    multiplying and dividing other values in the codomains of
    functions in $\nfuns$.
\end{remark}

In fact we will only use the following special case of
Lemma~\ref{lem:lsm_mainthm}.

\begin{lemma} 
\cite[Theorem 10.2]{LSM}
\label{lem:lsm_mainthm_special}
    Let $F$ be a function in $\func_2(\{0,1\},\posQ)$.
    \begin{itemize}
    \item If $F\notin\fclone{\NEQ,\uns{\{0,1\}}}$ then
        $\{F\}\cup\uns{\{0,1\}}$ is \nBIS{}-hard.
    \item If $F\notin\fclone{\NEQ,\uns{\{0,1\}}}\cup\LSM$ then
        $\{F\}\cup\uns{\{0,1\}}$ is \nSAT{}-hard.
    \end{itemize}
\end{lemma}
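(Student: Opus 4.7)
The plan is to obtain Lemma~\ref{lem:lsm_mainthm_special} as an immediate corollary of the general Lemma~\ref{lem:lsm_mainthm}, by specialising the weighted constraint language~$\nfuns$ to the singleton $\{F\}$ and then translating the set-theoretic hypotheses on~$\nfuns$ into statements about the single function~$F$.

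First, I would set $\nfuns=\{F\}$ and verify the translation of hypotheses. For the \nBIS{}-hardness clause, the condition ``$\nfuns\not\subset\fclone{\NEQ,\uns{\{0,1\}}}$'' in Lemma~\ref{lem:lsm_mainthm} is equivalent, for a singleton $\nfuns$, to ``$F\notin\fclone{\NEQ,\uns{\{0,1\}}}$.'' For the \nSAT{}-hardness clause, the conjunction ``$\nfuns\not\subset\fclone{\NEQ,\uns{\{0,1\}}}$ and $\nfuns\not\subset\LSM$'' becomes ``$F\notin\fclone{\NEQ,\uns{\{0,1\}}}$ and $F\notin\LSM$,'' which is exactly the hypothesis $F\notin\fclone{\NEQ,\uns{\{0,1\}}}\cup\LSM$ in the stated lemma.

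Second, I would apply Lemma~\ref{lem:lsm_mainthm} to $\nfuns=\{F\}$ under each hypothesis. In the first case, it yields a finite $S\subset\uns{\{0,1\}}$ such that $\nCSP(\{F\}\cup S)$ is \nBIS{}-hard; in the second case, a finite $S\subset\uns{\{0,1\}}$ such that $\nCSP(\{F\}\cup S)$ is \nSAT{}-hard. Since $\{F\}\cup S$ is a finite subset of $\{F\}\cup\uns{\{0,1\}}$, this is precisely the meaning of ``$\{F\}\cup\uns{\{0,1\}}$ is \nBIS{}-hard'' (respectively \nSAT{}-hard) under the convention, used throughout the paper and explicit in Theorem~\ref{thm:main}, that an infinite weighted constraint language is called hard for a class if some finite subset is.

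There is no substantive obstacle here: Lemma~\ref{lem:lsm_mainthm_special} is literally the instantiation of Lemma~\ref{lem:lsm_mainthm} to a single binary weight function together with the observation that both hardness conclusions are witnessed by finite sub-languages of $\{F\}\cup\uns{\{0,1\}}$. The only point worth a line of commentary is the rewriting of the two separate conditions ``$F\notin\fclone{\NEQ,\uns{\{0,1\}}}$'' and ``$F\notin\LSM$'' as the single set-membership condition $F\notin\fclone{\NEQ,\uns{\{0,1\}}}\cup\LSM$ that appears in the lemma statement.
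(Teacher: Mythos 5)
Your proposal is correct and matches the paper's treatment: the paper gives no separate argument for Lemma~\ref{lem:lsm_mainthm_special}, presenting it precisely as the instantiation of Lemma~\ref{lem:lsm_mainthm} to the finite language $\nfuns=\{F\}$, with the singleton non-containment conditions rewritten as membership conditions on $F$ and the hardness of $\{F\}\cup\uns{\{0,1\}}$ read as hardness of $\nCSP(\{F\}\cup S)$ for some finite $S\subset\uns{\{0,1\}}$, exactly as it is later used in the proof of Theorem~\ref{thm:BIShard-SATequiv}.
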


Our hardness results now follow from Lemma~\ref{lem:lsm_mainthm_special}.

\begin{theorem}
\label{thm:BIShard-SATequiv}
    Let $\nfuns$ be a conservative weighted constraint language taking
    values in $\posQ$.
    \begin{itemize}
    \item If $\nfuns$ is not weakly log-modular, there is a finite
        $\nfinfuns\subset\nfuns$ such that $\nCSP(\nfinfuns)$ is
        \nBIS{}-hard.
    \item If $\nfuns$ is not weakly log-supermodular, there is a
        finite $\nfinfuns\subset\nfuns$ such that $\nCSP(\nfinfuns)$
        is \nSAT{}-hard.
    \item For all finite $\nfinfuns\subset\nfuns$, $\nCSP(\nfinfuns)$
        is \nSAT{}-easy.
    \end{itemize}
\end{theorem}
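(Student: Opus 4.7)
The strategy is to reduce each of the three parts to the Boolean case and invoke Lemma~\ref{lem:lsm_mainthm_special}. For parts~1 and~2, suppose the relevant condition fails. By definition of weak log-modularity (resp.\ weak log-supermodularity), there is a binary $F \in \fclone{\nfuns}$ and a pair $a,b \in D$ at which~(\ref{eq:weaklm}) (resp.~(\ref{eq:weaklsm})) fails. Since $F$ is defined by a pps-formula, only finitely many functions from $\nfuns$ are used in that formula; collect them into a finite set $\nfinfuns_0 \subset \nfuns$. Conservativity supplies the unary indicator $u_{ab} \in \uns{D}$ of the set $\{a,b\}$ (that is, $u_{ab}(a)=u_{ab}(b)=1$ and $u_{ab}(d)=0$ otherwise), which I will use to pin variables to this two-element subdomain.

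Define the Boolean binary function $F'\colon\{0,1\}^2\to\posQ$ by $F'(0,0)=F(a,a)$, $F'(0,1)=F(a,b)$, $F'(1,0)=F(b,a)$, $F'(1,1)=F(b,b)$. The key reduction is the claim that for every finite $S\subset\uns{\{0,1\}}$,
\[
    \nCSP\bigl(\{F'\}\cup S\bigr) \APred \nCSP(\nfinfuns), \quad \text{where} \quad \nfinfuns = \nfinfuns_0 \cup \{u_{ab}\} \cup S',
\]
and $S'\subset\uns{D}$ contains, for each $w\in S$, the unary $u_w$ defined by $u_w(a)=w(0)$, $u_w(b)=w(1)$, and $u_w(d)=0$ for $d\notin\{a,b\}$. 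Given a Boolean instance~$I'$, construct~$I$ over~$D$ with the same variables, translating each atomic $F'(v_i,v_j)$ into $F(v_i,v_j)$ (admissible because $F\in\fclone{\nfinfuns_0}$), each atomic $w(v_k)$ into $u_w(v_k)$, and adding a unary constraint $u_{ab}(v_k)$ on every variable. The factor $u_{ab}$ kills any assignment placing some $v_k$ outside $\{a,b\}$, so $Z(I)=Z(I')$ exactly.

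Failure of~(\ref{eq:weaklm}) at $(a,b)$ for $F$ translates directly into the same failure at $(0,1)$ for $F'$, and by the characterization (implicit in~\cite{LSM}) that every binary function in $\fclone{\NEQ,\uns{\{0,1\}}}$ satisfies the Boolean form of~(\ref{eq:weaklm}), we get $F'\notin\fclone{\NEQ,\uns{\{0,1\}}}$. For part~2, the strict inequality in the failure of~(\ref{eq:weaklsm}) further yields $F'(0,0)F'(1,1) < F'(0,1)F'(1,0)$; since binary log-supermodularity reduces to precisely this single inequality, $F'\notin\LSM$. Lemma~\ref{lem:lsm_mainthm_special} then supplies a finite $S\subset\uns{\{0,1\}}$ making $\nCSP(\{F'\}\cup S)$ \nBIS{}-hard (resp.\ \nSAT{}-hard), and the reduction above transfers this to $\nCSP(\nfinfuns)$.

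For part~3, let $\nfinfuns\subset\nfuns$ be finite and let $d$ be a common denominator for the rational values taken by functions in $\nfinfuns$. For an instance~$I$ of $\nCSP(\nfinfuns)$ with $m$ atomic constraints, $d^m\cdot Z(I)$ is a non-negative integer, and the problem of computing it is in \numP{}. Since \nSAT{} is \numP{}-complete under AP-reductions~\cite{DGGJ}, we conclude $\nCSP(\nfinfuns)\APred\nSAT{}$. The main obstacle is the clone-characterization fact used above, namely that every binary function in $\fclone{\NEQ,\uns{\{0,1\}}}$ satisfies the Boolean form of~(\ref{eq:weaklm}); one verifies this by checking that any pps-formula over $\NEQ$, $\EQ$ and unaries produces, after summing over bound variables, a binary function of the form $u_1(x)u_2(y)$, $\EQ(x,y)u(x)$, or $\NEQ(x,y)u(x)$, each of which satisfies~(\ref{eq:weaklm}).
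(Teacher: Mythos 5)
Your proposal is correct and follows essentially the same route as the paper: use a conservative unary indicator to pin variables to the two-element subdomain $\{a,b\}$, transfer the violating binary function of $\fclone{\nfuns}$ to a Boolean function, apply Lemma~\ref{lem:lsm_mainthm_special} via the cited characterisation of binary functions in $\fclone{\NEQ,\uns{\{0,1\}}}$ (noting, as your ``further'' implicitly requires, that a violation of~(\ref{eq:weaklsm}) also violates~(\ref{eq:weaklm}) at the same pair, so $F'\notin\fclone{\NEQ,\uns{\{0,1\}}}$ in part~2 as well), and pull the hardness back by the weight-preserving translation with the indicator constraint on every variable. Your part~3, which scales by $d^m$ to place the problem directly in \numP{} and then invokes the \cite{DGGJ} construction, is a minor and equally valid variant of the paper's appeal to the weighted/unweighted equivalence of~\cite{BDGJJR}.
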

\begin{proof}
    First, we establish the hardness results.
 
    Suppose that $\nfuns$ is not weakly log-modular.  Let
    $H\in\fclone{\nfuns}$ be a function violating \eqref{eq:weaklm}
    and let $a$ and $b$ be the relevant domain elements, which must be
    distinct.
    Let $\phi\colon\{0,1\}\to D$ be a unary function with
    $\phi(0)=a$ and $\phi(1)=b$.  Define $H_\phi\colon\{0,1\}^2\to\posQ$ by
    $H_\phi(x,y)=H(\phi(x),\phi(y))$.  The following three equations
    must all fail to hold:
    \begin{gather*} 
        H_\phi(0,0)H_\phi(1,1)=H_\phi(0,1)H_\phi(1,0)\\
        H_\phi(0,0)=H_\phi(1,1)=0 \\
        H_\phi(0,1)=H_\phi(1,0)=0.
    \end{gather*}
    By      \cite[Remark 7.3]{LSM}, every binary function in
    $\fclone{\NEQ,\uns{\{0,1\}}}$ has one of three forms: $U_1(x)
    U_2(y)$, $U(x) \EQ(x,y)$ or $U(x) \NEQ(x,y)$. Therefore, $H_\phi
    \notin\fclone{\NEQ,\uns{\{0,1\}}}$.  By Lemma
    \ref{lem:lsm_mainthm_special} there is a finite set
    $S\subset\uns{\{0,1\}}$ such that $\nBIS\APred\nCSP(H_\phi,S)$.

    For each $U \in \uns{\{0,1\}}$, define $U_{\phi^{-1}}\in\uns D$ by
    \begin{align*}
        U_{\phi^{-1}}(x)=\begin{cases}
            U(0) & \text{ if $x=a$,}\\
            U(1) & \text{ if $x=b$,}\\
            0    & \text{ otherwise.} 
        \end{cases} 
    \end{align*}
    Let $E(0)=E(1)=1$.  Let $S'=\{U_{\phi^{-1}}\mid U\in S\cup
    \{E\}\}$.  Note that $\{H\}\cup S'\subset \fclone{F,\uns{D}}$ is
    finite.

    We describe a reduction from $\nCSP(H_\phi,S)$ to
    $\nCSP(H,S')$.  Given an instance $I$ of $\nCSP(H_\phi,S)$, replace
    each use of $H_\phi$ by $H$, and each use of $U\in S$ by
    $U_{\phi^{-1}}\in S'\!$, and introduce an atomic formula
    $E_{\phi^{-1}}(v)$ for each variable $v$, to obtain a new instance
    $I'$ of $\nCSP(H,S')$ with $Z(I)=Z(I')$.  Thus $\nCSP(H, S')$ is
    \nBIS{}-hard.

    A similar argument shows that $\nfuns$ is \nSAT{}-hard if
    it is not weakly log-supermodular. In this
    case, we start with a function $H\in \fclone{\nfuns}$ violating
    \eqref{eq:weaklsm} on the elements $a,b\in \dom$.  Defining $\phi$
    and $H_\phi$ as above, we find that $H_\phi\not\in\LSM$.  Since
    $H$ also violates \eqref{eq:weaklm} on $a,b$, the argument above
    establishes $H_\phi \notin\fclone{\NEQ,\uns{\{0,1\}}}$.  By Lemma
    \ref{lem:lsm_mainthm_special} there is a finite set
    $S\subset\uns{\{0,1\}}$ such that $\nSAT\APred\nCSP(H_\phi,S)$.
    We then proceed as before.
 
    \nSAT{}-easiness follows from the construction in Section~3
    of~\cite{DGGJ}, which shows that any problem in \numP{} is
    \nSAT{}-easy.  The weighted counting CSPs that we deal with here
    are equivalent, by \cite{BDGJJR}, to unweighted ones, which are in
    \numP{}.
\end{proof}

\section{Balance and weak log-modularity}\label{sec:balance}

In this section we show that weak log-modularity implies tractability,
by showing that every weakly log-modular weighted constraint language
is balanced in the following sense.

We may associate a matrix $\MM$ with an undirected bipartite graph
$G_\MM$ whose vertex partition consists of the set of rows $R$ and
columns $C$ of $\MM$.  A pair $(r,c)\in R\times C$ is an edge of
$G_\MM$ if, and only if, $\MM_{rc}\neq 0$.  A \emph{block} of $\MM$ is
a submatrix whose rows and columns form a connected component in
$G_\MM$.  $\MM$ has block-rank~$1$ if all its blocks have rank~$1$.
 
We say that a weighted constraint language $\nfuns$ is \emph{balanced}
\cite{NonNegExact} if, for every function $F(x_1, \dots, x_n)\in
\fclone{\nfuns}$ with arity $n\geq 2$, and every $k$ with $0<k<n$, the
$|\dom|^k\times |\dom|^{n-k}$ matrix $F((x_1,\dots, x_k), (x_{k+1},
\dots, x_n))$ has block-rank~$1$.
(This notion reduces to Dyer and Richerby's notion of ``strong
balance''\cite{DReffective} in the unweighted case.)
 
A function $F\colon\{0,1\}^n\to\mathbb{R}$ is \emph{strictly positive} if
its range is contained in $\mathbb{R}_{>0}$.
$F$
has \emph{rank~$1$} if it has
the form $F(x_1,\dots,x_k)=U_1(x_1)\cdots U_k(x_k)$.
Given a non-singular two-by-two matrix
 $T\in\mathbb{R}^{2\times 2}$  
  we define
$T^{\otimes n}F\colon\{0,1\}^n\to\mathbb{R}$ by
\begin{equation*}
    (T^{\otimes n}F)(x_1,\dots,x_n)
        = \sum_{y_1,\dots,y_n\in\{0,1\}}
              \left(\prod_{i=1}^n T_{x_iy_i}\right) F(y_1,\dots,y_n)
\end{equation*}
The rows and columns of $T$ are considered to be indexed by
$\{0,1\}$. We associate with any function $F\colon\{0,1\}^2\to\mathbb{R}$, the
matrix $M_F\in\mathbb{R}^{2\times 2}$ defined by $(M_F)_{ij}=F(i,j)$.

\begin{lemma}\label{lem:bool_facts}
    Let $\MM\in\mathbb{R}^{k\times k}$. Let $F\colon \{0,1\}^k \to
    \mathbb{R}$. Let $T\in\mathbb{R}_{\geq 0}^{2\times 2}$ be
    non-singular.
    \begin{enumerate}

    \item \label{pt:binary}  
 If $k=2$ then
        $\MM$ has
        block-rank~$1$ if and only if it has rank~$1$ or it has at most two
        non-zero entries.   
$F$~has
rank~$1$
        if and only if $\det M_F=0$.

    \item \label{pt:br1_2by2} $\MM$ has block-rank~$1$ if and only if
        the matrix
        \begin{equation*}
            N_{M,\vecu,\vecu'\!,\vecv,\vecv'}
                = \begin{pmatrix}
                      \MM(\vecu,\vecv)&\MM(\vecu,\vecv')\\
                      \MM(\vecu'\!,\vecv)&\MM(\vecu'\!,\vecv')
                  \end{pmatrix}
        \end{equation*}
        has block-rank~$1$ for every $\vecu, \vecu'\!, \vecv, \vecv'\!$.

    \item \label{pt:topkis} (Topkis's theorem) If $F$ is strictly
        positive and not of rank~$1$, there is a function $F'\colon \{0,1\}^2
        \to \mathbb{R}$ of the following form that is not of rank~$1$.
        \begin{equation*}
            F'(x_i,x_j) =
                F(c_1,\dots,c_{i-1},x_i,c_{i+1},\dots,c_{j-1},x_j,
                  c_{j+1},\dots,c_k).
        \end{equation*}
        Here $1\leq i< j\leq k$, and each $c_\ell$ is a fixed element of
        $\{0,1\}$.

    \item \label{pt:rk_basisfree} $F$ has rank~$1$ if and only if
        $T^{\otimes k}F$ has rank~$1$.
    \end{enumerate}
\end{lemma}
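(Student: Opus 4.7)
The plan is to prove the four parts in turn, with later parts drawing on earlier ones.

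For part 1, with $k=2$ I do a case analysis on the number of non-zero entries of $\MM$. With 0, 1, or 2 non-zero entries, every block has at most one row or one column, so automatically has rank at most 1. With 3 or 4 non-zero entries, $G_\MM$ is connected, so the unique block is $\MM$ itself and block-rank 1 forces $\mathrm{rank}(\MM) = 1$; a direct computation shows $\det \MM \ne 0$ whenever $\MM$ has exactly three non-zero entries, so this subcase is ruled out. The equivalence ``$F$ has rank $1$ iff $\det M_F = 0$'' is just the standard linear-algebra fact that a matrix is a rank-one outer product iff it has vanishing determinant, with the zero function covered by allowing a factor $U_i \equiv 0$.

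Part 2 follows from part 1 in both directions. For the forward direction, take $N = N_{\MM,\vecu,\vecu'\!,\vecv,\vecv'}$: if $N$ has at most two non-zero entries, part 1 gives block-rank 1 immediately; otherwise any two non-zero entries of $N$ share a row or column of $\MM$, so the four corner vertices lie in a single block $B$ of $\MM$, which has rank 1, and hence so does $N$. For the reverse direction, suppose some block $B$ of $\MM$ has rank at least 2 and pick a $2\times 2$ submatrix $N$ of $B$ with $\det N \ne 0$. By part 1, either $N$ has three or four non-zero entries, immediately violating the hypothesis, or $N$ has exactly two non-zero entries on a (anti-)diagonal. In the latter case, the four corner vertices of $N$ all lie in the connected graph $G_B$, so there is a shortest path in $G_B$, say $\vecu, c_1, r_1, c_2, \ldots, \vecv'$, from a row $\vecu$ of $N$ to the column $\vecv'$ of $N$ to which it is not directly joined. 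The $2 \times 2$ submatrix on rows $\{\vecu, r_1\}$ and columns $\{c_1, c_2\}$ has non-zero entries at $(\vecu, c_1)$, $(r_1, c_1)$, $(r_1, c_2)$ and a zero at $(\vecu, c_2)$ (forced by minimality of the path, else the path could be shortened), so has three non-zero entries and non-zero determinant; by part 1 it fails to have block-rank 1, contradicting the hypothesis.

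Part 3 is Topkis's theorem transcribed multiplicatively. Since $F$ is strictly positive, $\ell = \log F$ is finite-valued, and by part 1 the condition $\det M_{F'} = 0$ is equivalent to the vanishing of the second mixed difference of $\ell$ in the two free coordinates at the prescribed fixing $\vec c$. If this holds for all pairs $i<j$ and all $\vec c$, a short induction on $k$ (splitting $\ell$ into its $x_k=0$ and $x_k=1$ slices, applying the inductive hypothesis to each, and using the mixed-difference condition to match them up) shows that $\ell(\vecx) = g_0 + \sum_i g_i(x_i)$, whence $F = \prod_i \exp(g_i)$ has rank 1; the contrapositive is exactly part 3. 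Part 4 is a direct tensor calculation: if $F(\vecy) = \prod_i U_i(y_i)$, swapping sum and product gives $(T^{\otimes k} F)(\vecx) = \prod_i U'_i(x_i)$ with $U'_i(x_i) = \sum_{y_i} T_{x_i y_i} U_i(y_i)$, so rank 1 is preserved; non-singularity of $T$ lets one define $(T^{-1})^{\otimes k}$, and $T^{-1} T = I$ shows it to be the inverse of $T^{\otimes k}$, giving the converse.

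The main obstacle is the shortest-path argument in the reverse direction of part 2, where one must verify carefully that the minimality of the path forces the fourth entry of the chosen submatrix to vanish; the remaining parts reduce to short case analyses or to standard invocations (linear algebra, Topkis's theorem, and tensor composition).
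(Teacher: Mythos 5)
Your proposal is correct, and parts 1 and 4 follow essentially the same short arguments as the paper (case analysis on zero entries plus the determinant criterion; the tensor computation with $(T^{-1})^{\otimes k}$ giving the converse). Where you genuinely diverge is in parts 2 and 3, which the paper does not prove from scratch: for part 2 it simply cites Dyer and Richerby's Lemma~38, and for part 3 it cites Topkis's theorem (in the form of Lemma~5.1 of the \LSM{} paper) together with the Boros--Hammer fact that a modular pseudo-Boolean function is a sum of unary terms. Your shortest-path argument for part 2 --- locating, inside a connected block of rank at least 2, a $2\times2$ submatrix with exactly three non-zero entries, the fourth entry being forced to vanish by minimality of the path --- is a correct self-contained replacement for that citation; and your part 3 argument, translating ``every two-coordinate pinning has rank 1'' into vanishing second mixed differences of $\log F$ and then running an induction on arity to conclude $\log F$ is a sum of unary functions, is in effect an elementary reproof of the Topkis/Boros--Hammer combination the paper invokes. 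What the paper's route buys is brevity; what yours buys is self-containedness, at the cost of redoing known results. One small wording slip worth fixing: in the forward direction of part 2, when $N$ has three non-zero entries it is not true that \emph{any} two of them share a row or column (two may sit diagonally opposite); what you actually need, and what does hold, is that three edges of the $2\times2$ bipartite pattern connect all four corner vertices, so they lie in a single block of $\MM$, and $N$ is then a submatrix of a rank-1 block.
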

\begin{proof}

\begin{enumerate}
\item A $2\times 2$ matrix that has block-rank~$1$ either has rank~$1$ or
    is diagonal or anti-diagonal so has two zeroes.  Conversely, a
    matrix that has rank~$1$ has no submatrix whose rank exceeds~1, so
    has block-rank~$1$. A matrix with two or more zeroes has no $2\times
    2$ block so can only have blocks of rank~$1$.

    For the second statement, if $F$ has rank~$1$ then
    there are unary functions $U_0$ and $U_1$ so that
    $F(x,y) = U_0(x) U_1(y)$, which implies that $\det M_F= 0$.
    Going the other way, if $F$ is identically~$0$ then it has rank~$1$.
    Otherwise,  suppose $F(i,j)\neq 0$.
    Let $U_0(x)=F(x,j)$ and $U_1(y)=F(i,y)/F(i,j)$.
    If $\det M_F=0$ then $F(x,y)=U_0(x)U_1(y)$, so $F$ has rank~$1$.   
    
\item \cite[Lemma 38]{DReffective}.

\item Say that a strictly positive function
    $F$ is \emph{log-modular} if $f=\log F$ is modular: that is,
    $F(\vecx\vee\vecy)F(\vecx\wedge\vecy) = F(\vecx)F(\vecy)$ for all
    $\vecx,\vecy\in\{0,1\}^k\!$.  
If $f(\vecx)$ is modular, then
    can be expressed as a linear sum of the $x_i$'s
    (see for example \cite[Proposition 24]{BorosHammer}), so a
    strictly positive log-modular function is a product of unary
    functions, so it has rank~$1$. The result is then Topkis's
    theorem~\cite{Topkis} in the form stated in  
    \cite[Lemma 5.1]{LSM}.

\item If $F$ is of the form $U_1(x_1)\cdots U_n(x_n)$ then
    \begin{equation*}
        (T^{\otimes n}F)(x_1,\dots,x_n)
            = (T^{\otimes 1}U_1)(x_1)\cdots (T^{\otimes 1}U_n)(x_n)
    \end{equation*}
    The reverse implication follows from $(T^{-1})^{\otimes n}
    T^{\otimes n}F=F$, where $T^{-1}$ is the matrix inverse of~$T$.
\end{enumerate}
\end{proof}

A function $F\colon D^n\to\posQ$ is \emph{essentially pseudo-Boolean} if
its support 
(the set of vectors $\vecx$ satisfying $F(\vecx)>0$)
is contained in a set $\dom_1\times\dots\times \dom_n$
with $|\dom_1|,\dots,|\dom_n|\leq 2$.  The \emph{projection} of a
relation $R\subseteq D^n$ onto indices $1\leq i<j\leq n$ is the set of
pairs $(a,b)\in D^2$ such that there exists $\vecx\in R$ with $x_i=a$
and $x_j=b$.  A \emph{generalised $\NEQ$} is a relation of the form
$\{(x_i,x_j),(y_i,y_j)\}\subset D^2$ for some $x_i\neq y_i$ and
$x_j\neq y_j$.

\begin{lemma}
\label{lem:balance_wmod_helper}
    Let $F\colon\dom^n\to\posQ$ be an essentially pseudo-Boolean function
    which is not of rank~$1$, and assume that no binary projection of the
    support of $F$ is a generalised $\NEQ$. Then $\{F\}\cup \uns D$ is
    not weakly log-modular.
\end{lemma}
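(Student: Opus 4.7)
The plan is to construct a binary function $G\in\fclone{\{F\}\cup\uns D}$ that violates weak log-modularity.  Let $D_1,\ldots,D_n$ be the minimal sets with $\text{supp}(F)\subseteq D_1\times\cdots\times D_n$; coordinates with $|D_i|=1$ are constant on $\text{supp}(F)$ and may be discarded, so we may assume $|D_i|=2$ throughout.  Define $F^*(\vecx)=F(\vecx)\prod_i \mathbb{1}_{D_i}(x_i)\in\fclone{\{F\}\cup\uns D}$, which agrees with $F$ on $D_1\times\cdots\times D_n$ and is zero elsewhere.  First I would observe that under the minimality of the $D_i$'s, the ``no generalised $\NEQ$ projection'' hypothesis implies that for every pair $i\ne j$ the projection $P_{ij}$ of $\text{supp}(F)$ onto coordinates $(i,j)$ has at least three elements, because the only two-point subsets of $D_i\times D_j$ that are surjective in both coordinates are the two generalised $\NEQ$ patterns.

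Next, for each pair $(i,j)$ and unaries $U_k\in\uns D$ strictly positive on $D_k$, I would form the weighted marginal
\[
    G_{\vec U}(x,y) \;=\; \sum_{\vec c\in D^{n-2}} F^*\bigl(c_1,\ldots,c_{i-1},x,c_{i+1},\ldots,c_{j-1},y,c_{j+1},\ldots,c_n\bigr)\prod_{k\ne i,j}U_k(c_k)\in\fclone{\{F\}\cup\uns D},
\]
whose support on $D_i\times D_j$ equals $P_{ij}$.  I claim that for some pair $(i,j)$ and some such $\vec U$, the $2\times 2$ matrix $M=M_{G_{\vec U}}$ is \emph{not} of rank $1$.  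This is a version of Topkis's theorem that does not require strict positivity: if every slice $F^*(\vec c,\cdot,\cdot)$ on $D_i\times D_j$ is rank $1$, write each as $u(\vec c)\,\transpose{v(\vec c)}$; a positive linear combination of rank-$1$ matrices is itself rank $1$ only if all $u(\vec c)$'s are parallel to a common vector or all $v(\vec c)$'s are, which forces $F^*$ to factor as $A(x_i)\cdot W(\vecx_{\setminus i})$ or $B(x_j)\cdot W'(\vecx_{\setminus j})$.  If this holds for every pair, then by induction on~$n$ (the marginals of $W$ are proportional to the marginals of $F^*$), $F^*$ is a product of unary functions, contradicting ``$F$ not of rank~$1$''.

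Having obtained $M$ of rank $2$ with at least three non-zero entries, I would finish by case analysis.  If $D_i=D_j=\{e_1,e_2\}$, set $G=G_{\vec U}$: at $(a,b)=(e_1,e_2)$ the first WLM condition fails since $\det M\ne 0$, and the second and third fail since at most one of the four entries is zero, so each of $\{M(e_1,e_1),M(e_2,e_2)\}$ and $\{M(e_1,e_2),M(e_2,e_1)\}$ contains a non-zero entry.  If $D_i\ne D_j$, write $D_i=\{e_1,e_2\}$ and define
\[
    G(x,y) \;=\; \sum_{z\in D} G_{\vec U}(x,z)\,G_{\vec U}(y,z)\in\fclone{\{F\}\cup\uns D},
\]
whose $2\times 2$ matrix on $D_i^2$ is $M\,\transpose{M}$.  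Its determinant equals $(\det M)^2>0$; its off-diagonal entry is the inner product of the two rows of $M$, which is strictly positive because the $\ge 3$-non-zero condition on $M$ forces some column of $M$ to have both entries non-zero; and the diagonal entries are sums of squared non-zero rows.  All four entries of $M\,\transpose{M}$ are positive and the determinant is non-zero, so $G$ fails WLM at $(e_1,e_2)$.

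The main obstacle is the middle paragraph's weighted-Topkis claim: without the strict positivity that Topkis's theorem relies on, one must exploit the fact that ``rank $\le 1$'' is a codimension-$1$ algebraic condition in the space of $2\times 2$ matrices that cannot be preserved under arbitrary positive linear combinations unless the slices share a common rank-$1$ ``direction'', and then iterate the resulting factorisation across variable pairs to reach a contradiction with the non-rank-$1$ hypothesis on $F$.  The ``no generalised $\NEQ$'' hypothesis enters only via the $\ge 3$-entries bound established in the first paragraph, which is what rules out gen-$\NEQ$-shaped $M$'s that would otherwise give WLM-satisfying $G$'s in the final case analysis.
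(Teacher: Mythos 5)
There is a genuine gap at the heart of your second paragraph, in the step you label ``a version of Topkis's theorem that does not require strict positivity''. The assertion that a positive linear combination of rank-$1$ nonnegative matrices can only be rank~$1$ if all the $u(\vec c)$'s are parallel or all the $v(\vec c)$'s are parallel is false: with $E_{pq}$ denoting the $2\times2$ matrix units, $E_{11}+E_{12}+E_{21}+E_{22}$ is the all-ones matrix, which has rank~$1$, while neither the left factors nor the right factors are parallel. The failure persists in your actual setting, where the coefficients are product weights $\prod_k U_k(c_k)$ and the hypothesis is that \emph{every} positive weighting gives a rank-$1$ marginal: for $F^*(x_1,x_2,x_3,x_4)=\EQ(x_1,x_3)\,\EQ(x_2,x_4)$ and the pair $(i,j)=(1,2)$, every positively weighted marginal over $x_3,x_4$ is rank~$1$ (its determinant vanishes identically), yet $F^*$ does not factor as $A(x_1)\,W(x_2,x_3,x_4)$ or $B(x_2)\,W'(x_1,x_3,x_4)$. (This particular $F^*$ has generalised-$\NEQ$ projections, but you state explicitly that your weighted-Topkis step does not use that hypothesis, so the example refutes the step as you formulate it.) Consequently the chain ``all slices rank $1$ $\Rightarrow$ common direction $\Rightarrow$ factorisation $\Rightarrow$ induction $\Rightarrow$ $F$ rank $1$'' does not go through, and this is exactly where the real difficulty of the lemma lies: zeros in the support are what make rank-$1$-ness of marginals fail to propagate.

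The conclusion you want is true, and the paper obtains it by a device you are missing: transfer to a strictly positive function before invoking Topkis. One maps the support to $\{0,1\}^n$ via bijections $\rho_i$, applies $T^{\otimes n}$ with $T=\left(\begin{smallmatrix}2&1\\1&2\end{smallmatrix}\right)$ so that the resulting function is strictly positive, notes that rank~$1$ is preserved in both directions under the invertible $T$ (Lemma~\ref{lem:bool_facts}(\ref{pt:rk_basisfree})), applies genuine Topkis (which does need strict positivity) to get a non-rank-$1$ pinned binary function $B$, and then observes that the pinning of $T^{\otimes n}F_\rho$ is exactly a $T$-weighted marginal of $F$: $M_B=T\,M_G\,\transpose{T}$, so the weighted marginal $G$ (with the strictly positive unary weights $T_{c_k\,\cdot}$) is itself not of rank~$1$. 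That supplies precisely the pair $(i,j)$ and weights $\vec U$ your second paragraph asserts to exist. Your first paragraph (minimality plus no generalised $\NEQ$ forces every binary projection to have at least three elements) and your final case analysis via $M\transpose{M}$ are correct and essentially coincide with the paper's use of $H(y',y'')=\sum_y G(y'\!,y)G(y''\!,y)$; only the middle step needs to be replaced by the strictly-positive-transform argument.
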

\begin{proof}
    Let the support of $F$ be contained in
    $\dom_1\times\dots\times\dom_n$ where $|\dom_i|=2$ for all $i$.
    Choose bijections $\rho_i\colon\{0,1\}\to \dom_i$ for each $1\leq
    i\leq n$.  Define $F_\rho\colon\{0,1\}^n\to\posQ$ by
    \begin{equation*}
        F_\rho(x_1,\dots,x_n)=F(\rho_1(x_1),\dots,\rho_n(x_n))
    \end{equation*}
    for all $x_1,\dots,x_n\in\{0,1\}$.  Let $T =
    \left(\begin{smallmatrix}2 & 1 \\ 1 & 2\end{smallmatrix}\right)$
    and note that $T^{\otimes n}F_\rho$ is strictly positive.  Since
    $F$ is not of rank~$1$, $F_\rho$ is not of rank~$1$, so by
    Lemma~\ref{lem:bool_facts} part~(\ref{pt:rk_basisfree}),
    $T^{\otimes n}F_\rho$ is not of rank~$1$.  By Lemma
    \ref{lem:bool_facts} part~(\ref{pt:topkis}), there is a function
    $B\colon\{0,1\}^2\to\posQ$ of the following form that is not of
    rank~$1$.
    \begin{equation*}
        B(x_i,x_j) = (T^{\otimes n} F_\rho)
        (c_1,\dots,c_{i-1},x_i,c_{i+1},\dots,c_{j-1},x_j,c_{j+1},\dots,c_n).
    \end{equation*}

    For all indices $k\in\{1,\dots,n\}\setminus\{i,j\}$, define
    $U_k\in\uns{D}$ by $U_k(\rho_k(x_k))=T_{c_k x_k}$ for all
    $x_k\in\{0,1\}$, and $U_k(z)=0$ if $z\notin D_k$.  Define
    $G,H\colon\dom^2\to\posQ$ and $G_{\rho_i,\rho_j},
    H_{\rho_i,\rho_i}\colon \{0,1\}^2\to\posQ$ as follows. Note in
    these definitions that~$i$ and~$j$ are fixed, but $\rho_i$ and
    $\rho_j$ are used as subscripts in the name of some of the
    functions as a reminder of the bijections that are being applied
    to the inputs.  Thus, in $H_{\rho_i, \rho_i}$, the bijection
    $\rho_i$ is applied to both arguments, even though the function
    depends on both $i$ and~$j$.
    \begin{align*}
        G(y_i,y_j)  &= \sum \left(
                           \prod_{k\neq i,j} U_k(y_k)
                       \right)F(y_1,\dots,y_n)
                    && \text{ for all $y_i,y_j\in\dom$} \\
        G_{\rho_i,\rho_j}(x_i,x_j)
                    &= \sum \left(
                           \prod_{k\neq i,j} T_{c_k,x_k}
                       \right)F_\rho(x_1,\dots,x_n)
                    && \text{ for all $x_i,x_j\in\{0,1\}$} \\
        H(y'\!,y'') &= \sum_{y\in\dom} G(y'\!,y)G(y''\!,y)
                    &&\text{ for all $y'\!,y''\in\dom$} \\
        H_{\rho_i,\rho_i}(x'\!,x'')
                    &= \sum_{x\in\{0,1\}}
                           G_{\rho_i,\rho_j}(x'\!,x)G_{\rho_i,\rho_j}(x''\!,x)
                    && \text{ for all $x'\!,x''\in\{0,1\}$}
    \end{align*}
    where the first sum is over all $y_1, \dots, y_{i-1}, y_{i+1},
    \dots, y_{j-1}, y_{j+1}, \dots, y_n\in D$ and the second sum is
    over all $x_1, \dots, x_{i-1}, x_{i+1}, \dots, x_{j-1}, x_{j+1},
    \dots, x_n\in\{0,1\}$.

    Note that $M_{H_{\rho_i,\rho_i}} = M_{G_{\rho_i,\rho_j}}
    \transpose{M_{G_{\rho_i,\rho_j}}} = T^{-1} M_B
    \transpose{(T^{-1})} T^{-1} M_B^t \transpose{(T^{-1})}$ where
    $\transposesymb$ denotes transpose. Taking determinants and
    applying Lemma \ref{lem:bool_facts} part~(\ref{pt:binary}) this
    implies that $H_{\rho_i,\rho_i}$ is not of rank~$1$.  Also, since
    $T$ is strictly positive, the support of $G_{\rho_i,\rho_j}$ is
    the binary projection of the support of $F_\rho$ onto $i$ and $j$ which, by
    assumption, is not $\NEQ$ or $\EQ$.  Hence $H_{\rho_i,\rho_i}$ is
    strictly positive but not of rank~$1$.  Again using
    Lemma~\ref{lem:bool_facts} part~(\ref{pt:binary}) we see that $H$
    is a witness that $\{F\}\cup \uns D$ is not weakly log-modular.
\end{proof}

\begin{lemma}\label{lem:logmod_balanced}
    Every conservative weakly log-modular weighted constraint language
    is balanced.
\end{lemma}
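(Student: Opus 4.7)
I argue the contrapositive: if $\nfuns$ is conservative but not balanced, then it is not weakly log-modular. The strategy is to restrict a non-balanced function to the essentially pseudo-Boolean regime and invoke Lemma~\ref{lem:balance_wmod_helper}, with a ``collapse'' step to deal with any generalized $\NEQ$ projections of the support.

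Assume $\nfuns$ is not balanced, and pick $F\in\fclone{\nfuns}$ of arity $n\geq 2$ and $k$ with $0<k<n$ whose associated matrix fails to have block-rank~$1$. By Lemma~\ref{lem:bool_facts}(\ref{pt:br1_2by2}) and~(\ref{pt:binary}), there exist $\vecu,\vecu'\in D^k$ and $\vecv,\vecv'\in D^{n-k}$ such that the $2\times 2$ submatrix $N$ of the matrix of $F$ at these tuples has rank~$2$ and at least three non-zero entries. Using conservativity, let $U_i\in\uns{D}$ be the $\{0,1\}$-valued indicator of $\{u_i,u_i'\}$ for $i\leq k$ and of $\{v_{i-k},v_{i-k}'\}$ for $i>k$, and define $G(\vecx)=F(\vecx)\prod_i U_i(x_i)$. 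Then $G\in\fclone{\nfuns}$ is essentially pseudo-Boolean, its values at the four corners coincide with $N$, so $G$ is not of rank~$1$.

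If no binary projection of the support of $G$ is a generalized $\NEQ$, Lemma~\ref{lem:balance_wmod_helper} already produces a binary witness in $\fclone{\nfuns}$ to failure of weak log-modularity. Otherwise, declare $i\sim j$ when $i=j$ or the projection of the support of $G$ onto $(i,j)$ is a generalized~$\NEQ$. A short verification using the essentially pseudo-Boolean structure shows that each such projection is the graph of a bijection between two $2$-element sets, and these bijections compose along chains; hence $\sim$ is an equivalence relation. Let $C_1,\ldots,C_m$ be its classes, pick representatives $r_s\in C_s$ with composed bijections $\phi_{s,i}\colon \dom_{r_s}\to\dom_i$, and set
\[
G'(y_1,\ldots,y_m)=\sum_{\vecx}G(\vecx)\prod_{s=1}^m \EQ(x_{r_s},y_s).
\]
Then $G'\in\fclone{\nfuns}$ is essentially pseudo-Boolean, and because $r_s\not\sim r_t$ for $s\neq t$, no binary projection of the support of $G'$ is a generalized~$\NEQ$.

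The main technical step is showing that $G'$ is not of rank~$1$, which requires care because some classes may \emph{straddle} the split, containing both row coordinates ($\leq k$) and column coordinates ($>k$). The key observation is that if at least three of the corners $(\vecu,\vecv),(\vecu,\vecv'),(\vecu',\vecv),(\vecu',\vecv')$ lie in the support of $G$, then the consistency requirement $x_i=\phi_{s,i}(x_{r_s})$ together with bijectivity of $\phi_{s,i}$ forces $\vecu,\vecu'$ to agree on every row coordinate of each straddling class and $\vecv,\vecv'$ to agree on every column coordinate. All the differences between $\vecu,\vecu'$ (resp.\ $\vecv,\vecv'$) therefore live in pure row (resp.\ pure column) classes, so the four corners produce four distinct $(y_1,\ldots,y_m)$-tuples forming a proper $2\times 2$ rectangle of $G'$'s matrix (for the split into row- and column-class representatives), on which $G'$ takes exactly the entries of~$N$. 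Since a rank-$1$ function yields a rank-$1$ matrix in every split, $G'$ cannot be of rank~$1$. Now Lemma~\ref{lem:balance_wmod_helper} gives a binary function in $\fclone{\nfuns}$ violating weak log-modularity, the desired contradiction.
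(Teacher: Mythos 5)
Your proposal is correct and follows the paper's overall strategy: the contrapositive, extraction of a non-block-rank-$1$ $2\times 2$ submatrix $N$ via Lemma~\ref{lem:bool_facts}~(\ref{pt:br1_2by2}) and~(\ref{pt:binary}), restriction to an essentially pseudo-Boolean $G$ using conservative unary indicators, elimination of generalised-$\NEQ$ projections, and the final appeal to Lemma~\ref{lem:balance_wmod_helper}. Where you diverge is the elimination step: the paper sums out one variable at a time, using the fact that $N$ has at least three non-zero entries to show that the submatrix $N$ is preserved exactly after each single collapse, and then iterates; you collapse all at once via the equivalence relation generated by generalised-$\NEQ$ projections and a choice of class representatives. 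Your route works, but it carries the extra obligations you only sketch: transitivity of $\sim$ (the projections are graphs of bijections on two-element sets and these compose along the support), the straddling-class argument forcing $\vecu,\vecu'$ to agree on row coordinates and $\vecv,\vecv'$ on column coordinates of such classes, and --- the one point you assert without detail --- that $G'$ takes \emph{exactly} the entries of $N$ on the exhibited rectangle. This last claim does hold: within the support of $G$ every coordinate is determined by its class representative, so each support corner is the unique support point with its representative coordinates, and for a zero corner the three remaining (support) corners force the point determined by its representative coordinates to be that corner itself, so it contributes $0$ to $G'$; moreover a difference between $\vecu$ and $\vecu'$ at a non-representative coordinate propagates to the representative through the bijection, so the rectangle is proper. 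The paper's one-variable-at-a-time iteration avoids this case analysis by keeping $N$ intact on the nose at each step; your one-shot construction is a fine alternative provided the zero-corner and properness verifications are written out.
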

\begin{proof}
    Let $\nfuns$ be a conservative weighted constraint language that
    is not balanced.  We will show that $\nfuns$ is not weakly
    log-modular.

    By the definition of balance, there is a function
    $F\in\fclone{\nfuns}$ of arity $n$ and a partition
    $\vecx=(\vecu,\vecv)$ of its $n$ variables, such that the matrix
    $F(\vecu,\vecv)$ is not of block-rank~$1$.  By
    Lemma~\ref{lem:bool_facts} part~(\ref{pt:br1_2by2}) there is a
    two-by-two submatrix $N=N_{F,\vecu,\vecu'\!,\vecv,\vecv'}$ that is
    not of block-rank~$1$.

    Construct an essentially pseudo-Boolean function $G$ from $F$ as
    follows.  For all $1\leq i\leq n$ let $U_i\in\fclone{\uns
      D}\subseteq\fclone{\nfuns}$ be the indicator function of
    $D^{i-1}\times D_i\times D^{n-i}\!$, where $D_i=\{u_i,u'_i\}$ for
    all $1\leq i\leq k$ and $D_i=\{v_i,v'_i\}$ for all $k< i\leq
    n$.  Let $G=F\prod_{i=1}^n U_i$.  Then $N_{G, \vecu, \vecu'\!,
      \vecv, \vecv'}=N$ is not of block-rank~$1$, and $G$ is essentially
    pseudo-Boolean.

    If the binary projection of the support of $G$ onto two indices
    $i,j$ is a generalised $\NEQ$ $\{(x_i,x_j),(y_i,y_j)\}$, construct
    the tuple
    $(G'\!,\rho(\vecu),\rho(\vecu'),\rho(\vecv),\rho(\vecv'))$ from
    $(G,\vecu,\vecu'\!,\vecv,\vecv')$ as follows. Let
    $\rho\colon\dom^{n}\to\dom^{n-1}$ be the projection operator
    sending $\vecx$ to $x_1,\dots,x_{i-1},x_{i+1},\dots,x_n$ and let
    $G'(\vecx)=\sum_{\rho(\vecx')=\vecx}G(\vecx')$ for all
    $\vecx\in\dom^{n-1}\!$.  Note that, for all $\vecx\in D^n\!$,
    $G(\vecx)\neq G'(\rho(\vecx))$ implies that $G(\vecx)=0$ because
    $G(\vecx)=0$ unless $x_i\neq x_j$.  Note that~$N$ has
    at least three non-zero entries by Lemma \ref{lem:bool_facts}
    part~(\ref{pt:binary}). So the corresponding three pairs out of
    $((\vecu, \vecv)_i, (\vecu, \vecv)_j)$, $((\vecu, \vecv')_i,
    (\vecu, \vecv')_j)$, $((\vecu'\!, \vecv)_i, (\vecu'\!, \vecv)_j)$, and
    $((\vecu'\!, \vecv')_i ,(\vecu'\!, \vecv')_j)$ must each be either
    $(x_i,x_j)$ or $(y_i,y_j)$. But then the fourth pair must also be
    $(x_i,x_j)$ or $(y_i,y_j)$, which implies that $N_{G'\!, \rho(\vecu),
      \rho(\vecu'), \rho(\vecv), \rho(\vecv')} = N$.  Also, $G'$ is
    essentially pseudo-Boolean, and $G'$ is obtained by summing the $i$'th
    variable, so $G'\in\fclone{G}$.

    Repeating this process if necessary, we obtain
    $(G'\!,\vecx,\vecx'\!,\vecy,\vecy')$ such that $G'$ is an
    essentially pseudo-Boolean function in $\fclone{\nfuns,\uns D} =
    \fclone{\nfuns}$ and none of the binary projections of the support
    of~$G'$ is a generalised $\NEQ$, and
    $N_{G'\!,\vecx,\vecx'\!,\vecy,\vecy'}$ is not of block-rank~$1$.  So, in
    particular, $G'$ is not of rank~$1$. By
    Lemma~\ref{lem:balance_wmod_helper}, $\{G'\} \cup \uns D$ is not
    weakly log-modular, so $\fclone{\nfuns}$ is not weakly log-modular.
\end{proof}

We now return to Theorem~\ref{thm:main} and prove the tractable case.
The proof relies on an important theorem of Cai, Chen and Lu about the
complexity of exact evaluation.

\begin{lemma}\cite{NonNegExact}
\label{lem:NonNegExact_dichotomy}
    Let $\nfuns$ be a finite, weighted constraint language taking
    non-negative algebraic real values. If $\nfuns$ is balanced, then
    $\nCSP(\nfuns)$ is in \FP{}, and otherwise $\nCSP(\nfuns)$ is
    \numP{}-hard.
\end{lemma}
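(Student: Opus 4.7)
The plan is to mirror the proof of Cai, Chen and Lu~\cite{NonNegExact}, which itself extends Bulatov's unweighted $\nCSP$ dichotomy and the simplification of Dyer and Richerby~\cite{DReffective}. Both directions are required: hardness when $\nfuns$ is unbalanced, and a polynomial-time algorithm when it is balanced.

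For the hardness direction, I would begin by extracting a small binary witness. Assuming non-balance, there is some $F \in \fclone{\nfuns}$ of arity $n \geq 2$ and a partition of its arguments such that the associated matrix fails to have block-rank~$1$. Lemma~\ref{lem:bool_facts} part~(\ref{pt:br1_2by2}) then refines this witness to a $2 \times 2$ submatrix of rank $2$, which by non-negativity has at most one zero entry. I would realise this submatrix as (or extract $\numP$-hardness from) a binary gadget in $\fclone{\nfuns}$; since conservativity is not assumed here, the gadget has to be built using the full pps-definability of the clone (equalities and summation over auxiliary variables) rather than unary pinnings. The reduction is then from a canonical $\numP$-hard weighted graph homomorphism problem, using polynomial interpolation on tensor powers of the gadget matrix to recover target counts; enough distinct evaluation points must be produced within $\fclone{\nfuns}$, which is nontrivial in the absence of cancellation.

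For the tractable direction, I would give a recursive algorithm exploiting block-rank~$1$ structure at each step. Given an instance $I$, marginalise over a chosen variable $v$; balance guarantees that the resulting intermediate function, which still lies in $\fclone{\nfuns}$ and is therefore still balanced, has a matrix of block-rank~$1$ and so factors across the connected components of its bipartite support graph. Each component yields a strictly smaller subinstance of the same form, so iterating gives polynomial running time, with block identification and zero-testing carried out exactly using the algebraic structure of the weights. The main obstacle, and the core of the Cai--Chen--Lu argument, is the hardness direction: producing a useful binary gadget from an arbitrary non-balance witness without cancellation, and then showing that polynomial interpolation succeeds despite non-negativity. They handle this through an algebraic analysis of characteristic polynomials of principal submatrices combined with a reduction from a specific hard homomorphism problem, and a secondary subtlety is ensuring that the tractable-side algorithm performs exact arithmetic over algebraic reals so that the block structure can be recognised correctly.
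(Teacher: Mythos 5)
You should first note that the paper does not prove this statement at all: Lemma~\ref{lem:NonNegExact_dichotomy} is imported wholesale from Cai, Chen and Lu~\cite{NonNegExact} (the text only remarks that ``balance'' specialises to Dyer and Richerby's ``strong balance''~\cite{DReffective} in the unweighted case). So a genuine proof here would amount to reproducing the Cai--Chen--Lu dichotomy, which is a substantial paper in its own right. Your write-up is an outline of that programme rather than a proof: at the two decisive points you explicitly hand the work back to the cited authors (``the core of the Cai--Chen--Lu argument \dots They handle this through \dots''), which is a citation, not an argument.

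Beyond that, the tractable direction as you describe it would not work as stated. The matrix $F_I\bigl((x_1),(x_2,\dots,x_n)\bigr)$ has exponentially many columns, so ``block identification'' and the rank-one factorisation within each block are not operations you can perform; moreover the factor $v(\cdot)$, a function on assignments to the remaining $n-1$ variables, is not in any evident way an instance of $\nCSP(\nfuns)$ of smaller size, so the claimed recursion does not close. The real algorithm (Dyer--Richerby in the unweighted case, extended to weights by Cai--Chen--Lu) goes through structural consequences of balance --- Mal'tsev-type polymorphisms of the support relations and succinct representations (frames/witness functions) of the exponentially large solution sets --- none of which appears in your sketch, and exact arithmetic over algebraic reals is a side issue by comparison. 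On the hardness side, reducing a violation of balance to a $2\times 2$ submatrix that is not of block-rank~$1$ is fine (it is the paper's Lemma~\ref{lem:bool_facts}, part~(\ref{pt:br1_2by2}), together with part~(\ref{pt:binary})), but the two steps you flag --- realising a useful binary gadget inside $\fclone{\nfuns}$ without conservativity, and producing enough distinct interpolation points in the absence of cancellation --- are exactly the content of the \numP{}-hardness proof, and they are asserted rather than carried out. As a description of the shape of \cite{NonNegExact} your proposal is reasonable; as a proof it has these gaps.
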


\begin{theorem}
\label{thm:tractable}
    Let $\nfuns$ be a conservative weighted constraint language taking
    values in $\posQ$.  If $\nfuns$ is weakly log-modular then, for
    any finite $\nfinfuns\subset\nfuns$, $\nCSP(\nfinfuns)\in \FP$.
\end{theorem}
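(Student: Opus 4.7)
The plan is to chain together the two lemmas immediately preceding the theorem statement, namely Lemma~\ref{lem:logmod_balanced} (weak log-modularity, in the conservative setting, implies balance) and Lemma~\ref{lem:NonNegExact_dichotomy} (the Cai--Chen--Lu dichotomy saying that balanced weighted constraint languages admit polynomial-time exact evaluation). The only wrinkle is bridging the two: Lemma~\ref{lem:logmod_balanced} is stated for the infinite conservative language $\nfuns$, whereas Lemma~\ref{lem:NonNegExact_dichotomy} requires a finite input language, and the given $\nfinfuns\subset\nfuns$ need not itself be conservative.

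First, I would apply Lemma~\ref{lem:logmod_balanced} directly to the conservative weakly log-modular $\nfuns$ to deduce that $\nfuns$ is balanced, i.e., that every function in $\fclone{\nfuns}$ of arity at least $2$ has block-rank-$1$ matrix representations for every partition of its arguments. Next, I would observe that since $\nfinfuns\subseteq\nfuns$, every pps-formula over $\nfinfuns\cup\{\EQ\}$ is also a pps-formula over $\nfuns\cup\{\EQ\}$, so $\fclone{\nfinfuns}\subseteq\fclone{\nfuns}$. The block-rank-$1$ property of matrix representations is inherited by every function in this subset, so $\nfinfuns$ is itself balanced in the sense defined in Section~\ref{sec:balance}.

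Finally, $\nfinfuns$ is finite and takes values in $\posQ\subset$ non-negative algebraic reals, so Lemma~\ref{lem:NonNegExact_dichotomy} applies and yields $\nCSP(\nfinfuns)\in\FP$. There is essentially no obstacle here; the substantive content of the tractable direction has already been carried out in Lemma~\ref{lem:logmod_balanced} (translating the combinatorial hypothesis of weak log-modularity into the algebraic hypothesis of balance, which is the form needed to invoke the exact-evaluation dichotomy). The proof of the theorem itself is therefore just the three-line assembly described above.
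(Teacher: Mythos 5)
Your proposal is correct and is essentially identical to the paper's own proof: apply Lemma~\ref{lem:logmod_balanced} to conclude that the conservative language $\nfuns$ is balanced, note that balance is inherited by every finite subset $\nfinfuns\subset\nfuns$ (since $\fclone{\nfinfuns}\subseteq\fclone{\nfuns}$), and then invoke Lemma~\ref{lem:NonNegExact_dichotomy}. The extra care you take about $\nfinfuns$ not being conservative and about $\posQ$ sitting inside the non-negative algebraic reals is exactly the right bookkeeping, which the paper leaves implicit.
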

\begin{proof}
    By Lemma \ref{lem:logmod_balanced}, $\nfuns$ is balanced. Hence,
    every finite $\nfinfuns\subset\nfuns$ is balanced, which implies
    that $\nCSP(\nfinfuns)$ is in \FP{} by Lemma
    \ref{lem:NonNegExact_dichotomy}.
\end{proof}

\section{Valued clones, valued CSPs and relational clones}
\label{sec:vcsp}
 
To define valued clones, we use the same set-up as
Section~\ref{sec:prelims} except that summation is replaced by
minimisation and product is replaced by sum.  Let $D$ be a finite
domain with $|D| \geq 2$ and let $R$ be a codomain with
$\{0,\infty\}\subseteq R$, where $\infty$ obeys the following rules
for all $x\in R$: $x+\infty=\infty$, $x\leq\infty$ and
$\min\{x,\infty\}=x$.  Let 
$\vfuns$ be a subset of $\funs R$ and let $V=\{v_1,\dots,v_n\}$ be a set
of variables.  For each atomic formula $\phi=G(v_{i_1},\dots,v_{i_a})$
we use the notation $f_\phi$ to denote the function represented
by~$\phi$, so $f_\phi(\vecx) = G(x_{i_1},\dots,x_{i_a})$.

A psm-formula (``primitive sum minimisation formula'') is a
minimisation of a sum of atomic formulas.  A psm-formula~$\psi$
over~$\vfuns$ in variables $V'=\{v_1,\dots,v_{n+k}\}$ has the form
\begin{equation} 
    \psi=\min_{v_{n+1},\dots,v_{n+k}}\,\sum_{j=1}^m\phi_j,
\end{equation}
where $\phi_j$ are all atomic formulas over~$\vfuns$ in the
variables~$V'\!$.  The formula~$\psi$ specifies a function
$f_\psi\colon D^n\to R$ in the following way:
\begin{equation} 
    f_\psi(\vecx)=\min_{\vecy\in\dom^k}\sum_{j=1}^m f_{\phi_j}(\vecx,\vecy),
\end{equation}
where $\vecx$ and $\vecy$ are assignments $\vecx\colon \{v_1, \dots,
v_n\}\to \dom$ and $\vecy\colon \{v_{n+1},\dots,\allowbreak v_{n+k}\}
\to \dom$.
 
The \emph{valued clone~$\vclone\vfuns$ generated by~$\vfuns$} is the
set of all functions that can be represented by a psm-formula
over~$\vfuns\cup\{\eq\}$, where $\eq$ is the binary equality function
on $\dom$ given by $\eq(x,x)=0$ and $\eq(x,y)=\infty$ for $x\neq y$.

We next introduce valued constraint satisfaction problems (\VCSP{}s),
which are optimisation problems.  In the work of Kolmogorov and
\Zivny{}~\cite{KZ}, the codomain is $R = \posQ \cup \{\infty\}$.  For
reasons which will be clear below, it is useful for us to extend the
codomain to include irrational numbers.  This will not cause problems
because, with the exception of Theorem~\ref{thm:STPMJN-tractable} we
use only formal calculations from their papers, not complexity results.  For
Theorem~\ref{thm:STPMJN-tractable}, we avoid irrational numbers and,
in fact, restrict to cost functions taking values in
$\{0,\infty\}\subset R$.  Furthermore, all the real numbers we use are
either rationals or their logarithms so are efficiently computable.

Let $\extR = {\mathbb R}_{\geq 0} \cup \{\infty\}$ be the set of
non-negative real numbers together with $\infty$.

\begin{definition}
    A \emph{cost function} is a function $D^k\to\extR$.  A
    \emph{valued constraint language} is a set of cost functions
    $\vfuns\subseteq\funs{\extR}$.
\end{definition}

Given a valued constraint language $\vfuns$, $\VCSP(\vfuns)$ is the
problem of taking an instance $\psi$, a psm-formula consisting of a
sum of $m$ atomic $\vfuns$-formulas over $n$ free variables $\vecx$
and computing the value
\begin{equation*}
    \mincost(\psi) = \min_{\vecx\in D^n } f_\psi(\vecx)\,,
\end{equation*}
where $f_\psi$ is the function defined by~$\psi$.
 
We typically use the notation of Kolmogorov and \Zivny.  An instance
is usually denoted by the letter $I$.  In this case, we use $f_I$ to
denote the function specified by the psm-formula corresponding to
instance~$I$, so the value of the instance is denoted
by~$\mincost(I)$.  The psm-formula corresponding to~$I$ is a sum of
atomic formulas (since all of the variables are free variables).  We
refer to each of these atomic formulas as a \emph{valued constraint}
and we represent these by the multiset $T$ of all valued constraints
in the instance~$I$.  For each valued constraint $t\in T$ we use $k_t$
to denote its arity, $f_t$ to denote the function represented by the
corresponding atomic formula, and $\sigma_t$ to denote its scope,
which is given as a tuple $(i(t,1),\ldots,i(t,k_t))\in \{1, \dots, n\}^{k_t}$
containing the indices of the variables in the scope. Thus,
\begin{equation}
\label{eq:instance}
    f_I(\vecx)=\sum_{t \in T} f_t(x_{i(t,1)},\dots,x_{i(t,k_t)})\,.
\end{equation}
For convenience, we use $\vecx[\sigma_t]$ as an abbreviation for the
tuple $(x_{i(t,1)},\dots,x_{i(t,k_t)})$.  In this abbreviated
notation, the function defined by instance~$I$ may be written
$f_I(\vecx)=\sum_{t \in T} f_t(\vecx[\sigma_t])$.

Now, let $\bddQ=[0,1]\cap{\mathbb{Q}}$.  For reasons which will be
clear below, it will be useful to work with weight functions in $\funs
\bddQ$. For such a weight function~$F$, let the cost function $\toVal
F\in \funs\extR$ be the function defined by
\begin{equation*}
     (\toVal F)(\vecx) = \begin{cases}
         -\ln F(\vecx) &\text{ if $F(\vecx)>0$}\\
         \infty        &\text{ if $F(\vecx)=0$.}
     \end{cases}
\end{equation*}
For example, $\toVal \EQ=\eq$,
where $\EQ$ and $\eq$ are
the functions defined earlier.
(Often, as here, we use a lower-case name like $\eq$ and 
an upper case name like $\EQ$
to indicate such a relationship.)
For a weighted constraint language
$\nfuns\subseteq \funs{ \bddQ}$, let $\toVal\nfuns$ be the valued
constraint language defined by $\toVal\nfuns = \{ \toVal F \mid F \in
\nfuns\}$.

There is a natural bijection between instances of $\nCSP(\nfuns)$ and
$\VCSP(\toVal\nfuns)$, obtained by replacing each function $F_t$ in
the former by the function $f_t=\toVal{F_t}$ in the latter, keeping
the scopes unchanged.  Note that $f_I(\vecx)=-\ln F_I(\vecx)$, for any
assignment~$\vecx$, with the convention $-\ln0=\infty$.
 
\begin{definition}
\label{def:conservativevalued} 
    A valued constraint language is \emph{conservative} if
    it contains all arity-1 cost functions $D\to\extR$.
\end{definition}

The mapping $F\mapsto \toVal F$ from $\funs{ \bddQ}$ to $\funs{
  \extR}$ is not surjective because there are real numbers that are
not the logarithm of any rational.  For the same reason, the valued
constraint language $\toVal \nfuns$ is not conservative (for any
weighted constraint language $\nfuns$).  Finally, note that we have
only defined $\toVal F$ for $F\in\funs{ \bddQ}$.  The obvious
extension to $F\in\funs{\posQ}$ would produce negative-valued cost
functions and we wish to avoid this since Kolmogorov and
\Zivny{}~\cite{KZ} do not allow it.
   
\begin{definition}  
    A cost function is \emph{crisp} \cite{CCJK2003:Soft} if
    $f(\vecx)\in\{0,\infty\}$ for all~$\vecx$.
\end{definition}

\begin{definition}
    For any cost function~$f$, let $\toRel f$ be the relation defined
    by $\toRel f = \{\vecx \mid f(\vecx)<\infty\}$.
\end{definition}

Thus, any cost function~$f$ can be associated with its underlying
relation.  Similarly, we can represent any relation by a crisp cost
function $f$ for which $f(\vecx)=0$ if and only if $\vecx$ is in the
relation.  A \emph{crisp constraint language} is a set of relations,
which we always represent as crisp cost functions, not as functions
with codomain $\{0,1\}$.  For a valued constraint language~$\vfuns$,
the crisp constraint language $\toRel \vfuns$ is given by $\toRel
\vfuns = \{ \toRel f \mid f\in \vfuns\}$.

\begin{definition}
\label{def:conservativelang}
    A crisp constraint language is \emph{conservative} if it
    includes all arity-1 relations.
\end{definition}

A \emph{relational clone} is simply a crisp constraint language $\toRel
{\vclone \vfuns}$ for a valued constraint language $\vfuns$.
 
\begin{lemma}\label{lem:closures}
    Suppose $\vfuns\subseteq\funs\extR$.  Then
    $\vclone{\toRel\vfuns}=\toRel{\vclone\vfuns}$.
\end{lemma}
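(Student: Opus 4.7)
The plan is to establish both containments by observing that the $\toRel$ operator commutes with the two psm-constructors (summation of cost functions and minimisation over bound variables), together with the trivial fact that $\toRel\eq=\eq$, so that the bijection between psm-formulas over $\vfuns\cup\{\eq\}$ and psm-formulas over $\toRel\vfuns\cup\{\eq\}$ (obtained by replacing each atomic formula $G(\vecx[\sigma])$ with $(\toRel G)(\vecx[\sigma])$) commutes with taking the underlying relation.

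First I would record the key pointwise identity. Let $\psi=\min_{v_{n+1},\dots,v_{n+k}}\sum_{j=1}^m\phi_j$ be a psm-formula over $\vfuns\cup\{\eq\}$, with each $\phi_j$ atomic and defining the cost function $f_{\phi_j}$, and let $\psi'$ be the formula obtained by replacing each $\phi_j$ with the atomic formula using $\toRel{f_{\phi_j}}$ on the same scope. Since every cost function $g\in\funs\extR$ satisfies $(\toRel g)(\vecz)=0$ iff $g(\vecz)<\infty$, and since a sum in $\extR$ is finite iff every summand is finite, we have, for every assignment~$\vecx$,
\begin{align*}
(\toRel{f_\psi})(\vecx)=0
&\iff \exists\,\vecy\in D^k\ \textstyle\sum_j f_{\phi_j}(\vecx,\vecy)<\infty
 \iff \exists\,\vecy\ \forall j\ f_{\phi_j}(\vecx,\vecy)<\infty\\
&\iff \exists\,\vecy\ \textstyle\sum_j (\toRel{f_{\phi_j}})(\vecx,\vecy)=0
 \iff f_{\psi'}(\vecx)=0,
\end{align*}
so $\toRel{f_\psi}=f_{\psi'}$. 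Note also that $\toRel\eq=\eq$, so the syntactic correspondence $\psi\leftrightarrow\psi'$ really is a bijection between psm-formulas over $\vfuns\cup\{\eq\}$ and psm-formulas over $\toRel\vfuns\cup\{\eq\}$.

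For the containment $\toRel{\vclone\vfuns}\subseteq\vclone{\toRel\vfuns}$, pick any $f\in\vclone\vfuns$, expressed as $f_\psi$ for some psm-formula $\psi$ over $\vfuns\cup\{\eq\}$. Then the associated $\psi'$ is a psm-formula over $\toRel\vfuns\cup\{\eq\}$, and by the identity above $\toRel f=f_{\psi'}\in\vclone{\toRel\vfuns}$. Conversely, for $\vclone{\toRel\vfuns}\subseteq\toRel{\vclone\vfuns}$, take $h\in\vclone{\toRel\vfuns}$, expressed as $f_{\psi'}$ over $\toRel\vfuns\cup\{\eq\}$; lifting each atomic formula to use the original cost function in $\vfuns$ yields a psm-formula $\psi$ over $\vfuns\cup\{\eq\}$, so $f_\psi\in\vclone\vfuns$ and $h=f_{\psi'}=\toRel{f_\psi}\in\toRel{\vclone\vfuns}$.

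No genuine obstacle arises; the only point requiring attention is to check that both constructors of a psm-formula preserve the underlying-relation operator, which is immediate because min and sum in $\extR$ behave correctly with respect to finiteness. The fact that $\toRel$ is idempotent on crisp cost functions (so $\toRel{\toRel G}=\toRel G$) ensures that the lifting step in the second containment introduces no ambiguity.
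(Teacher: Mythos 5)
Your proof is correct and is essentially the paper's argument spelled out in full: the paper's one-line proof just observes that the map sending finite values to $0$ and $\infty$ to $\infty$ is a semiring homomorphism from $(\extR,\min,+)$ to $(\{0,\infty\},\min,+)$, which is precisely your verification that $\toRel{\cdot}$ commutes with the summation and minimisation constructors of a psm-formula (together with $\toRel\eq=\eq$). The two containments then follow exactly as you describe, so nothing further is needed.
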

\begin{proof}
    The mapping $\rho\colon\extR\to\{0,\infty\}$ defined by
    $\rho(\infty)=\infty$ and $\rho(x)=0$, for all $x<\infty$, is a
    homomorphism of semirings, from $(\extR,\min,+)$ to
    $(\{0,\infty\},\min,+)$.
\end{proof}

\section{STP/MJN multimorphisms and weak log-supermodularity}\label{sec:stp}

In \cite[Corollary~3.5]{KZ}, Kolmogorov and \Zivny\ give a tractability
criterion for conservative \VCSP{}s.  In particular, they show that
the \VCSP{} associated with a conservative valued constraint
language~$\vfuns$ is tractable iff $\vfuns$ has an STP/MJN
multimorphism.

We define STP/MJN multimorphisms below.  In this section, we show
(see Theorem~\ref{thm:wlsm_to_stp} below) that, if a weighted
constraint language~$\nfuns\in\funs{\bddQ}$ is weakly
log-supermodular, then the corresponding valued constraint language
$\toVal \nfuns$ has an STP/MJN multimorphism. In 
Section~\ref{sec:lsm_bis_easy}, this will enable us to use such a
multimorphism
(via the work of Kolmogorov and \Zivny \cite{KZ} and Cohen, Cooper and
Jeavons \cite{CCJ}) to prove \nBIS{}-easiness and \LSM{}-easiness of the
weighted counting CSP.
 
Our proof of Theorem~\ref{thm:wlsm_to_stp} relies on work by
Kolmogorov and \Zivny{} \cite{KZ} and
Takhanov~\cite{TakhanovFull}.  We start with some general definitions.
Most of these are from~\cite{KZ}, but some care is
required since some of the definitions in~\cite{KZ} differ from those in
\cite{CCJ}.

\begin{definition}
    A \emph{$k$-ary operation on~$D$} is a function from $D^k$ to $D$.
    An \emph{operation on~$D$} is a $k$-ary operation, for some $k$.
\end{definition}

We drop the ``on~$D$'' when the domain~$D$ is clear from the context.

\begin{definition}
    A $k$-tuple $\langle\rho_1, \dots, \rho_k\rangle$ of $k$-ary
    operations $\rho_1, \dots, \rho_k$ is \emph{conservative} if, for
    every tuple $\vecx=(x_1, \dots, x_k)\in D^k\!$, the multisets $\{\{x_1,
    \dots, x_k\}\}$ and $\{\{\rho_1(\vecx), \dots, \rho_k(\vecx)\}\}$
    are equal.
\end{definition}

Note that we have now defined conservative operations and conservative
constraint languages (weighted, valued and crisp).  There are
connections between these notions of ``conservative'' but we do not
need these here.

\begin{definition}\label{def:multimorphism}
    $\langle\rho_1,\dots,\rho_k\rangle$ is a
    \emph{multimorphism} of an arity-$r$ cost function~$f$ if, for all
    $\vecx^1\!,\dots,\vecx^k\in D^r\!$, we have:
    \begin{equation*}
        \sum_{i=1}^k f(\rho_i(x_1^1,\ldots,x_1^k), \dots,
                      \rho_i(x_r^1,\ldots,x_r^k)) 
        \leq \sum_{i=1}^k f(\vecx^i).
    \end{equation*}
\end{definition}

\begin{definition} \label{def:multimorphismvfuns}
    $\langle\rho_1,\dots,\rho_k\rangle$ is a multimorphism of a valued
    constraint language $\vfuns$ if it is a multimorphism of every
    $f\in\vfuns$.
\end{definition}

These definitions imply the following.

\begin{observation}
\label{obs:unarymultimorphism}
    If $\langle\rho_1,\dots,\rho_k\rangle$ is conservative, then it is
    a multimorphism of every unary cost function~$f$.
\end{observation}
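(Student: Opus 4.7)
The plan is to unpack Definition~\ref{def:multimorphism} in the case of arity $r=1$ and observe that conservativity gives the required inequality immediately, in fact with equality. So I would fix an arbitrary unary cost function $f\colon D\to\extR$ and an arbitrary tuple $(x^1,\dots,x^k)\in D^k$. The multimorphism condition specialised to $r=1$ demands that
\begin{equation*}
    \sum_{i=1}^k f(\rho_i(x^1,\dots,x^k)) \;\leq\; \sum_{i=1}^k f(x^i).
\end{equation*}

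The key step is to invoke conservativity of $\langle\rho_1,\dots,\rho_k\rangle$ applied to the tuple $\vecx=(x^1,\dots,x^k)$: this tells us that the multisets $\{\{\rho_1(\vecx),\dots,\rho_k(\vecx)\}\}$ and $\{\{x^1,\dots,x^k\}\}$ coincide. Since the value $\sum_{i=1}^k f(y_i)$ depends only on the multiset $\{\{y_1,\dots,y_k\}\}$ and not on the ordering, the two sides of the displayed inequality are in fact equal, so the inequality holds trivially. There is no real obstacle here: the observation is essentially a direct restatement of the definition of conservativity in the case of unary cost functions, and extending to all of $\vfuns$ via Definition~\ref{def:multimorphismvfuns} is immediate.
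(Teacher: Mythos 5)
Your argument is correct and is exactly the intended one: the paper states this as an immediate Observation following the definitions, with no separate proof, precisely because conservativity applied to the tuple $(x^1,\dots,x^k)$ makes the two multisets (and hence the two sums of $f$-values) coincide, giving the multimorphism inequality with equality. Nothing is missing.
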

 
\begin{definition} 
    Suppose $M \subseteq D^2\!$.  A pair $\langle\sqcap,\sqcup\rangle$
    of binary operations is a \emph{symmetric tournament pair} (STP)
    on~$M$ if it is conservative and both operations are commutative
    on~$M$. We say that it is an STP if it is an STP on~$D^2\!$.
\end{definition}

\begin{definition}
\label{def:MMJJNN}
    Suppose $M \subseteq D^2\!$.  A triple $\langle\mathtt{Mj1}, \mathtt{Mj2},
    \mathtt{Mn3}\rangle$ of ternary operations is an \emph{MJN
      on $ M$} if it is conservative and, for all triples $(a, b, c)
    \in D^3$ with $\{\{a, b, c\}\} = \{\{x, x, y\}\}$ where $x$ and
    $y$ are distinct and $(x,y)\in M$, we have $\mathtt{Mj1}(a, b,
    c)=\mathtt{Mj2}(a, b, c)=x$ and $\mathtt{Mn3}(a, b, c)=y$.
\end{definition}
 
The reason that Definition~\ref{def:MMJJNN}
only deals with the case in which~$x$ and~$y$ are distinct
is that any conservative triple 
$\langle\mathtt{Mj1}, \mathtt{Mj2},
    \mathtt{Mn3}\rangle$ 
    satisfies
    $\mathtt{Mj1}(x,x,x) = 
    \mathtt{Mj2}(x,x,x) = 
    \mathtt{Mn3}(x,x,x) = x$.

\begin{definition} \label{def:STPMJN}
    An \emph{STP/MJN multimorphism} of a valued constraint language~$\vfuns$
    consists of a pair of operations $\langle \sqcap,\sqcup \rangle$
    and a triple of operations $\langle\mathtt{Mj1}, \mathtt{Mj2},
    \mathtt{Mn3} \rangle$, both of which are multimorphisms
    of~$\vfuns$, for which, for some symmetric subset $M$ of~$D^2\!$,
    $\langle\sqcap,\sqcup\rangle$ is an STP on~$M$ and
    $\langle\mathtt{Mj1}, \mathtt{Mj2}, \mathtt{Mn3}\rangle$ is an MJN
    on $\{(a,b)\in D^2\setminus M \mid a\neq b\}$.
\end{definition}

\begin{definition} \label{def:weaklysubmod}
    $\vfuns\subseteq\funs\extR$ is \emph{weakly submodular} if, for
    all binary functions $f\in\vclone\vfuns$ and elements $a,b\in D$,
    \begin{equation}\label{eq:softloopVal}
        f(a,a)+f(b,b)\leq f(a,b)+f(b,a)
        \quad\text{or}\quad
        f(a,a)=f(b,b)=\infty.
    \end{equation}
\end{definition}

Note that the definition of weak submodularity for cost functions is a
restatement of Kolmogorov and \Zivny's ``Assumption 3''.  It is not
trivial that weak log-supermodularity for $\nfuns$ is related to weak
submodularity for $\toVal\nfuns$. Expressibility for \VCSP{} is
different from expressibility for \nCSP{} and, specifically, we cannot
expect $\vclone{\toVal\nfuns} = \toVal{\fclone\nfuns}$ to hold in
general.  However, the following is suitable for our purposes.

\begin{lemma}\label{lem:softlooptranslate}
    Suppose $\nfuns\subseteq\funs{\bddQ}$ and let
    $\vfuns=\toVal\nfuns$.  If $\nfuns$ is weakly log-supermodular
    then $\vfuns$ is weakly submodular.
\end{lemma}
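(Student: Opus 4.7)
The plan is to approximate the min-plus semantics of $\vclone{\vfuns}$ using the sum-product semantics of $\fclone{\nfuns}$ via a zero-temperature (tropical) limit, and then pass weak log-supermodularity through that limit. The main conceptual obstacle is that $\vclone{\toVal\nfuns}\neq\toVal{\fclone\nfuns}$ in general: $\vclone{\cdot}$ commutes with $\min$ and $+$, whereas $\fclone{\cdot}$ commutes with $\sum$ and $\prod$. The tropical limit will bridge this gap.

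Concretely, let $f\in\vclone{\vfuns}$ be binary and fix $a,b\in D$. Expand $f$ as a psm-formula over $\vfuns\cup\{\eq\}$:
\begin{equation*}
    f(x_1,x_2) \;=\; \min_{\vecy\in D^k}\sum_{j=1}^m f_{\phi_j}(\vecx,\vecy),
\end{equation*}
where each $f_{\phi_j}$ has the form $\toVal{F_{\phi_j}}$ for some $F_{\phi_j}\in\nfuns\cup\{\EQ\}$. For each positive integer $n$, I would build the pps-formula over $\nfuns\cup\{\EQ\}$ with the same bound variables $\vecy$ but with every atomic formula duplicated $n$ times (sharing scopes), so that it represents
\begin{equation*}
    F^{(n)}(\vecx) \;=\; \sum_{\vecy\in D^k}\Bigl(\prod_{j=1}^m F_{\phi_j}(\vecx,\vecy)\Bigr)^{\!n}\in\fclone{\nfuns}.
\end{equation*}
Since $F^{(n)}$ is binary, weak log-supermodularity of $\nfuns$ gives $F^{(n)}(a,a)F^{(n)}(b,b)\geq F^{(n)}(a,b)F^{(n)}(b,a)$ unless $F^{(n)}(a,a)=F^{(n)}(b,b)=0$.

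Next, I would apply the standard tropical sandwich $\max_\vecy P(\vecy)^n \leq \sum_\vecy P(\vecy)^n \leq |D|^k \max_\vecy P(\vecy)^n$ to conclude that $F^{(n)}(\vecx)^{1/n}\to F^*(\vecx) := \max_\vecy\prod_j F_{\phi_j}(\vecx,\vecy) = e^{-f(\vecx)}$ as $n\to\infty$, with the convention $e^{-\infty}=0$. Taking $n$-th roots of the inequality for $F^{(n)}$ and letting $n\to\infty$ then yields $F^*(a,a)F^*(b,b)\geq F^*(a,b)F^*(b,a)$. The vanishing clause is uniform in $n$: $F^{(n)}(a,a)=0$ if and only if $\prod_j F_{\phi_j}(a,a,\vecy)=0$ for every $\vecy$, if and only if $f(a,a)=\infty$, and similarly for $(b,b)$. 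Hence either $f(a,a)=f(b,b)=\infty$, giving the second clause of weak submodularity, or the inequality holds; applying $-\ln$ with the convention $-\ln 0=\infty$ converts it into $f(a,a)+f(b,b)\leq f(a,b)+f(b,a)$, as required. The principal technical care needed is in realising the $n$-th power inside $\fclone{\nfuns}$ using duplicated atomic formulas with shared bound variables and in verifying that the vanishing clause passes cleanly to the limit; beyond that, the argument is a formal translation from the multiplicative to the min-plus setting.
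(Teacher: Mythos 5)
Your proposal is correct and uses essentially the same argument as the paper: the paper also forms $F^{(k)}(\vecx)=\sum_{\vecy}\prod_i G_i(\vecx,\vecy)^k\in\fclone{\nfuns}$ and takes the zero-temperature limit $F^{(k)}(\vecx)^{1/k}\to\exp(-f(\vecx))$, merely phrasing the argument contrapositively (a violation of weak submodularity yields, for large $k$, a violation of weak log-supermodularity) rather than passing the inequality directly to the limit as you do. Your extra care about the uniformity of the vanishing clause and the sandwich bound $\max\leq\sum\leq|D|^k\max$ just makes explicit what the paper leaves implicit.
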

\begin{proof} 
    We prove the contrapositive.  Suppose $f\in\vclone\vfuns$ is a
    binary function that witnesses the fact that $\vfuns$ is not
    weakly submodular according to Definition~\ref{def:weaklysubmod},
    specifically,
    \begin{equation*}
        f(a,a)+f(b,b)>f(a,b)+f(b,a)
        \quad\text{and}\quad
        \min\{f(a,a),f(b,b)\}<\infty.
    \end{equation*}
    Since $f\in\vclone\vfuns$, we may express $f$ in the form
    \begin{equation*}
        f(\vecx) = \min_\vecy g(\vecx,\vecy)
            = \min_\vecy\sum_{i=1}^m g_i(\vecx,\vecy),
    \end{equation*}
    where the $g_i\in\vfuns$ are atomic.  For $k\in\nset$, define
    \begin{equation*}
        F^{(k)}(\vecx)=\sum_\vecy\prod_{i=1}^m G_i(\vecx,\vecy)^k,
    \end{equation*}
    where each $G_i$ is such that $g_i=\toVal{G_i}$.  Note that
    $F^{(k)}\in\fclone{\nfuns}$, and
    \begin{equation*}
        F^{(k)}(\vecx)^{1/k}\>\to\>\max_\vecy\prod_{i=1}^m
        G_i(\vecx,\vecy),\quad\text{as $k\to\infty$}.
    \end{equation*}
    Now
    \begin{align*}
        \max_\vecy\prod_{i=1}^m G_i(\vecx,\vecy)
            &=\max_\vecy\exp\bigg(-\sum_{i=1}^m g_i(\vecx,\vecy)\bigg)\\
            &=\exp\bigg(-\min_\vecy\sum_{i=1}^m g_i(\vecx,\vecy)\bigg)\\
            &=\exp(-f(\vecx))
    \end{align*}
    and
    \begin{equation*}
        \exp(-f(a,a))\exp(-f(b,b))<\exp(-f(a,b))\exp(-f(b,a)).
    \end{equation*}
    Thus $F(a,a)F(b,b)<F(a,b)F(b,a)$ where $F=F^{(k)}$ for some
    sufficiently large~$k$.  Also, $\min\{f(a,a),f(b,b)\}<\infty$
    implies that $\max\{F(a,a),F(b,b)\}>0$.  These properties of $F$
    imply that $\nfuns$ is not weakly log-supermodular, according to
    Definition~\ref{def:weaklylogsupermod}.
\end{proof}

Let $\rels$ be a crisp constraint language.  A \emph{majority polymorphism}
of $\rels$ is a ternary operation~$\rho$ such that
$\rho(a,a,b)=\rho(a,b,a)=\rho(b,a,a)=a$ for all $a,b\in D$ and for all
arity-$k$ relations $R\in\rels$ we have
\begin{equation*}
    \vecx,\vecy,\vecz\in R \implies
        (\rho(x_1,y_1,z_1),\dots,\rho(x_k,y_k,z_k))\in R.
\end{equation*}

Let $N(a,b,c,d)$ be the relation $\{(a,c), (b,c), (a,d)\}$.  The existence of
such a relation in $\vclone{\rels}$ indicates that $\rels$ is not
``strongly balanced'' in the terminology of \cite{DReffective}.  Note
that, on the Boolean domain, $N(0,1,0,1)$ is the ``NAND'' relation.

\begin{theorem}(Takhanov)
\label{thm:Takhanov}
    Let $\rels$ be a conservative relational clone with domain $D$.
    At least one of the following holds.
    \begin{itemize}
    \item There are distinct $a,b\in D$ such that $N(a,b,a,b)\in
        \rels\!$.
    \item There are distinct $a,b\in D$ such that $\{(a,a,a), (a,b,b),
        (b,a,b),(b,b,a)\} \in \rels\!$.
    \item For some $k\geq 1$, there are $a_0, \dots, a_{2k}, b_0,
        \dots, b_{2k}\in D$ such that, for each $0\leq i\leq 2k$,
        $a_i\neq b_i$ and, for each $0\leq i\leq 2k-1$,
        \begin{equation*}
            N(a_i, b_i, a_{i+1}, b_{i+1}) \in \rels
            \text{ and }
            N(a_{2k}, b_{2k}, a_0, b_0) \in \rels.
        \end{equation*}
    \item $\rels$  has a majority polymorphism.
    \end{itemize}
\end{theorem}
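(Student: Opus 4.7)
The plan is to proceed by contrapositive: assume that none of the first three conditions hold, and construct a majority polymorphism of~$\rels$ witnessing condition~4. The argument has a natural algebraic flavour, exploiting the Galois correspondence between polymorphisms and relational clones together with the conservativity hypothesis, which forces every polymorphism of~$\rels$ to be conservative (it preserves every unary relation).

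First, I would analyse each pair $\{a,b\} \subseteq D$ with $a \neq b$ separately. Up to projections, the conservative ternary operations on a two-element set are the majority and the minority operation, so the failure to find a polymorphism of~$\rels$ that acts as a majority on the pair $\{a,b\}$ must be witnessed by a binary (or ternary) relation in $\rels$ whose restriction to $\{a,b\}$ is preserved by minority but not by majority. A small case analysis shows that the only obstructions of this form are exactly the relations appearing in conditions~1 and~2: the NAND-like relation $N(a,b,a,b)$, or the affine relation $\{(a,a,a),(a,b,b),(b,a,b),(b,b,a)\}$. So, if neither condition~1 nor condition~2 holds, each pair admits some ternary polymorphism that behaves as a majority on that pair.

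The harder step is to glue these local majorities together. The subtlety is that two different pairs $\{a,b\}$ and $\{c,d\}$ may admit local majorities only via ternary polymorphisms whose behaviour on mixed inputs is incompatible, so there is no single ternary polymorphism of~$\rels$ that is simultaneously a majority on every pair. To make this precise I would build a combinatorial ``orientation graph'' whose vertices are ordered pairs $(a,b)$ with $a\neq b$ and whose edges record forced implications of the form ``if $\rho$ is a majority on $(a_i,b_i)$, then $\rho$ must (or must not) also be a majority on $(a_{i+1},b_{i+1})$.'' These forced implications turn out to come precisely from NAND-like relations $N(a_i,b_i,a_{i+1},b_{i+1}) \in \rels$, obtained by pp-definability from a failed gluing. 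A consistent orientation then yields a global majority polymorphism by averaging or by an explicit construction on each pair.

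The main obstacle is therefore the combinatorial step: showing that every obstruction to a consistent orientation closes up into a genuine cycle of the kind demanded in condition~3, rather than into a more general structure that escapes the four-condition classification. I would argue by choosing a minimal inconsistent configuration and applying the relational clone closure (intersections, projections, and permutations of coordinates) to extract from it a cyclic chain $N(a_0,b_0,a_1,b_1), \dots, N(a_{2k},b_{2k},a_0,b_0) \in \rels$. The parity constraint $2k+1$ on the length of the cycle should emerge from the fact that each link in the chain flips the ``sense'' in which majority/minority is forced, so only odd cycles produce an irresolvable contradiction. Once this cyclic extraction is established, the four-way case split is complete.
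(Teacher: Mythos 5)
Your outline is not wrong in spirit, but it is worth being clear that the paper itself does not prove this theorem: it is assembled from Takhanov's work, namely his Theorem~9.1, his ``necessary local conditions'' (his Definition~3.5), the bipartiteness of a graph $T_F$ appearing in the proof of his Theorem~3.7, and his Theorem~5.5, which produces the majority polymorphism. Your sketch follows the same broad skeleton (a per-pair local analysis, then a global parity/graph argument), but at the two crucial points it asserts rather than proves, and one of the supporting claims is incorrect as stated. In the local step, it is not true that ``up to projections, the conservative ternary operations on a two-element set are the majority and the minority operation'': a conservative ternary operation on $\{a,b\}$ may return either argument on each of the six mixed input patterns, giving $2^6$ such operations, most of which are neither projections nor majority nor minority. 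More importantly, the assertion that the only obstructions to a local majority on $\{a,b\}$ are exactly the relations $N(a,b,a,b)$ and $\{(a,a,a),(a,b,b),(b,a,b),(b,b,a)\}$, \emph{and that such an obstruction can always be pp-defined so as to lie in $\rels$ itself}, is precisely Takhanov's local-conditions analysis; ``a small case analysis shows'' does not discharge it.

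The global step has the same character. The heart of the theorem is that a failure to glue the local majorities consistently forces an odd cycle of relations $N(a_i,b_i,a_{i+1},b_{i+1})$ that are genuinely members of $\rels$ (this is the non-bipartiteness of Takhanov's graph $T_F$), and your proposal offers no argument that a ``minimal inconsistent configuration'' really collapses, under projections and intersections, to NAND-type binary relations of exactly this form rather than to some other pp-definable obstruction outside the four-way classification; the parity claim is likewise only asserted. Conversely, in the bipartite case, producing a single ternary operation that is a majority on every pair \emph{and} preserves every relation of $\rels$ is a substantial construction (Takhanov's Theorem~5.5): naive pairwise gluing or ``averaging'' of local majorities generally fails on triples whose entries range over three or more distinct pairs, since nothing in the local data controls the operation's values on such mixed inputs. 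As it stands, the proposal is a plausible plan that mirrors Takhanov's strategy, but both of its load-bearing steps are missing.
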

\begin{proof}
    This formulation is essentially \cite[Theorem~9.1]{TakhanovFull}
    except for the last bullet point.  As stated in the proof of that
    theorem, the first two conditions both fail if and only if the
    ``necessary local conditions'' \cite[Definition 3.5]{TakhanovFull}
    hold.  Unfortunately for us, Takhanov uses the term ``functional
    clone'' differently to how we use it, so the reader will need to
    take this into account to understand the local conditions.
    However, we do not need the detail, here.
    Takhanov's proof of the \NP{}-hard case of his
    Theorem~3.7 (at the end of his Section~4) shows the following:
    Given the necessary local conditions, the third condition fails
    only if a certain graph $T_F$ is bipartite. If $T_F$ is
    bipartite then \cite[Theorem 5.5]{TakhanovFull} establishes a
    majority polymorphism.
\end{proof}

\begin{lemma}\label{lem:existsMaj}
    If $\vfuns\subseteq\funs\extR$ is conservative and weakly
    submodular, $\rels=\vclone{\toRel\vfuns}$ has a majority
    polymorphism.
\end{lemma}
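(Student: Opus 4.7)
The plan is to apply Takhanov's theorem (Theorem~\ref{thm:Takhanov}) to $\rels = \vclone{\toRel\vfuns}$, which is conservative because $\vfuns$ is: by Lemma~\ref{lem:closures}, $\rels = \toRel{\vclone\vfuns}$, and every unary relation $S\subseteq D$ is $\toRel{u_S}$ for the unary $u_S\in\vfuns$ taking value $0$ on $S$ and $\infty$ off $S$.  I would rule out the first three bullets of the theorem using weak submodularity, which then forces the fourth bullet and hence a majority polymorphism of $\rels$.  Throughout, whenever $R \in \rels$ I use Lemma~\ref{lem:closures} to produce some $g \in \vclone\vfuns$ with $\toRel g = R$; such a $g$ is finite on $R$ and $\infty$ off $R$ but need not itself be crisp.

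Bullet one is immediate: if $N(a,b,a,b) \in \rels$, the corresponding $g$ has $g(b,b) = \infty$ while $g(a,a), g(a,b), g(b,a)$ are all finite, directly contradicting Definition~\ref{def:weaklysubmod}.  For bullet two, let $g \in \vclone\vfuns$ realise the ternary ``XOR-like'' relation $T = \{(a,a,a), (a,b,b), (b,a,b), (b,b,a)\}$.  For each parameter $K \geq 0$, conservativity provides a unary $u \in \vfuns$ with $u(a) = K$, $u(b) = 0$ and $u(c)=\infty$ for $c\notin\{a,b\}$; using it, form
\[
  h(x,y) \;=\; \min_z\bigl[\, g(x,y,z) + u(z)\,\bigr] \;\in\; \vclone\vfuns .
\]
A short case check gives $h(a,a) = g(a,a,a) + K$, $h(b,b) = g(b,b,a) + K$, $h(a,b) = g(a,b,b)$, $h(b,a) = g(b,a,b)$, all finite; weak submodularity would then force $g(a,a,a) + g(b,b,a) + 2K \leq g(a,b,b) + g(b,a,b)$ for every $K \geq 0$, which is impossible since the right-hand side is a fixed finite number.

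Bullet three --- an odd cycle of NAND-like relations $R_i = N(a_i, b_i, a_{i+1}, b_{i+1})$ for $i=0,\dots,2k$ on potentially distinct pairs $V_i = \{a_i, b_i\}$ --- is the main obstacle, since no individual $R_i$ violates weak submodularity when $V_i \neq V_{i+1}$.  The plan is to walk once around the odd cycle via pps-definability, composing the $R_i$ through auxiliary variables each weighted by a conservative unary cost function that penalises $a_i$- or $b_i$-assignments (in the spirit of the bullet-two argument).  The essential ingredient will be the odd parity of the cycle length: the ``all-$b_i$'' propagation cannot close up consistently around an odd loop, which creates an integrality-gap-style obstruction between the minimum number of ``$a_i$-cost'' variables and the fractional one.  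By exploiting this, the plan is to derive either a loop-NAND $N(\alpha,\beta,\alpha,\beta) \in \rels$ for some distinct $\alpha,\beta\in D$, reducing to bullet one, or a binary function in $\vclone\vfuns$ whose diagonal entries $f(\alpha,\alpha), f(\beta,\beta)$ grow linearly with the unary weight while $f(\alpha,\beta), f(\beta,\alpha)$ remain bounded, contradicting weak submodularity exactly as in bullet two.  With all three bullets ruled out, Theorem~\ref{thm:Takhanov} delivers the desired majority polymorphism of $\rels$.
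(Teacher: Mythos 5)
Your overall strategy --- apply Theorem~\ref{thm:Takhanov} to $\rels$, rule out its first three bullets using weak submodularity, and conclude that the fourth must hold --- is exactly the paper's, and your treatment of the first two bullets coincides with the paper's proof (including the unary penalty $u$ with $u(a)=K$, $u(b)=0$, $\infty$ elsewhere, and letting $K\to\infty$). The genuine gap is the third bullet: there you give only a plan, and the plan as stated does not amount to a proof. You never show how the ``integrality-gap-style obstruction'' is converted into a concrete binary function of $\vclone\vfuns$ violating~(\ref{eq:softloopVal}), and the dichotomy you propose (``either a loop-NAND $N(\alpha,\beta,\alpha,\beta)\in\rels$, or a binary function whose off-diagonal entries stay bounded while the diagonal grows'') is not what the natural construction yields: chaining the relations with unary penalties makes the off-diagonal entries grow linearly in the penalty as well, so ``bounded off-diagonals'' is not available, and nothing in your sketch produces the loop-NAND either.

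The paper's actual argument for this case is an open chain rather than a closed walk around the cycle: take $g_0,\dots,g_{2k}\in\vclone\vfuns$ with underlying relations $N(a_i,b_i,a_{i+1},b_{i+1})$ (indices mod $2k+1$) and set $f(x,y)=\min\{g_0(x,z_0)+u_0(z_0)+g_1(z_0,z_1)+\cdots+u_{2k-1}(z_{2k-1})+g_{2k}(z_{2k-1},y)\}$, where each $u_i$ charges $M$ for the relevant $a$-value, $0$ for the $b$-value and $\infty$ otherwise. Since each relation $N(\cdot)$ forbids the $b$-value at two consecutive positions of the chain, a feasible assignment of the $2k+2$ positions with both endpoints equal to $b_0$ must pay the penalty at least $k+1$ times, while with at least one endpoint equal to $a_0$ it need only pay it $k$ times; hence $f(a_0,a_0)+f(b_0,b_0)\geq(2k+1)M$ whereas $f(a_0,b_0)+f(b_0,a_0)\leq 2kM+2(2k+1)m$, with $m$ the largest finite value of the $g_i$, and for $M$ large this violates~(\ref{eq:softloopVal}). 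The violation thus comes from comparing the leading coefficients $2k+1$ versus $2k$ --- this is precisely where the odd length enters --- not from keeping the off-diagonal entries bounded. Until you carry out such a parity count (or something equivalent), the third bullet, which you yourself identify as the main obstacle, remains unproved.
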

\begin{proof}
    Since $\vfuns$ is conservative
    (Definition~\ref{def:conservativevalued}), so is~$\rels$
    (Definition~\ref{def:conservativelang}).  We will now show that
    the first three bullets of Theorem~\ref{thm:Takhanov} contradict
    the premise of the lemma, so the fourth must hold.

    The first bullet-point is easily ruled out.  Suppose the given
    relation is in~$\rels\!$. By Lemma~\ref{lem:closures}, there is a
    binary function $f\in\vclone\vfuns$ such that $\toRel f =
    N(a,b,a,b)$.  This function has $f(b,b)=\infty$ and
    $f(a,a),f(a,b),f(b,a)<\infty$, and hence
    violates~(\ref{eq:softloopVal}).

    For the second bullet-point, by Lemma~\ref{lem:closures} we have  
            an arity-$3$
    function $g\in\vclone\vfuns$ which is finite precisely on
    $\big\{(a,a,a),(a,b,b),(b,a,b),(b,b,a)\big\}$.  Now let $M$ be a
    sufficiently large constant and let $u$ be the unary function
    defined by
    \begin{equation*}
        u(z) = \begin{cases}
                   M      & \text{if $z=a$,}\\
                   0      & \text{if $z=b$,}\\
                   \infty & \text{otherwise.}
        \end{cases}
    \end{equation*}
    Let
    \begin{equation*}
        f(x,y)=\min_{z\in D}\{g(x,y,z)+u(z)\}.
    \end{equation*}
    Then $f(a,a)=M+g(a,a,a)$, $f(b,b)=M+g(b,b,a)$, $f(a,b)=g(a,b,b)$
    and $f(b,a)=g(b,a,b)$.  Clearly, $f$
    violates~(\ref{eq:softloopVal}) for sufficiently large~$M$.
 
    Finally, let us consider the third bullet-point.  By
    Lemma~\ref{lem:closures} we have binary functions
    $g_0,g_1,\dots,     g_{2k}
    \in\vclone\vfuns$ where the underlying
    relation of $g_i$ is $N(a_i, b_i, a_{i+1}, b_{i+1})$ for $0\leq i <
    2k$ and the underlying relation of $g_{2k}$ is $N(a_{2k}, b_{2k},
    a_0, b_0)$.  Define
    \begin{multline*}
        f(x,y)= \min\big\{g_0(x,z_0) + u_0(z_0) + g_1(z_0,z_1)
                          + u_1(z_1)+\null                          \\
                          \cdots + u_{2k-1}(z_{2k-1})
                          + g_{2k}(z_{2k-1},y)
                          \mid (z_0,\dots,z_{2k-1})\in D^{2k}
                     \big\},
    \end{multline*}
    where $u_i(a_i)=M$, $u_i(b_i)=0$, and $u_i(z)=\infty$ if
    $z\notin\{a_i,b_i\}$.  Note that $f(a_0,a_0) \geq kM$, $f(b_0,b_0) \geq
    (k+1)M$ and $f(a_0,b_0), f(b_0,a_0)\leq kM+(2k+1)m$, where $m$ is the
    largest finite value taken by any of $g_0, \dots, g_{2k}$.  So
    $f$ violates~(\ref{eq:softloopVal}) for sufficiently large~$M$.

    So we are left with the remaining possibility that $\rels$ has a
    majority polymorphism.
\end{proof}

We can now prove the main result of this section.  

\begin{theorem}\label{thm:wlsm_to_stp}
    Let $\nfuns$ be a weighted constraint language such that
    $\func_1(D,\bddQ) \subseteq \nfuns \subseteq\funs{\bddQ}$.
    If $\nfuns$ is weakly log-supermodular then $\toVal{\nfuns}$ has
    an STP/MJN multimorphism.
\end{theorem}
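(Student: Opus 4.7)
The plan is to combine Lemma~\ref{lem:softlooptranslate} with a mild adaptation of Lemma~\ref{lem:existsMaj}, and then invoke the Kolmogorov--\Zivny{} construction~\cite{KZ}. Set $\vfuns = \toVal\nfuns$. By Lemma~\ref{lem:softlooptranslate}, $\vfuns$ is weakly submodular. The valued language $\vfuns$ is \emph{not} conservative in the sense of Definition~\ref{def:conservativevalued} (the image of $\toVal$ on $\bddQ$ is a proper subset of $\extR$), but $\func_1(D,\bddQ)\subseteq\nfuns$ ensures that $\toRel\vfuns$ contains every unary relation on~$D$, so by Lemma~\ref{lem:closures} the relational clone $\rels = \toRel{\vclone\vfuns} = \vclone{\toRel\vfuns}$ is conservative.

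The next step is to show that $\rels$ has a majority polymorphism, by essentially rerunning the argument of Lemma~\ref{lem:existsMaj}. That proof refutes the first three cases of Takhanov's theorem (Theorem~\ref{thm:Takhanov}) by combining certain relations in~$\rels$ with a unary cost function of the shape $u(a)=M$, $u(b)=0$, $u(z)=\infty$ for $z\notin\{a,b\}$, with $M$ taken large enough to falsify~(\ref{eq:softloopVal}). Although such $u$ is not literally in $\vfuns$, it equals $\toVal U$ for the unary weight $U\in\nfuns$ defined by $U(a)=q$, $U(b)=1$, and $U(z)=0$ otherwise, where $q$ is any small positive rational; since $-\ln q \to \infty$ as $q\to 0^+$, we may take $M = -\ln q$ as large as required. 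The same substitution handles the unary functions $u_0,\dots,u_{2k-1}$ used in the third case, and so $\rels$ admits a majority polymorphism.

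Finally, I would invoke the pair-by-pair analysis of Kolmogorov and \Zivny~\cite{KZ}, building on Cohen, Cooper and Jeavons~\cite{CCJ}: given weak submodularity of $\vfuns$ and a majority polymorphism of $\rels$, each pair $\{a,b\}\subseteq D$ is classified according to how the majority polymorphism behaves on it, and weak submodularity then forces a globally consistent choice of a commutative, conservative pair $\langle\sqcap,\sqcup\rangle$ on a symmetric subset $M\subseteq D^2$, together with an MJN triple on the complementary pairs, that is a multimorphism of every function in $\vfuns$. The main obstacle is that \cite{KZ} is formulated for valued conservative languages whereas $\vfuns$ only supplies unary cost functions of the form $\toVal U$. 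This is circumvented in the same way as above: the negative logarithms of $\bddQ\cap(0,1]$ form a dense subset of $[0,\infty)$, so whenever the Kolmogorov--\Zivny{} construction requires a unary cost function with prescribed large finite values, we can realise it (or a sufficiently close rational approximation) as $\toVal U$ for a suitable $U\in\nfuns$. This yields the desired STP/MJN multimorphism of $\toVal\nfuns$.
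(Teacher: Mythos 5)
Your first two steps are sound and run parallel to the paper's own strategy: Lemma~\ref{lem:softlooptranslate} gives weak submodularity of $\toVal\nfuns$, and your rational-logarithm substitution inside the proof of Lemma~\ref{lem:existsMaj} is legitimate, since that proof only ever needs unary cost functions supported on two elements with one value $0$ and one arbitrarily large finite value $M$, and $M=-\ln q$ with $q\in\bddQ$ small enough serves; so $\rels=\vclone{\toRel{\toVal\nfuns}}$ does have a majority polymorphism. The gap is in your final paragraph. The Kolmogorov--\Zivny{} construction of an STP/MJN multimorphism in \cite[\S6.1--6.4]{KZ} is proved under their Assumption~1, namely that the valued constraint language itself is conservative, i.e.\ contains all of $\func_1(D,\extR)$; the language $\toVal\nfuns$ is not conservative, so you cannot invoke their construction as a black box for it. Your proposed remedy --- that density of $\{-\ln q \mid q\in\bddQ\cap(0,1]\}$ in $[0,\infty)$ lets you substitute a close rational-log approximation ``whenever the construction requires a unary cost function'' --- is asserted rather than proved: making it precise would mean auditing all of \cite[\S6.1--6.4]{KZ} to check that every use of a unary cost function is robust under small perturbations (the construction involves exact conservativity and commutativity requirements and case analyses on pairs, not merely non-strict inequalities), and your write-up does not do that work.

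The paper avoids exactly this difficulty by moving the approximation to a place where it is easy to justify. It sets $\vfuns=\toVal\nfuns\cup\func_1(D,\extR)$, which is conservative by construction, and then shows that $\vfuns$ is still weakly submodular: every binary $f\in\vclone\vfuns$ can be approximated uniformly by binary functions in $\vclone{\toVal\nfuns}$ (approximate each adjoined unary cost function by an element of $\toVal{\func_1(D,\bddQ)}\subseteq\toVal\nfuns$ and use continuity of $\min$ and $+$), and condition~(\ref{eq:softloopVal}) survives the limit because it consists only of non-strict inequalities and infinity conditions. With that in hand, Lemma~\ref{lem:existsMaj} applies directly to $\vfuns$, all three Assumptions of \cite{KZ} hold literally for $\vfuns$, and the resulting STP/MJN multimorphism of $\vfuns$ is in particular an STP/MJN multimorphism of its subset $\toVal\nfuns$. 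Replacing your last paragraph with this enlargement-plus-continuity step would make your argument complete.
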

\begin{proof}
    Let $\vfuns=\toVal\nfuns\cup\func_1(D,\extR)$. We will show that
    $\vfuns$ has an STP/MJN multimorphism.  By
    Definitions \ref{def:STPMJN} and~\ref{def:multimorphismvfuns},
    this is also an STP/MJN multimorphism of the subset
    $\toVal\nfuns$.
  
    By Lemma~\ref{lem:softlooptranslate}, $\toVal\nfuns$ is weakly
    submodular.  Now, $\toVal \nfuns$ contains $\toVal
    {\func_1(D,\bddQ)}$.  Thus, for every unary function $u\in
    \func_1(D,\extR)$ and every $\epsilon\in(0,1)$, there is a unary
    function $u_\epsilon\in \toVal \nfuns$ such that, for all
    $x\in\{0,1\}$, $|u(x)-u_\epsilon(x)|<\epsilon$.  From the
    definition of valued clones, and continuity, we deduce that, for
    every binary function $f\in \vclone \vfuns$ and every
    $\epsilon>(0,1)$, there is an $f_\epsilon \in \vclone {\toVal \nfuns}$
    such that, for all $x,y\in\{0,1\}$, $|f(x,y) - f_\epsilon(x,y)| <
    \epsilon$.  Since $\toVal\nfuns$ is weakly submodular, we conclude
    from the definition of weak submodularity
    (Definition~\ref{def:weaklysubmod}) that $\vfuns$ is weakly
    submodular.

    In \cite[\S6.1--6.4]{KZ}, Kolmogorov and \Zivny{} show how to
    construct an STP/MJN multimorphism of~$\vfuns$ under ``Assumptions
    1--3''.  Assumption~1 is that $\vfuns$ is conservative, which is
    true by construction.  Assumption~3 is that $\vfuns$ is weakly
    submodular.  This is given as a premise of our lemma.
    Assumption~2 is that $\rels=\toRel\vfuns$ has a majority
    polymorphism, which follows from Assumptions 1 and~3 by
    Lemma~\ref{lem:existsMaj}.  (Assumption~2 states that $\vfuns$ has
    a majority polymorphism. In our terminology, this means that
    $\toRel\vfuns$ has a majority polymorphism.)
\end{proof}

\section{\LSM{}-easiness and \nBIS{}-easiness}\label{sec:lsm_bis_easy}

Our aim is to show that if $\toVal\nfuns$ has an STP/MJN multimorphism
then $\nfuns$ is \LSM{}-easy.  This will involve using the arguments
of \cite{CCJ} and \cite{KZ}, but we try, as much as possible, to avoid
going into the details of their proofs.  We start by generalising the
notion of an STP multimorphism.

\begin{definition}\label{def:genSTP}
    Let $f$ be an arity-$k$ cost function.  A \emph{multisorted
    multimorphism of~$f$} is a pair $\langle\sqcap,\sqcup\rangle$,
    defined as follows.  For $1\leq i\leq k$, $\sqcap_i$ and $\sqcup_i$
    are operations on the set $D_i = \{a \in D \mid \exists \vecx:
    x_i=a \text{ and } f(\vecx)<\infty\}$, and $\langle\sqcap_i,
    \sqcup_i\rangle$ is an STP of $\{f\}$. 

    The operation~$\sqcap$ is the binary operation on $D_1\times
    \dots \times D_k$ defined by applying $\sqcap_1,\ldots,\sqcap_k$
    component-wise.  Similarly, $\sqcup$ is defined by applying
    $\sqcup_1,\ldots,\sqcup_k$ component-wise.  We require that, for
    all $\vecx,\vecy\in D^k\!$, $ f(\sqcup(\vecx,\vecy)) +
    f(\sqcap(\vecx,\vecy)) \leq f(\vecx) + f(\vecy)$.  Equivalently,
    we require
    \begin{equation*}
        f( \sqcup_1(x_1,y_1), \ldots, \sqcup_k(x_k,y_k))
            + f( \sqcap_1(x_1,y_1), \ldots, \sqcap_k(x_k,y_k))
        \leq f(\vecx) + f(\vecy).
    \end{equation*}
\end{definition}

Kolmogorov and \Zivny~\cite[Equation (35)]{KZ} use a slightly more general definition, where
$D_i$ can be any superset of $\{a \in D \mid \exists \vecx: x_i=a
\text{ and } f(\vecx)<\infty\}$.  But it does no harm to be more
specific.

Where it is clearer, we use infix notation for operations such as
$\sqcap$ and~$\sqcup$.

\begin{theorem}[Kolmogorov and \Zivny] \label{thm:kzeasy} 
    Suppose $\vfuns_0$ is a finite, valued constraint language which has an
    STP/MJN multimorphism.  Then there is a polynomial-time algorithm
    that takes an instance~$I$ of $\VCSP(\vfuns_0)$ and returns a
    multisorted multimorphism $\langle\sqcap,\sqcup\rangle$ of
    $f_I$.  The pair $\langle\sqcap,\sqcup\rangle$ depends only on the
    STP/MJN multimorphism of~$\vfuns_0$ and on the relation
    $\toRel{f_I}$ underlying~$f_I$. It does not depend in any other
    way on~$I$.
\end{theorem}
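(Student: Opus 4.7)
The plan is to adapt the construction of Kolmogorov and \Zivny{} \cite[\S3, \S6]{KZ}. Let $\langle \sqcap, \sqcup \rangle$ and $\langle \mathtt{Mj1}, \mathtt{Mj2}, \mathtt{Mn3} \rangle$, together with $M \subseteq D^2$, constitute the STP/MJN multimorphism of $\vfuns_0$, where $\langle \sqcap, \sqcup \rangle$ is an STP on~$M$ and $\langle \mathtt{Mj1}, \mathtt{Mj2}, \mathtt{Mn3} \rangle$ is an MJN on $\{(a,b) \in D^2 \setminus M : a \neq b\}$. Given an instance~$I$, the algorithm produces the desired multisorted multimorphism in two stages.

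\textbf{Stage 1 (domain reduction).} From the relation $R = \toRel{f_I}$, compute the projected domains $D_i^{(0)} = \{a \in D : \exists \vecx \in R \text{ with } x_i = a\}$. While there exist~$i$ and distinct $a, b \in D_i^{(0)}$ with $(a, b) \notin M$, I would use the MJN multimorphism to remove one of~$a$ or~$b$ from $D_i^{(0)}$ without affecting feasibility: from three carefully chosen assignments witnessing the presence of~$a$ and~$b$ in~$D_i^{(0)}$, applying $\mathtt{Mj1}, \mathtt{Mj2}, \mathtt{Mn3}$ componentwise yields three new assignments whose combined $f_I$-cost is at most the original (since each $f_t \in \vfuns_0$ satisfies the MJN multimorphism inequality and these sum up over $t \in T$ by \eqref{eq:instance}); a pigeonhole-style argument shows that at least one of~$a, b$ can be eliminated while preserving $\mincost(I)$. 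The $D_i$'s shrink monotonically, so the process terminates in polynomial time with refined domains $D_i$ satisfying $D_i \times D_i \subseteq M$.

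\textbf{Stage 2 (multimorphism construction).} Define $\sqcap_i, \sqcup_i$ to be the restrictions of $\sqcap, \sqcup$ to~$D_i$. Since $\langle \sqcap, \sqcup \rangle$ is conservative and commutative on~$M$ and $D_i^2 \subseteq M$, the restriction $\langle \sqcap_i, \sqcup_i \rangle$ is an STP on~$D_i$ (in the sense of Definition~\ref{def:genSTP}). To verify the submodularity inequality for~$f_I$, decompose $f_I(\vecx) = \sum_{t \in T} f_t(\vecx[\sigma_t])$ using \eqref{eq:instance}. For any $\vecx, \vecy$ with $f_I(\vecx), f_I(\vecy) < \infty$, each coordinate of $\vecx[\sigma_t]$ and $\vecy[\sigma_t]$ lies in some $D_i$, so every pair lies in~$M$; applying the STP multimorphism property of $\vfuns_0$ to each~$f_t$ and summing over $t \in T$ yields $f_I(\sqcup(\vecx, \vecy)) + f_I(\sqcap(\vecx, \vecy)) \leq f_I(\vecx) + f_I(\vecy)$.

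\textbf{The main obstacle} is the correctness of Stage~1: justifying that MJN-based elimination preserves optima and terminates with $D_i^2 \subseteq M$. This requires the delicate interplay between the STP and MJN parts of the multimorphism --- the STP handles pairs in~$M$ while MJN forces a ``majority-winner'' among conflicting pairs outside~$M$ --- and essentially recapitulates the technical heart of \cite[\S6]{KZ}. Once Stage~1 is correct, the final independence claim is immediate: Stage~1 only inspects $R = \toRel{f_I}$, and Stage~2 is a purely syntactic restriction of the fixed operations $\sqcap, \sqcup$ to the $D_i$'s, so the output pair depends on~$I$ only through~$\toRel{f_I}$.
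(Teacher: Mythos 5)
Your Stage~1 is the step that fails, and with it the whole plan. By Definition~\ref{def:genSTP}, the sets $D_i$ on which the multisorted multimorphism of $f_I$ must be defined are not yours to choose: they are exactly the projections of the support of $f_I$, so you cannot ``refine'' them, and in general they do contain pairs outside~$M$. Concretely, take $D=\{0,1\}$, let $\vfuns_0$ contain the crisp binary disequality cost function (value $0$ if $x\neq y$, $\infty$ if $x=y$), and let $M=\{(0,0),(1,1)\}$. The pair $\langle\sqcap,\sqcup\rangle$ consisting of the two projections $(x,y)\mapsto x$ and $(x,y)\mapsto y$, together with the triple $\langle\mathrm{majority},\mathrm{majority},\mathrm{minority}\rangle$, is an STP/MJN multimorphism of~$\vfuns_0$. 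For the instance~$I$ with the single constraint on $(v_1,v_2)$ we get $D_1=D_2=\{0,1\}$ with $(0,1)\notin M$, and neither value can be eliminated from either coordinate (both occur in feasible assignments), so no amount of MJN-based pruning reaches $D_i\times D_i\subseteq M$. A multisorted multimorphism of $f_I$ does exist --- e.g.\ $\sqcap_1=\min$, $\sqcup_1=\max$ on the first coordinate and the \emph{reversed} pair $\sqcap_2=\max$, $\sqcup_2=\min$ on the second --- but it is not a restriction of the fixed $\langle\sqcap,\sqcup\rangle$ (the projections are not commutative), and the coordinated reversal of the two orders is read off from $\toRel{f_I}=\NEQ$. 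This is exactly the content of the ``depends only on the STP/MJN multimorphism and on $\toRel{f_I}$'' clause: on pairs outside~$M$ the commutative operations must be constructed per instance from the structure of the feasibility relation, not inherited from the language-level STP, so your Stage~2 (syntactic restriction of the fixed operations) cannot deliver the required object either.

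Beyond this, the proposal explicitly defers ``the technical heart'' to \cite{KZ} without supplying it, so even where it gestures correctly it is not a proof. The paper's own argument for this theorem is a careful citation: Stages~1 and~2 of the proof of Theorem~3.4 in \cite{KZ} construct precisely the pair $\langle\sqcap,\sqcup\rangle$ claimed here; the only additional points made are that the rational-versus-real issue is harmless, and that the Kolmogorov--\Zivny{} construction consults only $\toRel{\vfuns_0}$, the STP/MJN multimorphism and $\toRel{f_I}$ --- finitely much data that can be hardwired into the algorithm --- which is what justifies both polynomial time and the final independence claim. If you want to reconstruct rather than cite, the step you must actually prove is the one you skipped: for $a,b\in D_i$ with $(a,b)\notin M$, the MJN multimorphism forces $f_I$ to interact with such pairs in an essentially crisp, permutation-like way, which is what allows new commutative operations to be defined on them consistently across coordinates using $\toRel{f_I}$.
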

\begin{proof} 
    This is proved by Stages $1$ and~$2$ of the proof of Theorem~3.4
    in~\cite[\S7]{KZ}, in which Kolmogorov and \Zivny{} establish the
    existence of the pair $\langle\sqcap,\sqcup\rangle$ that we
    require.

    Note that Kolmogorov and \Zivny{} restrict to rationals, whereas we 
    allow real numbers, but this is not a problem.  Their proof
    constructs $\langle\sqcap,\sqcup\rangle$ using an algorithm but
    this algorithm
    does not require access to the functions in~$\vfuns_0$ themselves.
    Instead, it only requires access to the relations in
    $\toRel{\vfuns_0}$ and to the STP/MJN multimorphism
    that~$\vfuns_0$ satisfies.  These are both finite amounts of data,
    which can be hardwired into the algorithm, whose input is just the
    instance $I$, which is a symbolic expression.
\end{proof}

We will also use the following algorithmic consequence of
\cite[Theorem~3.4]{KZ}.  We restrict to crisp cost functions because
this is all that we use and we wish to avoid issues with number
systems.

\begin{theorem}[Kolmogorov and \Zivny]
\label{thm:STPMJN-tractable}
Suppose that $\vfuns_0$ is a finite, crisp constraint language that has an
STP/MJN multimorphism. Then there is a polynomial-time algorithm for 
$\VCSP(\vfuns_0)$.
\end{theorem}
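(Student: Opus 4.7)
The plan is to derive the theorem as a consequence of the main algorithmic result of Kolmogorov and \Zivny, namely~\cite[Theorem 3.4]{KZ}, which provides a polynomial-time algorithm for $\VCSP(\vfuns)$ whenever $\vfuns$ is a finite, conservative valued constraint language with an STP/MJN multimorphism (subject to a rationality condition on the codomain). Two small gaps need to be bridged: our $\vfuns_0$ is not assumed conservative in the sense of Definition~\ref{def:conservativelang}, and our nominal codomain is $\extR$ rather than $\posQ\cup\{\infty\}$.

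The codomain issue vanishes because $\vfuns_0$ is crisp: every cost function in $\vfuns_0$ takes values only in $\{0,\infty\}\subset\posQ\cup\{\infty\}$. This is exactly the reason, noted already in the excerpt just before Definition~\ref{def:conservativevalued}, for restricting the present theorem to crisp constraint languages.

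For conservativity, I would form $\vfuns_0'=\vfuns_0\cup\func_1(D,\{0,\infty\})$, which is finite (since $D$ is finite), still crisp, and conservative by Definition~\ref{def:conservativelang}. The next step is to verify that the STP/MJN multimorphism of $\vfuns_0$ is also a multimorphism of $\vfuns_0'$. By Definition~\ref{def:MMJJNN} and the preceding definition of STP, both the pair $\langle\sqcap,\sqcup\rangle$ and the triple $\langle\mathtt{Mj1},\mathtt{Mj2},\mathtt{Mn3}\rangle$ are conservative tuples of operations, so Observation~\ref{obs:unarymultimorphism} shows that each is a multimorphism of every unary cost function. In particular, each remains a multimorphism of the finitely many unary functions newly added to form~$\vfuns_0'$, so $\vfuns_0'$ inherits the same STP/MJN multimorphism.

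Since any instance of $\VCSP(\vfuns_0)$ is trivially an instance of $\VCSP(\vfuns_0')$ with the same value~$\mincost(I)$, applying~\cite[Theorem 3.4]{KZ} to~$\vfuns_0'$ yields a polynomial-time algorithm that also solves $\VCSP(\vfuns_0)$. The only conceptual ``obstacle'' is the conservative enlargement step above, which is routine; the substantive algorithmic content --- combining submodular minimisation on distributive lattices (handling the STP directions) with majority-style consistency propagation (handling the MJN directions) --- is delegated wholesale to~\cite{KZ} as a black box, which is the cleanest available route.
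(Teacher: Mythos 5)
Your proposal is correct and follows essentially the same route as the paper, which gives no separate argument for this statement: it simply invokes \cite[Theorem~3.4]{KZ}, restricting to crisp languages precisely to sidestep the number-system issue you identify. Your auxiliary enlargement $\vfuns_0\cup\func_1(D,\{0,\infty\})$, justified via Observation~\ref{obs:unarymultimorphism}, is the same harmless device the paper itself uses in Corollary~\ref{cor:STPMJN-tractable}, so nothing beyond the black-box appeal to~\cite{KZ} is needed.
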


For our eventual construction, we would like
$\langle\sqcap,\sqcup\rangle$ to induce a multisorted
multimorphism of~$f_t$ for each individual valued constraint~$t$ in
the instance.  We do not know whether this is true of the multisorted
multimorphism provided by Kolmogorov and \Zivny's algorithm, but
something sufficiently close to this is true.

\begin{definition}
    For an instance~$I$, a valued constraint~$t$ and a length-$k_t$
    vector~$\veca$, define
    \begin{equation*}
        R_{I,t}(\veca)=\begin{cases}
            0,      &\text{if there exists $\vecx$ with
                      $\vecx[\sigma_t]=\veca$ and $f_I(\vecx)<\infty$};\\
            \infty, &\text{otherwise},
        \end{cases}
    \end{equation*}
    and define $f'_t=f_t+R_{I,t}$.  
\end{definition}

Thus, $f'_t$ is a ``trimmed'' version of $f_t$ whose domain is
precisely the $k_t$-tuples of values that can actually arise in
feasible solutions to instance~$I$.  We will see that if the scope
$\sigma_t$ contains variables with indices $i(t,1),\ldots,i(t,k_t)$,
then
\begin{equation*}
    \big\langle{\sqcap}[\sigma_t],\sqcup[\sigma_t]\big\rangle=
    \big\langle(\sqcap_{i(t,1)},\dots,\sqcap_{i(t,k_t)}),(\sqcup_{i(t,1)},\dots,\sqcup_{i(t,k_t)})\big\rangle
\end{equation*}
is a multisorted multimorphism of~$f'_t$, even though it might not
necessarily be a multisorted multimorphism of~$f_t$.

Note that Theorem~\ref{thm:STPMJN-tractable} has the following
consequence.

\begin{corollary}
\label{cor:STPMJN-tractable}
    Let $\vfuns_0$ be a finite, valued constraint language that has an
    STP/MJN multimorphism.  There is a polynomial-time algorithm that
    takes an instance $I$ of $\VCSP(\vfuns_0)$, a valued constraint $t$
    and 
    returns a truth table for $f'_t$.
\end{corollary}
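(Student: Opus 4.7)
The plan is to reduce the computation of the truth table of~$f'_t$ to solving polynomially many crisp feasibility queries that fall under Theorem~\ref{thm:STPMJN-tractable}. Since $f'_t(\veca) = f_t(\veca) + R_{I,t}(\veca)$ and $f_t(\veca)$ is a single evaluation of the atomic cost function associated with~$t$, the essential work is to decide, for each $\veca \in D^{k_t}$, whether $R_{I,t}(\veca) = 0$, i.e.\ whether the instance~$I$ admits an assignment $\vecx$ with $f_I(\vecx) < \infty$ and $\vecx[\sigma_t] = \veca$. Because $\vfuns_0$ is finite, the arity $k_t$ is bounded by a constant, so there are only $O(1)$ tuples $\veca$ to process, and a polynomial-time feasibility algorithm per $\veca$ suffices.

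First I would encode the feasibility question as a crisp $\VCSP$ instance. Let $\rels_0 = \toRel{\vfuns_0} \cup \{\{a\} \mid a \in D\}$, with singletons viewed as crisp unary cost functions. For each candidate $\veca$, build $I_\veca$ from $I$ by replacing each valued constraint $f_s$ with the relation $\toRel{f_s}$ and adding, for each $j \in \{1,\ldots,k_t\}$, the unary singleton $\{a_j\}$ on the variable $x_{i(t,j)}$. Then $R_{I,t}(\veca) = 0$ iff $\mincost(I_\veca) = 0$, and $\rels_0$ is finite since $\vfuns_0$ and $D$ are.

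The step I expect to be the main obstacle is verifying that $\rels_0$ inherits an STP/MJN multimorphism from~$\vfuns_0$; once this is in hand, Theorem~\ref{thm:STPMJN-tractable} supplies the required polynomial-time algorithm for $\VCSP(\rels_0)$, and iterating over the $O(1)$ tuples~$\veca$ fills in the truth table of~$f'_t$. Let $\langle \sqcap, \sqcup\rangle$ and $\langle \mathtt{Mj1}, \mathtt{Mj2}, \mathtt{Mn3}\rangle$ witness the STP/MJN multimorphism of~$\vfuns_0$ on some symmetric $M \subseteq D^2$. For each $f \in \vfuns_0$ and $\vecx,\vecy \in \toRel f$, the multimorphism inequality has finite right-hand side, so it forces $\sqcap(\vecx,\vecy), \sqcup(\vecx,\vecy) \in \toRel f$; hence $\langle\sqcap,\sqcup\rangle$ is a multimorphism of the crisp function $\toRel f$, and the same argument applies to the MJN triple. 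Since the operations remain conservative, Observation~\ref{obs:unarymultimorphism} guarantees that they are also multimorphisms of every crisp singleton in~$\rels_0$. The set~$M$ and the commutativity/conservativity conditions that parameterise the STP/MJN structure depend only on the operations themselves, so they carry through to~$\rels_0$ unchanged, completing the verification.
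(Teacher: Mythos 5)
Your proposal is correct and follows essentially the same route as the paper: for each of the constantly many tuples $\veca\in D^{k_t}$, pin the scope variables of~$t$ to~$\veca$ with crisp unary constraints and use the polynomial-time algorithm of Theorem~\ref{thm:STPMJN-tractable} to test feasibility, which together with the directly computable value $f_t(\veca)$ determines $f'_t(\veca)$. The only difference is cosmetic but welcome: the paper works with $\vfuns_0\cup\func_1(\dom,\{0,\infty\})$ and invokes Observation~\ref{obs:unarymultimorphism}, leaving implicit the passage to underlying relations, whereas you explicitly replace the constraints by $\toRel{\vfuns_0}$ and verify (via non-negativity of costs) that the crisp language inherits the STP/MJN multimorphism, which matches the crisp-only statement of Theorem~\ref{thm:STPMJN-tractable} exactly.
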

\begin{proof}
    By Observation~\ref{obs:unarymultimorphism}, any STP/MJN
    multimorphism of $\vfuns_0$ is also an STP/MJN multimorphism of
    $\vfuns_0' = \vfuns_0\cup \func_1(\dom,\{0,\infty\})$.  Now, for
    each vector $\veca\in\dom^{k_t}$ in turn, we can determine the
    value of $f'_t(\veca)$ as follows.  Let $I_\veca$ be the
    $\VCSP(\vfuns_0')$ instance that results from adding to~$I$
    the set of~$k_t$ crisp, unary, valued constraints that force the
    tuple of variables $\vecx[\sigma_t]$ to take value~$\veca$.  By
    Theorem~\ref{thm:STPMJN-tractable}, we can compute in polynomial
    time whether $f_{I_\veca}(\veca) < \infty$ and, thus, determine
    the value of~$f'_t(\veca)$.
\end{proof}

The truth table produced by the algorithm of
Corollary~\ref{cor:STPMJN-tractable} is finite since all valued
constraints in $\vfuns_0$ are finite.

\begin{theorem}[An extension to Theorem~\ref{thm:kzeasy}]\label{thm:extendkzeasy} 
    Suppose $\vfuns_0$ is a finite, valued constraint language which has an
    STP/MJN multimorphism.  Consider the algorithm from
    Theorem~\ref{thm:kzeasy} which takes an instance~$I$ of
    $\VCSP(\vfuns_0)$ (in the form~(\ref{eq:instance})) and returns a
    multisorted multimorphism $\langle\sqcap,\sqcup\rangle$ of
    $f_I$.  Then, for all $t\in T$, $\big\langle{\sqcap}[\sigma_t],
    \sqcup[\sigma_t]\big\rangle$ is a multisorted multimorphism of
    $f'_t$.
\end{theorem}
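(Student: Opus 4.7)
The strategy is to exploit the fact, stressed in Theorem~\ref{thm:kzeasy}, that the pair $\langle\sqcap,\sqcup\rangle$ returned by the algorithm depends only on the STP/MJN multimorphism of $\vfuns_0$ and on the underlying relation $\toRel{f_I}$, not on the specific numerical values of $f_I$. Fix a constraint $t\in T$; I would verify the three conditions of Definition~\ref{def:genSTP} for $f'_t$ in turn. The first two conditions (each component is a conservative, commutative operation on the correct domain) come essentially for free: for each $j\in\{1,\dots,k_t\}$, the set $\{a\in D\mid\exists\vecz,\ z_j=a,\ f'_t(\vecz)<\infty\}$ coincides with $\{a\in D\mid\exists\vecw,\ w_{i(t,j)}=a,\ f_I(\vecw)<\infty\}$, since $f'_t(\vecz)<\infty$ means exactly that $\vecz$ extends to a feasible assignment of $I$. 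This is precisely the domain on which $\sqcap_{i(t,j)},\sqcup_{i(t,j)}$ are already known to form an STP as part of the multisorted multimorphism of~$f_I$, so commutativity and conservativeness transfer.

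The central step is the multimorphism inequality for~$f'_t$. Given $\veca,\vecb$ in the domain of $f'_t$, I would pick feasible extensions $\vecx,\vecy$ of $I$ with $\vecx[\sigma_t]=\veca$ and $\vecy[\sigma_t]=\vecb$; these exist by the definition of $f'_t$ being finite at $\veca$ and $\vecb$. The multimorphism inequality on $f_I$ ensures $\sqcup(\vecx,\vecy)$ and $\sqcap(\vecx,\vecy)$ are feasible too, so their projections $\sqcup[\sigma_t](\veca,\vecb)$ and $\sqcap[\sigma_t](\veca,\vecb)$ lie in the domain of $f'_t$, where $f'_t$ agrees with $f_t$. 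To upgrade the $f_I$ inequality to one about $f_t$ alone, I would use a scaling trick: for each $N\geq 1$, let $I_N$ be the instance obtained from $I$ by replacing the single copy of valued constraint~$t$ by $N$ identical copies on the same scope. Duplicating constraints does not change feasibility, so $\toRel{f_{I_N}}=\toRel{f_I}$; by the relation-only dependence of Theorem~\ref{thm:kzeasy}, the algorithm returns the same $\langle\sqcap,\sqcup\rangle$ on input $I_N$, and this pair is therefore a multisorted multimorphism of $f_{I_N}=f_I+(N-1)f_t(\vecx[\sigma_t])$. Writing out that inequality and isolating the $f_t$ terms yields
\begin{equation*}
(N-1)\big[f_t(\sqcup[\sigma_t](\veca,\vecb))+f_t(\sqcap[\sigma_t](\veca,\vecb))-f_t(\veca)-f_t(\vecb)\big]\leq C,
\end{equation*}
where $C=f_I(\vecx)+f_I(\vecy)-f_I(\sqcup(\vecx,\vecy))-f_I(\sqcap(\vecx,\vecy))$ is a finite constant independent of~$N$. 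Letting $N\to\infty$ forces the bracketed quantity to be nonpositive, which is precisely the inequality needed.

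The main obstacle is justifying the scaling manoeuvre: one must confirm that $I_N$ is a genuine $\VCSP(\vfuns_0)$ instance (it is, since we merely repeat the same valued constraint) and, crucially, that the algorithm really reproduces the same $\langle\sqcap,\sqcup\rangle$ on $I_N$. This second point is the central invocation of the ``depends only on $\toRel{f_I}$'' clause of Theorem~\ref{thm:kzeasy}, and it is what makes the extension go through despite $\langle\sqcap,\sqcup\rangle$ having no \emph{a~priori} per-constraint meaning.
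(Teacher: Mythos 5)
Your proposal is correct and follows essentially the same route as the paper: pick feasible extensions of $\veca$ and $\vecb$, duplicate the valued constraint~$t$ to form $I_N$, invoke the fact that the algorithm's output depends only on $\toRel{f_{I_N}}=\toRel{f_I}$ to get the same multimorphism for $f_{I_N}$, and let $N\to\infty$ to isolate the inequality for $f_t$, then transfer it to $f'_t$ via feasibility of $\vecx\sqcap\vecy$ and $\vecx\sqcup\vecy$. The only differences are cosmetic (replacing the constraint by $N$ copies rather than adding $N$ extra ones, and spelling out the domain-matching step that the paper leaves implicit).
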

\begin{proof}
    Focus on a particular valued constraint $t$ of~$I$.  Let $k=k_t$
    be the arity of $f_t$, and for brevity denote $\sqcap[\sigma_t]$
    and $\sqcup[\sigma_t]$ by $\sqcap'$ and $\sqcup'\!$, respectively.
    Without loss of generality assume $\sigma_t=(1,2,\dots, k)$.  We
    wish to show that
    \begin{equation}\label{eq:multif'}
        f'_t(\veca\sqcap'\vecb)+f'_t(\veca\sqcup'\vecb)
            \leq f'_t(\veca)+f'_t(\vecb)
    \end{equation}
    for all $\veca,\vecb\in D_1\times\dots\times D_k$.  If either
    $f'_t(\veca)=\infty$ or $f'_t(\vecb)=\infty$, then we are done.
    Otherwise, by construction of~$f_t'$, there exist $\vecx$ and
    $\vecy$ such that $\veca=\vecx[\sigma_t]$,
    $\vecb=\vecy[\sigma_t]$, and $f_I(\vecx),f_I(\vecy)<\infty$.
    Notice that $f'_t(\veca)=f_t(\veca)<\infty$ and
    $f'_t(\vecb)=f_t(\vecb)<\infty$, also by construction of~$f'_t$.
    Now consider the augmented instance~$I_N$ of~$I$ with $N$ extra
    copies of the valued constraint $t$.  We have
    \begin{equation}\label{eq:InversusI}
    \begin{split}
        f_{I_N}(\vecx)&=f_I(\vecx)+Nf_t(\veca)\\
        f_{I_N}(\vecy)&=f_I(\vecy)+Nf_t(\vecb)\\
        f_{I_N}(\vecx\sqcap \vecy)&=f_I(\vecx\sqcap\vecy)+Nf_t(\veca\sqcap'\vecb)\\
        f_{I_N}(\vecx\sqcup \vecy)&=f_I(\vecx\sqcup\vecy)+Nf_t(\veca\sqcup'\vecb).
    \end{split}
    \end{equation}
    Since $\toRel {f_{I_N}}=\toRel {f_I}$, from
    Theorem~\ref{thm:kzeasy}, $\langle\sqcap,\sqcup\rangle$ is also a
    multisorted multimorphism of $f_{I_N}$, i.e.,
    \begin{equation*}
        f_{I_N}(\vecx\sqcap \vecy) + f_{I_N}(\vecx\sqcup \vecy)
            \leq f_{I_N}(\vecx)+f_{I_N}(\vecy).
    \end{equation*}
    Combining this with~(\ref{eq:InversusI}), we obtain
    \begin{equation}
    \label{eq:limit}
        f_t(\veca\sqcap'\vecb)+f_t(\veca\sqcup'\vecb)+O(N^{-1})\leq
        f_t(\veca)+f_t(\vecb)+O(N^{-1}).
    \end{equation}
    Since \eqref{eq:limit} remains true as $N\to\infty$ but $f_t$ is
    independent of $N$, we conclude that
    \begin{equation*}
        f_t(\veca\sqcap'\vecb) + f_t(\veca\sqcup'\vecb)
            \leq f_t(\veca)+f_t(\vecb).
    \end{equation*}
    Since $\veca\sqcap'\vecb$ and $\veca\sqcup'\vecb$ extend to
    feasible solutions $\vecx\sqcap\vecy$ and $\vecx\sqcup\vecy$, it
    follows that $f'_t(\veca\sqcap'\vecb)=f_t(\veca\sqcap'\vecb)$ and
    $f'_t(\veca\sqcup'\vecb)=f_t(\veca\sqcup'\vecb)$.  The required
    inequality~(\ref{eq:multif'}) follows immediately.
\end{proof}

To make use of Theorem~\ref{thm:extendkzeasy}, we will use the
following definitions.

\begin{definition}
    Given a finite, valued constraint language $\vfuns_0\subset\funs\extR$, let
    $\vfuns_0'$ be the set of functions of the form $f+R$, for
    $f\in\vfuns_0\cap\funsk\extR$, $R\in\funsk{\{0,\infty\}}$ and
    $k\in\nset$.
\end{definition}

Note that $\vfuns_0'$ is finite because $\funsk{\{0,\infty\}}$ is
finite for any finite~$k$.

\begin{definition}
\label{def:equiv}
Suppose that $I$ is an $n$-variable instance of $\VCSP(\vfuns)$ with domain~$D$
and $I'$ is an $n'$-variable instance of $\VCSP(\vfuns')$ with domain~$D'$.
We say   
that $I$ and $I'$ are \emph{equivalent}
if there is a bijection~$\pi$
from  $\{ \vecx\in D^n \mid f_I(\vecx)<\infty\}$
to 
$\{ \vecx'\in {D'}^{n'} \mid f_{I'}(\vecx')<\infty\}$
such that, for all
$\vecx$ in the domain of~$\pi$,
$f_I(\vecx) = f_{I'}(\pi(\vecx))$. \end{definition}
 
Lemma~\ref{cor:equivInstExtn} below will construct equivalent
instances~$I$ and~$I'$ in a setting where $n=n'$ and $D=D'$.
In this case, $\pi$ will be the identity bijection from~$D^n$ to itself.
Later, we will consider equivalences of instances with different domains.

\begin{definition} \cite{CCJ}
    A function $f\colon D_1 \times \dots \times D_r \to \extR$ is
    \emph{domain-reduced} if, for each $i\in\{1, \dots, r\}$, and for
    each $a\in D_i$, there is an $\vecx\in \dom^n$ such that $x_i=a$
    and $f(\vecx)<\infty$.
\end{definition}

\begin{lemma}\label{cor:equivInstExtn}
    Suppose $\vfuns_0$ is a finite, valued constraint language which has an
    STP/MJN multimorphism.  Consider an instance~$I$ of
    $\VCSP(\vfuns_0)$.  There is an equivalent instance $I'$ of
    $\VCSP(\vfuns_0')$ and a multisorted multimorphism
    $\langle\sqcap,\sqcup\rangle$ of~${f_{I'}}$ which induces a
    multisorted multimorphism of~$f_t$ for each valued constraint
    $t$ of $I'\!$.  Both $I'$ and $\langle\sqcap,\sqcup\rangle$ are
    polynomial-time computable (given~$I$).  Moreover, each operation
    $\sqcap_i$ and $\sqcup_i$ induces a total order.
\end{lemma}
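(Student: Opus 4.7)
The plan is to construct $I'$ explicitly from $I$ and then verify each claim in turn. First, I would apply the algorithm of Theorem~\ref{thm:kzeasy} to $I$ to obtain, in polynomial time, a multisorted multimorphism $\langle\sqcap,\sqcup\rangle$ of $f_I$. Then, for each valued constraint $t$ of $I$, I would invoke Corollary~\ref{cor:STPMJN-tractable} to tabulate $f'_t = f_t + R_{I,t}$; this function lies in $\vfuns_0'$ by definition, since $f_t\in\vfuns_0$ and $R_{I,t}$ is a $\{0,\infty\}$-valued function of arity~$k_t$. I would then define $I'$ to be the $\VCSP(\vfuns_0')$ instance obtained from $I$ by replacing each valued constraint $f_t(\vecx[\sigma_t])$ with $f'_t(\vecx[\sigma_t])$, keeping the scopes unchanged.

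Next, I would verify equivalence and the multimorphism property. Since $R_{I,t}\ge 0$ pointwise, $f_{I'}(\vecx)\ge f_I(\vecx)$; for any $\vecx$ with $f_I(\vecx)<\infty$, the vector $\vecx$ itself witnesses $R_{I,t}(\vecx[\sigma_t])=0$ for every $t$, so $f_{I'}(\vecx)=f_I(\vecx)$. Hence $f_I = f_{I'}$ pointwise, giving equivalence of $I$ and $I'$ via the identity bijection, and in particular $\langle\sqcap,\sqcup\rangle$ is also a multisorted multimorphism of $f_{I'}$. Theorem~\ref{thm:extendkzeasy}, applied to $I$, gives that $\langle\sqcap[\sigma_t],\sqcup[\sigma_t]\rangle$ is a multisorted multimorphism of $f'_t$ for each~$t$; since $f'_t$ is precisely the $t$-th valued constraint of~$I'$, this is exactly the induced-multimorphism property demanded by the lemma. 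Polynomial-time computability is then immediate from Theorem~\ref{thm:kzeasy} and Corollary~\ref{cor:STPMJN-tractable}.

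The hard part will be the remaining claim, that each $\sqcap_i$ and $\sqcup_i$ induces a total order on $D_i$. Defining $a\preceq_i b$ iff $a\sqcap_i b = a$, reflexivity follows from conservativity (forcing $\sqcap_i(a,a)=a$), antisymmetry from commutativity of $\sqcap_i$, and totality from $a\sqcap_i b\in\{a,b\}$; transitivity, however, is non-obvious, since a commutative conservative binary operation in general specifies no more than a symmetric tournament. To secure transitivity, I would inspect Stages~1--2 of the Kolmogorov--\Zivny{} construction in the proof of \cite[Theorem~3.4]{KZ}: there, the STP part of the multimorphism is obtained by first fixing a total order on each~$D_i$ and then setting $\sqcap_i$ and $\sqcup_i$ to be the corresponding min and max. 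Since the algorithm of Theorem~\ref{thm:kzeasy} is extracted verbatim from Stages~1--2, its output inherits this total-order structure, closing the argument.
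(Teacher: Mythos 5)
Your construction of $I'$ (replace each $f_t$ by $f'_t$ via Corollary~\ref{cor:STPMJN-tractable}), the equivalence argument, the use of Theorem~\ref{thm:extendkzeasy} to get induced multisorted multimorphisms of the $f'_t$, and the polynomial-time claims all match the paper's proof. The gap is in the last step, the one you yourself flag as hard. Your resolution rests on the assertion that in Stages~1--2 of the Kolmogorov--\Zivny{} proof of \cite[Theorem~3.4]{KZ} the operations $\sqcap_i,\sqcup_i$ are obtained by fixing a total order on each $D_i$ and taking min and max, so that the output of Theorem~\ref{thm:kzeasy} automatically induces total orders. That is not what Theorem~\ref{thm:kzeasy} guarantees, and it is not how the paper proceeds: the output of Stages~1--2 is only a multisorted \emph{symmetric tournament pair} on each $D_i$ (conservative and commutative), and, exactly as you observe, the relation $a\preceq_i b \iff a\sqcap_i b=a$ defined by such a pair is a tournament that need not be transitive. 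If the KZ construction already produced chain operations, the lemma's final clause would be immediate and the paper would not need any further work; instead the paper explicitly says it must \emph{modify} $\langle\sqcap,\sqcup\rangle$ to obtain a new pair $\langle\sqcap'\!,\sqcup'\rangle$ whose components induce total orders.

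That modification is the real content you are missing, and it is not a cosmetic step. The paper invokes a claim established inside the proof of \cite[Theorem~8.2]{CCJ}: the instance is augmented with redundant crisp unary and binary valued constraints (the binary ones enforce consistency so that rectifying $\sqcap_i$ into a chain on $D_i$ forces compatible rectifications of every other $\sqcap_j$), and then one proves, by induction on arity, that the rectified pair $\langle\sqcap'\!,\sqcup'\rangle$ is still a multimorphism of every domain-reduced function $\oldphi$ for which the restriction of the original pair was a multimorphism --- in particular of $f_{I'}$ and of each $f'_t$. Your proposal contains neither the rectification nor this preservation argument, and the preservation is essential: even granting some way to make each $\sqcap_i$ transitive, you would still owe a proof that the new operations remain multimorphisms of the individual constraints $f'_t$, which is precisely what the lemma asserts. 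As written, the final claim of the lemma is therefore unproven in your argument.
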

\begin{proof}
    We first show how to construct an equivalent instance~$I'$ and a
    multisorted multimorphism of $f_{I'}$ which induces
    multisorted multimorphisms on the valued constraints.  To
    obtain $I'\!$, start from the instance~$I$ and use
    Corollary~\ref{cor:STPMJN-tractable} to replace each valued
    constraint $f_t(\vecx[\sigma_t])$ with $f_t'(\vecx[\sigma_t])$.
    This operation clearly preserves the set of feasible solutions and
    their costs.  Then use the algorithm from
    Theorem~\ref{thm:extendkzeasy} to construct the multisorted
    multimorphism $\langle\sqcap,\sqcup\rangle$.

    In the remainder of the proof, we construct a new multisorted
    multimorphism by modifying $\langle\sqcap,\sqcup\rangle$ to ensure
    that  
its components induce total orders,
as required. Consider the
    following claim.

    {\sl Claim:\quad Suppose that D is a domain.  Given a set of
      functions~$\vfuns\subseteq \funs{\extR}$, let $\mathcal{P}$ be
      an instance of $\VCSP(\vfuns)$ with variable set
      $\{v_1,\ldots,v_n\}$.  Let $D_i = \{a \in D \mid \exists \vecx:
      \mbox{$x_i=a$ and $f_{\cal P}(\vecx)<\infty$}\}$.  Suppose that
      $\langle \sqcap,\sqcup\rangle$ is a multisorted
      multimorphism of~$\mathcal{P}$.  Then there is a multisorted
      multimorphism $\langle \sqcap'\!,\sqcup'\rangle$ of~$\mathcal{P}$
      in which each $\sqcap'_i$ 
induces      
      a total order on~$D_i$ (hence
      $\sqcup'_i$ 
induces
the reversal of this total order).  Furthermore,
      for any set $J=\{i_1,\ldots,i_j\} \subseteq \{1, \dots, n\}$ and any
      domain-reduced function $\oldphi\colon D_{i_1} \times \dots \times D_{i_j}
      \to \extR$ for which $\langle \sqcap_J,\sqcup_J\rangle$
      is a multimorphism, $\langle \sqcap'_J,\sqcup'_J\rangle$ is also
      a multimorphism of~$\oldphi$.  The multimorphism $\langle
      \sqcap'\!,\sqcup'\rangle$ is polynomial-time computable.}

    This claim is proved (but not explicitly stated) in the proof of
    \cite[Theorem~8.2]{CCJ}.\footnote{\cite{CCJ} uses somewhat
      different notation to ours: our $\sqcap$ and~$\sqcup$ are their
      $f$ and~$g$, respectively; in \cite{CCJ}, $\langle f,g\rangle$
      denotes an ordered pair, whereas we use that notation to denote
      a clone.}
    The basic method is as follows.
    $\mathcal{P}$ is augmented with extra (redundant) valued
    constraints using unary and binary crisp cost functions.  The
    binary crisp cost functions are used to enforce consistency so
    that when $\sqcap_i$ is modified 
   to induce
 a total order on~$D_i$, a
    compatible modification is made to each other $\sqcap_j$.  Once
    $\sqcap$ and~$\sqcup$ are constructed, it is proved by induction
    that every relevant function~$\oldphi$ has the property specified in the
    claim.  The induction is on the arity of~$\oldphi$.

    To prove the 
lemma,
    we use the claim with $\vfuns=\vfuns_0'$,
    $\mathcal{P} = {I'}$ and, for each valued constraint~$t$ of~$I'\!$,
    $\oldphi=f'_t$.
\end{proof}

\begin{lemma}\label{lem:lsm_easy}
    If $\nfuns\subseteq\funs{\bddQ}$ and $\toVal\nfuns$ has an STP/MJN
    multimorphism, then $\nCSP(\nfinfuns)$ is \LSM{}-easy for every
    finite $\nfinfuns\subset\nfuns$.
\end{lemma}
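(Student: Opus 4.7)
The plan is to encode each non-Boolean variable by a vector of Boolean variables using the \emph{threshold} (order-monotone) encoding arising from the total order supplied by the STP components of the multisorted multimorphism produced by Lemma~\ref{cor:equivInstExtn}. Under this encoding, the meet and join operations of the multimorphism become coordinate-wise AND and OR on Boolean tuples, so the multisorted multimorphism inequality for each individual trimmed constraint $f'_t$, after exponentiation, translates literally into the log-supermodularity inequality for the Boolean-encoded weight function. Since the oracle returns the partition function of the Boolean instance directly, this will yield an AP-reduction (in fact, a polynomial-time reduction that preserves the partition function exactly).

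Given a finite $\nfinfuns \subset \nfuns$ and an instance $I$ of $\nCSP(\nfinfuns)$, view $I$ as a $\VCSP(\toVal\nfinfuns)$ instance via $f_I = -\ln F_I$ and apply Lemma~\ref{cor:equivInstExtn} to obtain, in polynomial time, an equivalent $\VCSP$ instance $I'$ together with a multisorted multimorphism $\langle\sqcap,\sqcup\rangle$ of $f_{I'}$ such that each $\sqcap_i$ induces a total order $\leq_i$ on the effective domain $D_i \subseteq D$ of variable $v_i$, each $\sqcup_i$ induces the reverse order, and $\langle\sqcap[\sigma_t],\sqcup[\sigma_t]\rangle$ is a multisorted multimorphism of every trimmed constraint $f'_t$. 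Enumerate $D_i = \{a_{i,1} <_i \cdots <_i a_{i,m_i}\}$, introduce $m_i-1$ Boolean variables $b_{i,1},\ldots,b_{i,m_i-1}$, and encode $x_i = a_{i,j}$ by the threshold vector $(1^{j-1},0^{m_i-j})$. Insert implication atomic formulas $\IMP(b_{i,k+1},b_{i,k})$ for $1 \leq k \leq m_i-2$ to restrict support to valid encodings; the function $\IMP$ is log-supermodular. For each valued constraint~$t$, replace the weighted counterpart $F'_t = \exp(-f'_t) = F_t \cdot \mathbf{1}\bigl[\vecx[\sigma_t]\in\toRel{f'_t}\bigr]$ by its Boolean-encoded form $G_t$, defined via the encoding on valid tuples and extended by $0$ elsewhere.

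The threshold encoding is an order-isomorphism from $(D_i,\leq_i)$ onto its image in $\{0,1\}^{m_i-1}$ with the coordinate-wise partial order, so it sends $\sqcap_i$ to $\wedge$ and $\sqcup_i$ to $\vee$. Exponentiating the multimorphism inequality for $f'_t$ gives $F'_t(\vecx \sqcap \vecy) F'_t(\vecx \sqcup \vecy) \geq F'_t(\vecx) F'_t(\vecy)$, which on the Boolean side reads $G_t(\vecb \wedge \vecb') G_t(\vecb \vee \vecb') \geq G_t(\vecb) G_t(\vecb')$ for valid threshold vectors; this extends to all Boolean vectors by a short case check (using that $G_t$ vanishes off the valid set and that meets and joins of valid threshold vectors are valid), so $G_t \in \LSM$. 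The encoding is a weight-preserving bijection between the support of $F_{I'}$ and the valid assignments of the resulting Boolean instance, hence the Boolean partition function equals $Z(I')=Z(I)$. The principal obstacle is that one must produce a \emph{single} finite Boolean language $\nfinfuns' \subset \LSM$ that works for \emph{every} instance: the specific $G_t$ depends on $I$ through the trimming $R_{I,t}$ and the total orders $\leq_i$. This is resolved by observing that each $G_t$ is determined by the triple $(F_t, R_{I,t}, (\leq_i)_{i\in\sigma_t})$ and that there are only finitely many such triples because both $\nfinfuns$ and~$D$ are finite; taking $\nfinfuns'$ to consist of $\IMP$ together with the (finite) set of all $G_t$ arising from such triples completes the AP-reduction from $\nCSP(\nfinfuns)$ to $\nCSP(\nfinfuns')$.
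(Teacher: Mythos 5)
Your proposal is correct and follows essentially the same route as the paper's proof: apply Lemma~\ref{cor:equivInstExtn}, encode each variable by the monotone ``threshold''/down-set vector determined by the total order coming from $\sqcap_i$, observe that meets and joins of valid encodings are valid so that exponentiating the multisorted multimorphism inequality for the trimmed constraints yields log-supermodular Boolean weight functions, and note that only finitely many Boolean functions (independent of the instance) can arise, giving a partition-function-preserving reduction. The only cosmetic differences are that the paper enforces validity of encodings with a single crisp (log-supermodular) relation per variable rather than your chain of $\IMP$ constraints, and that when you define $\nfinfuns'$ from all triples $(F_t,R_{I,t},(\leq_i))$ you should restrict to those resulting $G_t$ that are log-supermodular (which includes all that actually arise) so that $\nfinfuns'\subset\LSM$ literally holds.
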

\begin{proof}
    Let $\nfinfuns$ be a finite subset of $\nfuns$.  To any
    instance $I_\#$ of $\nCSP(\nfinfuns)$ there corresponds an
    instance~$I=\toVal{I_\#}$ of $\VCSP(\vfuns_0)$, where
    $\vfuns_0=\toVal {\nfinfuns}$: for each weighted constraint~$t$,
    the function $F_t$ is mapped to $f_t=\toVal{F_t}$ while the scope
    $\sigma_t$ remains unchanged.  Using
    Lemma~\ref{cor:equivInstExtn}, we may construct an equivalent
    instance $I'$ of $\VCSP(\vfuns_0')$ on the domain
    $D_1\times\dots\times D_n$ and a multisorted multimorphism
    $\langle\sqcap,\sqcup\rangle$ of that instance, where each~$\sqcap_i$ 
 induces   
    a
    total order. $\langle \sqcap,\sqcup \rangle$ induces a multisorted multimorphism
    of each~$f_t$.

    We now construct an instance $I''$ over the Boolean domain that is
    equivalent to~$I'$ and hence to~$I$.  For each $i$, $1\leq i\leq
    n$, introduce a set of $|D_i|+1$ Boolean variables
    $V_i=\{z_{i,a}\mid a\in D_i^+\}$, where $D_i^+=D_i\cup\{\bot\}$.
    Extend the total order on~$D_i^+$ by placing~$\bot$ below all
    elements of~$D_i$.  Define a nested sequence of subsets of $D_i^+$
    by $U_{i,a}=\{b\in D_i^+\mid b<a\}$.  The idea is that 
the bijection~$\pi$ that establishes the equivalence between~$I'$ and~$I''$
maps
    each domain
    element $a\in D_i$  
to the sequence of Boolean values that
    assigns~$1$ to all variables in $U_{i,a}$, and $0$ to the others
in~$V_i$.
    Consider the constraint asserting that only these $|D_i|$
    particular assignments to~$V_i$ are allowed.  This constraint can
    be represented by
    the crisp cost function~$f$ that assigns $f(\vecx)=0$ to these
    assignments and $f(\vecx)=\infty$ to all others. Note that $F(x) =
    \exp(-f(x))$ is log-supermodular.

    Note that we can use the same relation for any pair of sets~$D_i$
    and~$D_j$ with $|D_i|=|D_j|$ --- if $D_i$ and $D_j$ have different
    total orders then the relation is applied to the variables in
    $D_i^+$ in a different order than to the variables in $D_j^+$.  If
    we add these crisp valued constraints then there is a natural
bijection~$\pi$
   between $D_1\times \dots\times D_n$ and feasible
    assignments to Boolean variables $V_1\cup\dots\cup V_n$.  The
    variable $z_{i,a}$ where $a$ is the smallest (respectively
    largest) element of $D_i^+$ always takes on the value~1
    (respectively~0), and so these variables are redundant.  However,
    their introduction simplifies the description of some
    constructions later in the proof.

    Consider a valued constraint in $I'$ of arity $k$ that imposes the
    function $f'\in \vfuns_0'$, and, without loss of generality,
    assume that its scope is the first $k$ variables $x_1,\dots,x_k$.
    Add a corresponding valued constraint $f''$ to~$I''$ with
    $f''\colon2^{V_1\cup\dots\cup V_k}\to\extR$ defined as follows,
    where for convenience we are viewing $f''$ as a function on
    subsets of $V_1\cup\dots\cup V_k$ rather than as a function of
    $|V_1|+\dots+|V_k|$ Boolean variables:
    \begin{equation*}
        f''(A)=\begin{cases}
            f'(a_1,\dots,a_k),&\text{if $A=U_{1,a_1}\cup\dots\cup U_{k,a_k}$ 
                                            for some $(a_1,\dots,a_k)$;}\\
            \infty,&\text{otherwise}.
        \end{cases}
    \end{equation*}
    We claim $f''$ is submodular, i.e., $f''(A\cap B)+f''(A\cup B)\leq
    f''(A)+f''(B)$.  If either $f''(A)=\infty$ or $f''(B)=\infty$
    there is nothing to prove.  So $A=U_{1,a_1}\cup\dots\cup
    U_{k,a_k}$ and $B=U_{1,b_1}\cup\dots\cup U_{k,b_k}$ for some
    $(a_1,\dots,a_k),(b_1,\dots,b_k)\in D_1\times\dots\times D_k$.
    Then
    \begin{align*}
        &f''(A\cap B)+f''(A\cup B)\\
        &=f''\big((U_{1,a_1}\cap U_{1,b_1})\cup\dots\cup (U_{k,a_k}\cap U_{k,b_k})\big)\\
        &\qquad\null+f''\big((U_{1,a_1}\cup U_{1,b_1})\cup\dots\cup (U_{k,a_k}\cup U_{k,b_k})\big)\\
        &=f''(U_{1,a_1\sqcap_1 b_1}\cup\dots\cup U_{k,a_k\sqcap_k b_k})\\
        &\qquad\null+f(U_{1,a_1\sqcup_1 b_1}\cup\dots\cup U_{k,a_k\sqcup_k b_k})\\
        &=f'(a_1\sqcap_1 b_1,\dots, a_k\sqcap_k b_k)+f'(a_1\sqcup_1 b_1,\dots,a_k\sqcup_k b_k)\\
        &\leq f'(a_1,\dots, a_k)+f'(b_1,\dots,b_k)\\
        &=f''(A)+f''(B).
    \end{align*}

    Now take stock.  We have an instance $I''$ of Boolean \VCSP{},
    which is equivalent to~$I'$ and hence to~$I$.  It has at most
    $n(|D|+1)$ Boolean variables and it has $n$ more valued
    constraints than~$I$.  The number of distinct valued
    constraints in~$\vfuns_0''$ is~$|\vfuns_0''|\leq|\vfuns'_0|+|D|$;
    note that these come from a fixed set of cost functions
    independent of the instance~$I$ and hence of $I_\#$ itself.
  
    Now map the \VCSP{} instance $I''$ back to \nCSP{} to yield
    an instance~$I_\#''$ over the Boolean domain in which every valued
    constraint comes from a certain fixed set of cost functions
    $\nfuns_0''\subset\LSM$.  Specifically, $I''=\toVal{I_\#''}$ and
    $\vfuns_0''=\toVal{\nfuns_0''}$.  Since $I''$ is equivalent
    to~$I$, there is a bijection between the non-zero terms of
    $Z(I_\#)$ and $Z(I_\#'')$ that preserves weights, and hence
    $Z(I_\#)=Z(I_\#'')$.
\end{proof}

Lemma~\ref{lem:lsm_easy} shows that, if $\toVal\nfuns$ has an STP/MJN multimorphism,
then $\nCSP(\nfinfuns)$ is \LSM{}-easy for every finite $\nfinfuns\subset\nfuns$.
Lemma~\ref{lem:bis_easy} below strengthens the result by showing
that $\nCSP(\nfinfuns)$ is \nBIS{}-easy.
The strengthening applies when the weight functions in $\nfuns$ have arity at most two.

In order to do the strengthening, we need to generalise the notion of a binary submodular
function to cover binary functions over larger domains. Let~$D$ and~$D'$ be ordered sets.
Following~\cite{new}, we say that a function $f:D\times D' \rightarrow \extR$
is \emph{submodular} if, for all $r,s\in D$ and all $r',s'\in D'$,
$$f(\min(r,r'),\min(s,s')) + f(\max(r,r'),\max(s,s')) \leq f(r,s) + f(r',s').$$
To apply this concept here, suppose that~$f$ is a function with domain $D_i\times D_j$.
Given orders on~$D_i$ and~$D_j$, let $D_i(\ell)$ and $D_j(\ell)$ denote the 
$\ell$'th element of~$D_i$ and~$D_j$, respectively.
Submodularity of~$f$ is equivalent to saying that
the  $|D_i|\times |D_j|$ matrix $M_f$ 
satisfies the Monge property~\cite{RudolfWoeginger}
where, as in Section~\ref{sec:balance}, 
$(M_f)_{k \ell} = f(D_i(k),D_j(\ell))$.
We 
extend Definition~\ref{def:fnLSM} by saying
that a function~$F: D_i\times D_j \to \bddQ$
is 
log-supermodular
(with respect to the given orders) if the function $\toVal F$ is 
sub-modular
(with respect to the same orders).

\begin{lemma}\label{lem:bis_easy}
If $\nfuns\subseteq\funs{\bddQ}$ is a weighted constraint
language whose weight functions have
arity at most two and $\toVal\nfuns$ has an STP/MJN
multimorphism, then $\nCSP(\nfinfuns)$ is \nBIS{}-easy for every
finite $\nfinfuns\subset\nfuns$.
\end{lemma}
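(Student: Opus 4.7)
My plan is to refine the construction in the proof of Lemma~\ref{lem:lsm_easy}, exploiting the arity-at-most-two assumption to produce an equivalent Boolean weighted counting CSP instance in which every weight function acts on at most two Boolean variables. Such an instance corresponds to the partition function of a ferromagnetic Ising model with external fields, which is \nBIS{}-easy by the work of Goldberg and Jerrum~\cite{GJPotts}.

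I would begin, as in the proof of Lemma~\ref{lem:lsm_easy}, by invoking Lemma~\ref{cor:equivInstExtn} on an instance $I_\#$ of $\nCSP(\nfinfuns)$ to obtain an equivalent \VCSP{} instance $I'$ over totally ordered domains $D_1,\dots,D_n$, together with a multisorted multimorphism $\langle\sqcap,\sqcup\rangle$ of $f_{I'}$ that also yields a multisorted multimorphism of each $f'_t$. Under the arity-at-most-two hypothesis, every binary $f'_t\colon D_i\times D_j\to\extR$ is submodular with respect to the total orders on $D_i,D_j$ induced by $\sqcap_i,\sqcap_j$. I would then invoke the generalised Monge decomposition of Cohen, Cooper, Jeavons and Krokhin~\cite{new}, extending Rudolf and Woeginger~\cite{RudolfWoeginger}, to write
\begin{equation*}
    f'_t(x,y) = u_t(x) + v_t(y) + \sum_k c_{t,k}\, h_{t,k}(x,y),
\end{equation*}
where $u_t,v_t$ are unary cost functions on $D_i,D_j$, each coefficient $c_{t,k}\in[0,\infty]$ (with $c_{t,k}=\infty$ used for crisp pieces), and each $h_{t,k}$ is a $2$-monotone ``anti-rectangle'' indicator of the form $[x<a\wedge y>b]$ or $[x>a\wedge y<b]$.

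I would then apply the Boolean thermometer encoding from Lemma~\ref{lem:lsm_easy}. Its chain constraints become crisp \IMP{} constraints $z_{i,a_2}\leq z_{i,a_1}$ for $a_1<a_2$. Under this encoding, $x<a$ translates to $z_{i,a^-}=0$ (with $a^-$ the predecessor of $a$ in $D_i^+$) and $y>b$ to $z_{j,b}=1$, so each crisp anti-rectangle $[x<a\wedge y>b]$ becomes a single implication constraint $z_{j,b}\leq z_{i,a^-}$, and similarly in the other orientation. A softened anti-rectangle with finite $c_{t,k}$ becomes a binary log-supermodular weight function on exactly two Boolean variables. Each unary cost $u_t$ on $D_i$ expands additively into unary cost functions on the individual $z_{i,a}$ via the telescoping $u_t(a_j)-u_t(a_1)=\sum_{l<j}\bigl(u_t(a_{l+1})-u_t(a_l)\bigr)$, with differences of either sign absorbed by choosing non-negative unary costs on each Boolean state.

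The resulting Boolean instance $I_\#''$ then satisfies $Z(I_\#)=Z(I_\#'')$ and uses only unary weight functions and binary log-supermodular weight functions on pairs of Boolean variables, drawn from a fixed finite set depending only on $\nfinfuns$. Since every binary log-supermodular function on $\{0,1\}^2$ factors as a product of two unary functions and a ferromagnetic Ising coupling $\exp(\gamma xy)$ with $\gamma\geq 0$, $Z(I_\#'')$ is the partition function of a ferromagnetic Ising model (on a multigraph) with arbitrary external fields; this is AP-reducible to~\nBIS{} by~\cite{GJPotts}. The main obstacle will be to verify that the Cohen-Cooper-Jeavons-Krokhin decomposition applies cleanly to $\extR$-valued submodular cost functions (treating infinite entries as crisp pieces with coefficient $\infty$), and to check that the orientation of submodularity enforced by $\langle\sqcap,\sqcup\rangle$ is precisely the one whose Monge decomposition produces anti-rectangles of the kind that map to implications under the thermometer encoding.
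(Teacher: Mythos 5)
Up to the final step your construction is essentially the paper's: Lemma~\ref{cor:equivInstExtn}, submodularity of each binary $f'_t$ with respect to the orders induced by $\sqcap$, the Rudolf--Woeginger/Cohen--Cooper--Jeavons--Krokhin decomposition into unary parts plus (possibly crisp) anti-rectangle pieces, and the thermometer encoding under which each anti-rectangle depends on just two Boolean variables. The divergence, and the genuine gap, is in how you finish. Your final instance is \emph{not} a ferromagnetic Ising partition function with external fields: it necessarily contains crisp constraints --- the chain/thermometer constraints restricting each block $V_i$ to the $|D_i|$ allowed assignments, and every anti-rectangle piece with coefficient $c_{t,k}=\infty$ (these arise whenever a weight function in $\nfinfuns$ has zeros). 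A hard implication, i.e.\ the weight function $\IMP$ with $\IMP(1,0)=0$ and the other three values $1$, cannot be written as $u(x)\,v(y)\,e^{\gamma xy}$ for any finite $\gamma\geq 0$, since a product of strictly positive unaries and a finite coupling is strictly positive, and making a unary vanish kills a diagonal entry as well. So the factorisation claim in your last paragraph fails exactly on the crisp pieces, and the cited result does not apply to the instance you have built. (A secondary point: the \nBIS{}-easiness of ferromagnetic Ising with mixed-sign fields is not in \cite{GJPotts}, which concerns the Potts model; it is a different Goldberg--Jerrum paper not in this bibliography.)

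To close the gap you need one of two things. Either express the soft two-variable pieces (and the approximations of the unary weights) in terms of $\IMP$ and $\uns{\{0,1\}}$ and invoke $\nCSP(\IMP)\APred\nBIS$ \cite[Theorem~5]{wbool} together with \cite[Lemma~13.1]{LSM} --- this is what the paper does, implementing the basis function $B^\alpha_{a,b}$ with $\alpha>0$ by a small gadget $\IMP(z_{i,a^-},w)\,\IMP(z_{j,b},w)\,U_\alpha(z_{j,b})\,U_{1/\alpha-1}(w)$ with a summed auxiliary variable $w$, so that \emph{everything}, crisp and soft, lives in $\fclone{\{\IMP\}\cup\uns{\{0,1\}}}$. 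Or, if you insist on landing on ferromagnetic Ising, you must replace each hard implication by a sufficiently strong finite coupling plus fields and carry out an explicit error analysis showing the substitution preserves the partition function within the accuracy demanded by an AP-reduction; this is doable but is real work that your sketch does not address. Your worry about infinite entries in the decomposition is legitimate but is the smaller issue (the paper simply treats the $\alpha=0$ basis functions as single $\IMP$ constraints, as you also do); the unsupported step is the identification of the final instance with a ferromagnetic Ising model.
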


\begin{proof}
Let $\nfinfuns$ be a finite subset of $\nfuns$.  We use exactly the same construction as in the previous lemma, but go further and    show that every weight function $F''$ appearing in instance    $I_\#''$ is expressible in terms of unary weight functions in    $\uns{\{0,1\}}$, and the binary weight function $\IMP$ defined by    $\IMP(0,0)=\IMP(0,1)=\IMP(1,1)=1$ and $\IMP(1,0)=0$.  Moreover,    unary weight functions in $\uns{\{0,1\}}$ (even those taking    irrational values) can be approximated sufficiently closely by    polynomial-sized pps-formulas using $\IMP$    
\cite[Lemma~13.1]{LSM}.    
This will complete the proof, since $\nCSP(\IMP)\APred \nBIS$ by \cite[Theorem~5]{wbool}.

The task then, is to show that every weight function~$F''$ 
in instance $I_\#''$ is expressible
in terms of unary weight functions in $\uns{\{0,1\}}$ and $\IMP$.
We do this by considering, in turn, the different 
types of weight functions arising in~$I_\#''$.
The $n$ relations (crisp cost functions) that were introduced in~$I''$ to    impose a total order on the variables in the sets $V_i$ are    clearly implementable in terms of~$\imp=\toVal{\IMP}$.  
   
Every  other  weight function $F''$ is associated with a
cost function~$f''$ in~$I''$ 
that is an implementation  over the Boolean domain of    
a  cost function~$f'$ from~$I'\!$.
Since $f'\in \vfuns_0$, it has arity at most~$2$.
Our goal 
is to show that the function
$F'(\vecx) = \toValsymb^{-1}(f'(\vecx))=\exp(-f'(\vecx))$ is expressible in terms of 
unary weight functions in $\uns{\{0,1\}}$ and $\IMP$. 
If $f'$ is unary, this is immediate, so suppose $f'$ is binary.

To fix the notation, suppose that $f'$ is a function $f'\colon D_i \times D_j \to \extR$.
We can assume without loss of generality that $D_i$ and $D_j$
are disjoint (otherwise, rename some elements). Also, $D_i$ and $D_j$ are ordered according
to the linear order induced by $\sqcap$. Since $\langle \sqcap, \sqcup \rangle$ induces
a multisorted multimorphism of~$f'$ (see Definition~\ref{def:genSTP}), the function~$f'$ is 
submodular
(with respect to this order).

Building on the work of Rudolf and    Woeginger~\cite{RudolfWoeginger},
Cohen, Cooper, Jeavons and Krokhin~\cite[Lemma 4.5]{new}
have shown that every 
binary submodular function
is expressible as a positive linear combination of
certain simple 
binary submodular functions.
Translated to our setting by applying $\toValsymb^{-1}$, this says that~$F'$ is expressible as a product of certain    
simple basis functions, namely
the binary functions
\begin{equation*}
         B^\alpha_{a,b}(x,y)=\begin{cases}
                        \alpha,&\text{if $x\geq a$ and $y\leq b$;} \\                    
                        1,&\text{otherwise},
                     \end{cases}
                     \end{equation*}
for all $(a,b)\in(D_i,D_j)$ 
(with a similar set of binary functions  defined by replacing $x\geq a$ and $y\leq b$ by $x\leq a$ and    $y\geq b$),
where $\alpha$ is an arbitrary    constant in the range $[0,1]$.

$B_{a,b}^0(x,y)$ may be implemented as $\IMP(z_{i,a^-},z_{j,b})$,
where $a^-$ is the element immediately below $a$ in the total order    on~$D_i$.
(To see this, note that the constraint rules out the possibility that $z_{i,a^-}=1$, which corresponds to $x\geq a$
together with $z_{j,b}=0$, which corresponds to $y\leq b$.)
Let
\begin{equation*}
        U_\beta(z)=\begin{cases}
                      \beta,&\text{if $z=0$};\\
                      1,&\text{if $z=1$}.
                  \end{cases}
    \end{equation*} 
Then for $\alpha>0$,
the basis function 
$B^\alpha_{a,b}(x,y)$ 
may be    implemented as 
$$\IMP(z_{i,a^-},w)\,\IMP(z_{j,b},w)\*    U_\alpha(z_{j,b})\,U_{1/\alpha-1}(w),$$ where $w$ is a new    variable.\end{proof}

To use Lemmas \ref{lem:lsm_easy} and~\ref{lem:bis_easy}, we need to
perform some scaling.  For any $k$-ary weight function in
$F\in\nfuns$, let $m_F = \max \{f(\vecx) \mid \vecx\in \dom^k\}$.  Let
\begin{equation*}
    \Lambda(F) = \begin{cases}
                     \ F/m_F &\text{if $m_F>1$} \\
                     \ F     &\text{otherwise}
                 \end{cases}
\end{equation*}
and let $\Lambda(\nfuns) = \{\Lambda(F) \mid F\in\nfuns\}$.  Note that
$\Lambda(F)$ always takes values in $\bddQ$ and that, since $\nfuns$ is
conservative, $\func_1(\dom,\bddQ)\subseteq \Lambda(\nfuns)$.

We return, once more, to the proof of Theorem~\ref{thm:main}.

\begin{theorem}
\label{thm:LSM-BIS-easy}
    Let $\nfuns$ be a weakly log-supermodular, conservative weighted
    constraint language taking values in $\posQ$.
    \begin{itemize}
    \item For any finite $\nfinfuns\subset\nfuns$, there is a finite
        $\nfinfuns'\subset\LSM$ such that $\nCSP(\nfinfuns)\APred
        \nCSP(\nfinfuns')$.
    \item If $\nfuns$ consists of functions of arity at most two, then
        $\nCSP(\nfinfuns)$ is \nBIS{}-easy for any finite $\nfinfuns
        \subset \nfuns$.
    \end{itemize}
\end{theorem}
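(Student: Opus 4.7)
The plan is to bootstrap from Lemmas~\ref{lem:lsm_easy} and~\ref{lem:bis_easy}, which each require the underlying weighted constraint language to take values in $\bddQ$. Since $\nfuns$ takes values in $\posQ$ and may include weights exceeding~$1$, I would first pass to the rescaled language $\Lambda(\nfuns)$ introduced just before the theorem statement.

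I would verify that $\Lambda(\nfuns)$ fits the hypotheses of Theorem~\ref{thm:wlsm_to_stp}. Three properties need checking: (i)~the values lie in $\bddQ$, which holds by construction; (ii)~every function in $\func_1(\dom,\bddQ)$ appears in $\Lambda(\nfuns)$, which holds because any such unary~$U$ already lies in $\nfuns$ by conservativity and satisfies $m_U\leq 1$, so $\Lambda(U)=U$; and (iii)~$\Lambda(\nfuns)$ is weakly log-supermodular. Applying Theorem~\ref{thm:wlsm_to_stp} then gives that $\toVal{\Lambda(\nfuns)}$ has an STP/MJN multimorphism, whereupon Lemma~\ref{lem:lsm_easy} yields, for any finite $\nfinfuns\subset\nfuns$, a finite $\nfinfuns'\subset\LSM$ with $\nCSP(\Lambda(\nfinfuns))\APred\nCSP(\nfinfuns')$, and Lemma~\ref{lem:bis_easy} yields $\nCSP(\Lambda(\nfinfuns))\APred\nBIS$ in the arity-at-most-two case (since $\Lambda$ preserves arity).

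The main obstacle is verifying property~(iii), since weak log-supermodularity is stipulated over the functional clone $\fclone{\Lambda(\nfuns)}$ rather than $\Lambda(\nfuns)$ itself. The key observation is that each positive rational scalar $1/m_F$ can be pulled through summations and products of atomic formulas, so for any pps-formula~$\widehat\psi$ over $\Lambda(\nfuns)\cup\{\EQ\}$, there is a pps-formula~$\psi$ over $\nfuns\cup\{\EQ\}$ with the same free variables and a positive rational constant~$c$ (depending only on the atomic formulas of $\widehat\psi$ and their multiplicities) such that $F_{\widehat\psi}=c\,F_\psi$. Hence every binary $\widehat G\in\fclone{\Lambda(\nfuns)}$ equals a positive multiple of some $G\in\fclone\nfuns$, and inequality~\eqref{eq:weaklsm} for~$G$ transfers verbatim to~$\widehat G$.

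To finish, I would give a trivial AP-reduction $\nCSP(\nfinfuns)\APred\nCSP(\Lambda(\nfinfuns))$. Given an instance~$I$ with weighted constraints applying $F_{t_1},\dots,F_{t_m}$, form~$I'$ by replacing each $F_{t_i}$ with $\Lambda(F_{t_i})$ while keeping scopes unchanged. Then $Z(I)=Z(I')\cdot\prod_{i=1}^m c_{t_i}$, where $c_{t_i}=m_{F_{t_i}}$ when $m_{F_{t_i}}>1$ and $c_{t_i}=1$ otherwise. Since this multiplicative constant is a rational computable in polynomial time from~$I$, any FPRAS for $\nCSP(\Lambda(\nfinfuns))$ lifts immediately to an FPRAS for $\nCSP(\nfinfuns)$ at the same error tolerance, completing both bullets of the theorem.
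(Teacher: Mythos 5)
Your proposal is correct and follows essentially the same route as the paper: rescale via $\Lambda$, apply Theorem~\ref{thm:wlsm_to_stp} to get an STP/MJN multimorphism of $\toVal{\Lambda(\nfuns)}$, invoke Lemmas~\ref{lem:lsm_easy} and~\ref{lem:bis_easy}, and compose with the trivial reduction $\nCSP(\nfinfuns)\APred\nCSP(\Lambda(\nfinfuns))$. You merely spell out details the paper leaves implicit (weak log-supermodularity of $\Lambda(\nfuns)$ via pulling scalars through pps-formulas, and the explicit rescaling reduction), and these verifications are sound.
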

\begin{proof}
    By Theorem~\ref{thm:wlsm_to_stp}, $\toVal{ 
    \Lambda
    (\nfuns)}$ has an
    STP/MJN multimorphism.  The result follows from Lemmas
    \ref{lem:lsm_easy} and~\ref{lem:bis_easy} and the fact that
    $\nCSP(\nfuns)\APred \nCSP(\Lambda(\nfuns))$.
\end{proof}

Theorem~\ref{thm:main}, our classification of the complexity of
approximating $\nCSP(\nfuns)$, now follows from Theorems
\ref{thm:BIShard-SATequiv}, \ref{thm:tractable}
and~\ref{thm:LSM-BIS-easy}.

\section{Algorithmic aspects}
\label{sec:algorithmic}

Finally, we consider the algorithmic aspects of the classification of
Theorem~\ref{thm:main}.  Intuitively, there is an algorithm that
determines the complexity of \nCSP{} with constraints from a finite
language $\nhfuns$ plus unary weights
because weak log-modularity is essentially equivalent to balance and
weak log-supermodularity is essentially equivalent to the existence of
a STP/MJN multimorphism.  
As we will show below, balance
and the existence of STP/MJN
multimorphisms depend only on certain finite parts of the
weighted constraint language so balance is decidable
by~\cite{NonNegExact} and the existence of STP/MJN multimorphisms can
be determined by brute force, or by using more sophisticated methods
from~\cite{KZ}.

We need to determine whether the infinite language
$\nhfuns\cup\uns{\dom}$ is balanced.  Fortunately, it suffices to
check whether $\nhfuns\cup\uns{\dom}'$ is balanced, where $\uns{\dom}'
= \func_1(\dom,\{1,2\})$, which is finite.  (Note that it is not
enough to test whether $\nhfuns$ is balanced; also, there is nothing
special about 1 and~2: any pair of distinct, positive rationals would
do. In fact, $|\uns{\dom}'|=2^{|\dom|}$ and there are sets of size $|D|$ which would suffice, but
we do not need this here.)

\begin{lemma}
\label{lem:balance-unary}
    Let $\nhfuns$ be a finite, weighted constraint language taking
    values in $\posQ$.  The following are equivalent:
    (1)~$\nhfuns\cup\uns{\dom}'$ is
    balanced; (2)~every finite subset of $\nhfuns\cup\uns{\dom}$ is
    balanced; and (3)~$\nhfuns\cup\uns{\dom}$ is balanced.
\end{lemma}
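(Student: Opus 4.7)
The plan is to prove the chain of equivalences in three steps: $(2) \Leftrightarrow (3)$ and $(3) \Rightarrow (1)$ are essentially formal, while the substantive content is $(1) \Rightarrow (3)$.

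For $(2) \Leftrightarrow (3)$, I would observe that a pps-formula is a finite syntactic object and therefore involves only finitely many functions from the underlying language. Hence $\fclone{\nhfuns \cup \uns{\dom}} = \bigcup_{\mathcal{F}} \fclone{\mathcal{F}}$, where the union ranges over finite subsets $\mathcal{F} \subseteq \nhfuns \cup \uns{\dom}$. Any non-balance witness for the full (infinite) language is therefore already in the clone of some finite subset, giving $(3) \Rightarrow (2)$; conversely $(2) \Rightarrow (3)$ is immediate since finite subsets generate subclones of the full language. The implication $(3) \Rightarrow (1)$ likewise follows from $\uns{\dom}' \subseteq \uns{\dom}$ by monotonicity of the functional clone.

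The substantive direction $(1) \Rightarrow (3)$ I would prove by contrapositive: assume $\nhfuns \cup \uns{\dom}$ is non-balanced and construct a non-balance witness in $\fclone{\nhfuns \cup \uns{\dom}'}$. Take a witness $F \in \fclone{\nhfuns \cup \uns{\dom}}$ with pps-formula using finitely many unary atoms $U_1, \ldots, U_\ell \in \uns{\dom}$. By Lemma~\ref{lem:bool_facts} parts~(\ref{pt:binary}) and~(\ref{pt:br1_2by2}), the failure of block-rank~1 localises to a $2 \times 2$ submatrix with either all four entries nonzero and nonzero determinant, or exactly three nonzero entries. The plan is to replace each $U_i$ by a construction over $\uns{\dom}'$ while preserving this witnessing submatrix.

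The key observation is that a strictly positive unary atom $U_i(z_i)$ contributes to the entries of $F$'s matrix only as a uniform row- or column-scaling (depending on whether $z_i$ is a row-index or column-index variable), and positive rescalings preserve block-rank; hence every strictly positive $U_i$ can be replaced by the constant-$1$ function, i.e.\ removed from the pps-formula, without affecting whether the given submatrix witnesses non-balance. The main obstacle will be handling those $U_i$ with zero values, since $\uns{\dom}'$ contains no zero-valued function. Such a $U_i$ enforces membership of its argument in $\supp(U_i) \subsetneq \dom$, and I would handle this by reducing to the sub-problem on domain $\supp(U_i)$, where $\uns{\dom}'$ restricts naturally to $\func_1(\supp(U_i), \{1,2\})$ and thus plays the role of $\uns{\dom}'$ for the sub-problem. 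The required support restriction can then be absorbed either into atomic constraints already present in $\nhfuns$ or into suitable $\EQ$-based bound-variable constructions, ultimately producing a non-balance witness in $\fclone{\nhfuns \cup \uns{\dom}'}$ whose failing submatrix is the same (or a re-indexing) as that of $F$.
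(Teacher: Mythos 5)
Your treatment of (2)$\Leftrightarrow$(3) and (3)$\Rightarrow$(1) matches the paper and is fine, but your substantive direction has a genuine gap. The key claim --- that a strictly positive unary atom contributes to the matrix of the witness $F$ only as a row- or column-scaling and can therefore be replaced by the constant-$1$ function --- is only plausible when the unary atom is applied to a \emph{free} variable of the pps-formula. Balance is a property of all of $\fclone{\nhfuns\cup\uns{\dom}}$, whose members are given by pps-formulas with bound (summed) variables, and a unary atom $U$ on a bound variable enters every matrix entry inside the sum: if $F(\vecx)=\sum_{y}G(\vecx,y)\,U(y)$, the matrix of $F$ is the weighted combination $\sum_{d\in\dom}U(d)\,M_d$ of the slice matrices $M_d$ of $G$, and whether this combination has block-rank~$1$ genuinely depends on the values $U(d)$; replacing $U$ by $\mathbf{1}$ can destroy (or create) the witnessing $2\times2$ submatrix, so the replacement step is unsound. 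The same issue undermines your handling of unary atoms with zeroes: such a $U$ does not merely enforce membership in its support, it also weights the support elements (and may do so on a bound variable); moreover the crisp indicator of an arbitrary subset of $\dom$ is not in $\uns{\dom}'$ and need not lie in $\fclone{\nhfuns\cup\uns{\dom}'}$, since pps-formulas over $\EQ$ and strictly positive functions produce only very restricted zero patterns, so ``absorbing the support restriction into $\EQ$-based constructions'' is not available in general.

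The paper closes exactly this gap with a global polynomial-identity argument instead of local surgery on a witness. It fixes a finite subset of $\nhfuns\cup\uns{\dom}$ containing unary functions $F_1,\dots,F_r$, treats the values $a_{i,d}=F_i(d)$ as formal indeterminates, and notes that for each $G$ in the clone, each split of its variables and each $2\times2$ submatrix, the block-rank~$1$ condition is the vanishing of an explicit polynomial $p$ in the $a_{i,d}$. When every $a_{i,d}$ is a power of two, each $F_i$ is a product of functions from $\uns{\dom}'=\func_1(\dom,\{1,2\})$, so balance of $\nhfuns\cup\uns{\dom}'$ forces $p=0$ on a product of infinite sets; by Schwartz--Zippel (or Alon's Combinatorial Nullstellensatz) $p$ is identically zero, and hence the condition holds for \emph{all} nonnegative rational unary weights, including weights that take the value zero and weights attached to bound variables --- precisely the cases your pointwise replacement cannot reach. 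Any repair of your approach would need a uniform argument of this kind.
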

\begin{proof}
    (2) and~(3) are equivalent because any pps-formula contains only a
    finite number of atomic formulas.  (2) trivially implies~(1),
    since $\uns{\dom}'$ is finite.  It remains to show that (1)
    implies~(2) so, towards this goal, suppose that
    $\nhfuns\cup\uns{\dom}'$ is balanced.  We must show that every
    finite subset of $\nhfuns \cup\uns{\dom}$ is balanced.  Suppose
    that such a subset contains $r$ functions in $\uns{\dom}\setminus
    \nhfuns$.

    Let $\{F_1, \dots, F_r\}$ be unary functions such that
    $F_i(d) = a_{i,d}$ ($i\in\{1, \dots, r\}$, $d\in\dom$) and let
    $\nfinfuns = \nhfuns \cup \{F_1, \dots, F_r\}$.  We may consider
    the $a_{i,d}$ as formal variables and treat a function
    $G\in\fclone{\nfinfuns}$ with free variables $\vecx$ as a function
    of both $\vecx$ and the $a_{i,d}$.  We will show that, for any
    function $G$ and any interpretation of the $a_{i,d}$ (i.e., any
    instantiation of the function symbols $F_i$ as concrete functions
    $\dom\to\posQ$), the matrices associated with $G$ have
    block-rank~$1$, thus establishing that $\nfinfuns$ is balanced.

    So, consider any $G\in\fclone{\nfinfuns}$ with arity $n\geq 2$ and
    choose any $k$ with $1\leq k<n$.  We will show that the $\dom^k
    \times \dom^{n-k}$ matrix $M_G(\vecx,\vecy)$ has block-rank~$1$ for
    any value of the $a_{i,d}$.  By Lemma~\ref{lem:bool_facts}
    part~(\ref{pt:br1_2by2}), it suffices to show that every $2\times
    2$ submatrix induced by rows $\vecx,\vecx'$ and columns $\vecy,
    \vecy'$ has block-rank~$1$.  By Lemma~\ref{lem:bool_facts}
    part~(\ref{pt:binary}), this happens if, and only if, every such
    submatrix has rank~$1$ or at least two zero entries, which happens
    if, and only if, the multivariate polynomial
    \begin{equation*}
        p = G(\vecx,\vecy') G(\vecx'\!,\vecy) G(\vecx'\!,\vecy')
            \big[G(\vecx,\vecy)G(\vecx'\!,\vecy')
                 - G(\vecx'\!,\vecy)G(\vecx,\vecy')\big]
    \end{equation*}
    is zero for all values of $\vecx,\vecx'\!,\vecy,\vecy'$ and for
    all values of the $a_{i,d}$.  
    (Note that, if the submatrix defined by a pair of rows and columns 
    does not have block-rank~$1$ but 
   has exactly one zero, then only one of the four possible choices for $\vecx,\vecx'\!,\vecy,\vecy'$ will make $p$ non-zero.)    
   
    We now fix $\vecx, \vecx'\!, \vecy,
    \vecy'$ and consider $p$ as a function of just the $a_{i,d}$.
    Our goal is to show that (for every choice of $\vecx, \vecx'\!, \vecy,
    \vecy'$), $p$ is identically~$0$.

    Consider first the case where every $a_{i,d}$ is a power of two.  Here,
    every atomic formula $F_i(z)$ defines the same function as some
    product $U_1(z)\cdots U_\ell(z)$ of atomic formulas from
    $\uns{\dom}'$ so $G$ is equivalent to some function in
    $\fclone{\nhfuns \cup \uns{\dom}'}$.  But $\nhfuns\cup\uns{\dom}'$
    is balanced by assumption, so $p=0$ whenever every $a_{i,d}$ is a
    power of two.  Therefore, $p=0$ over a space that is a product of
    infinite sets.  It follows from the Schwartz--Zippel lemma 
    or from \cite[Theorem 1.2]{Alon} that
    the only polynomial with this property is
    the zero polynomial, so $p$ is the zero polynomial and
    $\nhfuns \cup \{F_1, \dots, F_r\}$ is
    balanced for any set $\{F_1, \dots, F_r\}$ of unary weights.
\end{proof}

\begin{theorem}
\label{thm:decidability}
    There is an algorithm that, given a finite, weighted constraint
    language $\nhfuns$ taking values in $\posQ$, correctly makes one
    of the following deductions, where $\nfuns = \nhfuns \cup
    \uns{\dom}$:
    \begin{enumerate}
    \item $\nCSP(\nfinfuns)$ is in \FP{} for every finite
        $\nfinfuns \subset \nfuns$;\label{deduc:FP}
    \item $\nCSP(\nfinfuns)$ is \LSM{}-easy for every
        finite $\nfinfuns \subset \nfuns$ and \nBIS{}-hard
        for some such~$\nfinfuns$;\label{deduc:LSM}
    \item $\nCSP(\nfinfuns)$ is \nBIS{}-easy for every
        finite $\nfinfuns \subset \nfuns$ and
        \nBIS{}-equivalent for some such~$\nfinfuns$;\label{deduc:BIS}
    \item $\nCSP(\nfinfuns)$ is \nSAT{}-easy for every
        finite $\nfinfuns \subset \nfuns$ and
        \nSAT{}-equivalent for some such~$\nfinfuns$.\label{deduc:SAT}
    \end{enumerate}
    If every function in $\nhfuns$ has arity at most~2, the output is
    not deduction~\ref{deduc:LSM}.
\end{theorem}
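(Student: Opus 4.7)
The plan is to reduce the decidability question to two finite tests suggested by the classification in Theorem~\ref{thm:main}. The first test distinguishes deduction~\ref{deduc:FP} (tractable by weak log-modularity) from the rest, and the second distinguishes deductions~\ref{deduc:LSM} and~\ref{deduc:BIS} (an STP/MJN multimorphism exists) from deduction~\ref{deduc:SAT} (no such multimorphism). The main conceptual obstacle is bridging the infinite language $\nfuns=\nhfuns\cup\uns{\dom}$ to a finite check: this is handled by Lemma~\ref{lem:balance-unary} for balance and by the conservativity of STP/MJN multimorphisms (via Observation~\ref{obs:unarymultimorphism}) for the second test.

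For the first test, I would use Lemma~\ref{lem:balance-unary} to reduce balance of $\nfuns$ to balance of the finite language $\nhfuns\cup\uns{\dom}'$, which is decidable by the effective Cai--Chen--Lu dichotomy~\cite{NonNegExact,NonNegExactconf}. A short argument (applying Lemma~\ref{lem:bool_facts}(\ref{pt:br1_2by2}) and~(\ref{pt:binary}) to the $2\times 2$ principal submatrices of binary $F\in\fclone\nfuns$) gives the converse to Lemma~\ref{lem:logmod_balanced} for conservative languages, so balance is equivalent to weak log-modularity. If the balance test succeeds, I output deduction~\ref{deduc:FP}, justified by Theorem~\ref{thm:tractable}.

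For the second test, assume the first failed (so $\nfuns$ is not weakly log-modular). I would enumerate the finitely many candidate STP/MJN multimorphisms on $\dom$: a symmetric subset $M\subseteq\dom^2$, a conservative pair $\langle\sqcap,\sqcup\rangle$ forming an STP on $M$, and a conservative triple $\langle\mathtt{Mj1},\mathtt{Mj2},\mathtt{Mn3}\rangle$ forming an MJN on $\{(a,b)\in\dom^2\setminus M:a\neq b\}$. Each candidate is checked by brute force against the finite list $\toVal\nhfuns$. By Observation~\ref{obs:unarymultimorphism}, a conservative multimorphism of $\toVal\nhfuns$ is automatically a multimorphism of $\toVal\nfuns$, so this finite check is faithful.

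If some candidate passes, then $\toVal\nfuns$ has an STP/MJN multimorphism, so Lemma~\ref{lem:lsm_easy} gives \LSM{}-easiness of $\nCSP(\nfinfuns)$ for every finite $\nfinfuns\subset\nfuns$ and Theorem~\ref{thm:BIShard-SATequiv} gives \nBIS{}-hardness for some such $\nfinfuns$ (since $\nfuns$ is not weakly log-modular); I output deduction~\ref{deduc:LSM}, or deduction~\ref{deduc:BIS} when $\nhfuns$ has arity at most two (using Lemma~\ref{lem:bis_easy} to upgrade to \nBIS{}-easiness). If no candidate passes, then $\nfuns$ cannot be weakly log-supermodular: otherwise $\Lambda(\nfuns)$ would also be weakly log-supermodular (scaling by a positive constant preserves the defining inequality), and Theorem~\ref{thm:wlsm_to_stp} would supply an STP/MJN multimorphism of $\toVal\Lambda(\nfuns)$ and hence of $\toVal\nhfuns$ (each $\toVal\Lambda(F)$ differs from $\toVal F$ by an additive constant, which leaves multimorphism inequalities unchanged), contradicting the assumption. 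Theorem~\ref{thm:BIShard-SATequiv} then delivers deduction~\ref{deduc:SAT}.
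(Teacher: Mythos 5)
Your proposal is correct and follows essentially the same route as the paper: decide balance of $\nhfuns\cup\uns{\dom}'$ via Lemma~\ref{lem:balance-unary} and the effective dichotomy of \cite{NonNegExact}, then brute-force the existence of an STP/MJN multimorphism of the finite non-unary part, and combine Observation~\ref{obs:unarymultimorphism}, Lemmas~\ref{lem:lsm_easy} and~\ref{lem:bis_easy}, Theorem~\ref{thm:wlsm_to_stp} and Theorem~\ref{thm:BIShard-SATequiv} exactly as the paper does (your extra ``balance implies weak log-modularity'' step is correct but not needed, since balance of every finite subset already gives \FP{} via Lemma~\ref{lem:NonNegExact_dichotomy}). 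The one bookkeeping point to tighten is that $\toValsymb$ and Lemmas~\ref{lem:lsm_easy}/\ref{lem:bis_easy} are only defined/stated for languages with values in $\bddQ$, so the multimorphism test and those lemmas should formally be applied to $\Lambda(\nhfuns)$ and $\Lambda(\nfuns)$ together with $\nCSP(\nfinfuns)\APred\nCSP(\Lambda(\nfinfuns))$ --- which is precisely the additive-shift/scaling observation you already make in your final paragraph.
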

\begin{proof}
    We reduce the problem to determining whether $\nhfuns \cup
    \uns{\dom}'$ is balanced,
    whether $\toVal{\nhfuns}$ has an STP/MJN multimorphism and whether
    $\nhfuns$ contains only functions of arity at most~2.  Balance of
    finite languages is
    decidable~\cite{NonNegExact}.  An STP/MJN multimorphism consists
    of two operations $\dom^2\to\dom$ and three operations
    $\dom^3\to\dom$, which must have certain easily checked properties
    with respect to each of the functions in $\toVal{\nhfuns}$.  Thus,
    we can determine the existence of an STP/MJN multimorphism by
    brute force, checking each possible collection of five operations,
    or by using the methods of Kolmogorov and \Zivny{}~\cite{KZ}.  It
    is clearly decidable whether $\nhfuns$ contains a function of arity
    greater than~2.

    By Lemma~\ref{lem:balance-unary}, if $\nhfuns\cup\uns{\dom}'$ is
    balanced, then so
    is any finite $\nfinfuns \subset \nhfuns \cup \uns{D}$.  Therefore,
    by Lemma~\ref{lem:NonNegExact_dichotomy}, $\nCSP(\nfinfuns)$ can
    be solved exactly in \FP{} so we output deduction~\ref{deduc:FP}.
    From this point, we assume that $\nhfuns \cup\uns{\dom}'$ is not
    balanced.

    Since $\nhfuns\cup\uns{\dom}'$ is not balanced, nor is
    $\nhfuns\cup\uns{\dom}$
    (Lemma~\ref{lem:balance-unary}).  Therefore, $\nhfuns\cup\uns{\dom}$
    is not weakly log-modular (Lemma~\ref{lem:logmod_balanced}) so there
    is a finite $\nfinfuns \subset \nhfuns \cup \uns{\dom}$ such that
    $\nCSP(\nfinfuns)$ is \nBIS{}-hard
    (Theorem~\ref{thm:BIShard-SATequiv}).

    $\toVal{\Lambda(\nhfuns \cup \uns{\dom})}$ has an STP-MJN
    multimorphism if, and only if, $\toVal{\Lambda(\nhfuns)}$ does
    (Observation~\ref{obs:unarymultimorphism}), and
    $\toVal{\Lambda(\nhfuns)}$ is a finite language so we can determine
    whether it has an STP-MJN multimorphism by exhaustive search.  If
    $\toVal{\Lambda(\nhfuns \cup \uns{\dom})}$ has an STP-MJN
    multimorphism, then, for all finite $\nfinfuns\subset\Lambda(\nhfuns
     \cup \uns{\dom})$, $\nCSP(\nfinfuns)$ is \LSM{}-easy
    (Lemma~\ref{lem:lsm_easy}).  Since any function in
    ${\Lambda(\nhfuns\cup\uns{\dom})}$ is a scalar multiple of
    some function in ${\nhfuns\cup\uns{\dom}}$, $\nCSP(\nfinfuns)$
    is also \LSM{}-easy for all finite $\nfinfuns\subset \nhfuns \cup
    \uns{\dom}$.  We output deduction~\ref{deduc:LSM}, unless every
    function in $\nhfuns$ has arity at most~2, in which case
    $\nCSP(\nfinfuns)$ is \nBIS{}-easy for all finite $\nfinfuns
    \subset \nhfuns \cup \uns{\dom}$ (Lemma~\ref{lem:bis_easy}) and we
    output deduction~\ref{deduc:BIS}.

    On the other hand, if $\toVal{\Lambda(\nhfuns \cup \uns{\dom})}$
    has no STP-MJN multimorphism, then $\Lambda(\nhfuns \cup
    \uns{\dom})$ is not weakly log-supermodular
    (Theorem~\ref{thm:wlsm_to_stp}).  Because $\Lambda$ is just a
    rescaling,
    $\nhfuns\cup\uns{\dom}$ is also not weakly log-supermodular.
    Therefore, there is a finite $\nfinfuns\subset \nhfuns \cup
    \uns{\dom}$ such that $\nCSP(\nfinfuns)$ is \nSAT{}-equivalent
    (Theorem~\ref{thm:BIShard-SATequiv} again).  We output
    deduction~\ref{deduc:SAT}.
\end{proof}

\section{Acknowledgement}

We thank the referees for their useful suggestions, which improved the presentation of the paper.
 
\bibliographystyle{plain}

\bibliography{\jobname}
 
\end{document}